\theoremstyle{plain}
\newtheorem{theorem}{Theorem}[section]
\newtheorem{corollary}[theorem]{Corollary}
\newtheorem{proposition}[theorem]{Proposition}
\newtheorem{lemma}[theorem]{Lemma}
\theoremstyle{definition}
\newtheorem{definition}[theorem]{Definition}
\newtheorem{example}[theorem]{Example}
\newtheorem{remark}[theorem]{Remark}
\theoremstyle{remark}
\numberwithin{equation}{section}
\newcommand\RR{\mathbb{R}}
\newcommand\bla{\boldsymbol{\lambda}}
\newcommand\by{\boldsymbol{y}}
\newcommand\byi{\boldsymbol{y_i}}
\newcommand\byj{\boldsymbol{y_j}}
\newcommand\bep{\boldsymbol{\varepsilon}}
\renewcommand\bf{\boldsymbol{f}}
\newcommand\bk{\boldsymbol{k}}
\newcommand\bw{\boldsymbol{w}}
\newcommand\bg{\boldsymbol{g}}
\newcommand\bx{\boldsymbol{x}}
\newcommand\bv{\boldsymbol{v}}
\newcommand\bz{\boldsymbol{z}}
\newcommand\bA{\boldsymbol{A}}
\newcommand\bB{\boldsymbol{B}}
\newcommand\bq{\boldsymbol{q}}
\newcommand\bp{\boldsymbol{p}}
\newcommand\bJ{\boldsymbol{J}}
\newcommand\hbJ{\hat{\boldsymbol{J}}}
\newcommand{\mK}{\mathcal{K}}
\newcommand{\dK}{\mathcal{K}_{\RR\text{-disg}}}
\newcommand{\pK}{\mathcal{K}_{\text{disg}}}
\newcommand{\mJ}{\mathcal{J}_{\RR}}
\newcommand{\eJ}{\mathcal{J}_{\textbf{0}}}
\newcommand{\mD}{\mathcal{D}_{\textbf{0}}}
\newcommand{\mS}{\mathcal{S}}
\newcommand{\hPsi}{\hat{\Psi}}
\newcommand{\hbx}{\hat{\bx}}
\newcommand{\hbk}{\hat{\bk}}
\newcommand{\hbq}{\hat{\bq}}
\newcommand{\hmJ}{\hat{\mJ}}
\newcommand{\defi}{\textbf}
\DeclareMathOperator{\spn}{span}
\begin{document}

\title{ 
The Dimension of the Disguised Toric Locus of a Reaction Network
}

\author[1]{
         Gheorghe Craciun
}
\author[2]{
        Abhishek Deshpande
}
\author[3]{
        Jiaxin Jin
}
\affil[1]{\small Department of Mathematics and Department of Biomolecular Chemistry, University of Wisconsin-Madison}
\affil[2]{Center for Computational Natural Sciences and Bioinformatics, \protect \\
 International Institute of Information Technology Hyderabad}
\affil[3]{\small Department of Mathematics, University of Louisiana at Lafayette}

\date{} 

\maketitle

\begin{abstract}
Under mass-action kinetics, complex-balanced systems emerge from biochemical reaction networks and exhibit stable and predictable dynamics.
For a reaction network $G$, the associated dynamical system is called \emph{disguised toric} if it can yield a complex-balanced realization on a possibly different network $G_1$.
This concept extends the robust properties of toric systems to those that are not inherently toric.
In this work, we study the \emph{disguised toric locus} of a reaction network — i.e., the set of positive rate constants that make the corresponding mass-action system disguised toric. Our primary focus is to compute the exact dimension of this locus.
We subsequently apply our results to Thomas-type and circadian clock models.
\end{abstract}

\begin{NoHyper}
\tableofcontents
\end{NoHyper}

\section{Introduction}

Mathematical models of biochemical interaction networks can generally be described by {\em polynomial dynamical systems}. These dynamical systems are ubiquitous in models of biochemical reaction networks, infectious diseases, and population dynamics~\cite{craciun2022homeostasis,deshpande2014autocatalysis}.
However, analyzing these systems is a challenging problem in general. Classical nonlinear dynamical properties like multistability, oscillations, or chaotic dynamics are difficult to examine~\cite{Ilyashenko2002, yu2018mathematical}.

Studying the dynamical properties of reaction networks is crucial for understanding the behavior of chemical and biological systems. In this paper, we will focus on a class of dynamical systems generated by reaction networks called {\em complex-balanced systems} (also known as {\em toric dynamical systems}~\cite{CraciunDickensteinShiuSturmfels2009} owing to their connection with toric varieties~\cite{dickenstein2020algebraic}). Complex-balanced systems are known to exhibit remarkably robust dynamics, which {\em rules out} multistability, oscillations, and even chaotic dynamics~\cite{horn1972general}. 
More specifically, there exists a strictly convex Lyapunov function, 
which implies that all positive steady states are locally asymptotically stable~\cite{horn1972general, yu2018mathematical}. In addition, a unique positive steady state exists within each affine invariant polyhedron. They are also related to the \emph{Global Attractor Conjecture}~\cite{CraciunDickensteinShiuSturmfels2009} which states that complex-balanced dynamical systems have a globally attracting steady state within each stoichiometric compatibility class. Several special cases of this conjecture have been proved~\cite{anderson2011proof,gopalkrishnan2014geometric, pantea2012persistence, craciun2013persistence, boros2020permanence}, and a proof in full generality has been proposed in~\cite{craciun2015toric}.

An important limitation of the classical theory of complex-balanced systems is that to be applicable for a large set of parameter values (i.e., choices of reaction rate constants) the reaction network under consideration must satisfy two special properties: {\em weak reversibility} and {\em low deficiency} (see \cite{yu2018mathematical} for definitions). Our focus here will be on understanding how one can take advantage of the notion of {\em dynamical equivalence} in order to greatly relax both of these restrictions. 

Dynamical equivalence relies on the fact that two different reaction networks can generate the same dynamics for well-chosen parameter values. This phenomenon has also been called \emph{macro-equivalence}~\cite{horn1972general} or {\em confoundability}~\cite{craciun2008identifiability}. 
Recently, this phenomenon has found applications in the design of efficient algorithms for finding weakly reversible single linkage class and weakly reversible deficiency one realizations~\cite{WR_df_1, WR_DEF_THM}. Moreover, it has also been used to show the existence of infinitely many positive states for weakly reversible and endotactic dynamical systems~\cite{boros2020weakly,kothari2024endotactic}. More recently, it has been used to generate the necessary and sufficient conditions for the existence of realizations using weakly reversible dynamical systems~\cite{kothari2024realizations}.

In this paper, we consider the notion of a disguised toric locus for a given reaction network $G$.  
The \emph{disguised toric locus} is the set of positive reaction rate vectors in $G$ for which the corresponding dynamical system can be realized as a complex-balanced system by a network $G_1$.
In other words, this locus consists of positive reaction rate vectors $\bk$ such that the mass-action system $(G, \bk)$ is dynamically equivalent to a complex-balanced system $(G_1, \bk_1)$.
Additionally, if the rate constants are allowed to take any real values, we refer to the set of reaction rate vectors in $G$ that satisfy this property as the \emph{$\mathbb{R}$-disguised toric locus} of $G$.

The concept of a disguised toric locus was first introduced in \cite{2022disguised}. Since then, several general properties of both the disguised toric locus and the $\mathbb{R}$-disguised toric locus have been established. For example, it was demonstrated in \cite{haque2022disguised} that the disguised toric locus is invariant under invertible affine transformations of the network. 
Furthermore, \cite{disg_1} showed that both loci are path-connected, and \cite{disg_2} provided a lower bound on the dimension of the $\mathbb{R}$-disguised toric locus.

Consider for example the Thomas-type model (E-graph $G$) shown in Figure \ref{fig:thomas_model_intro}.

\begin{figure}[!ht]
\centering
\includegraphics[scale=0.7]{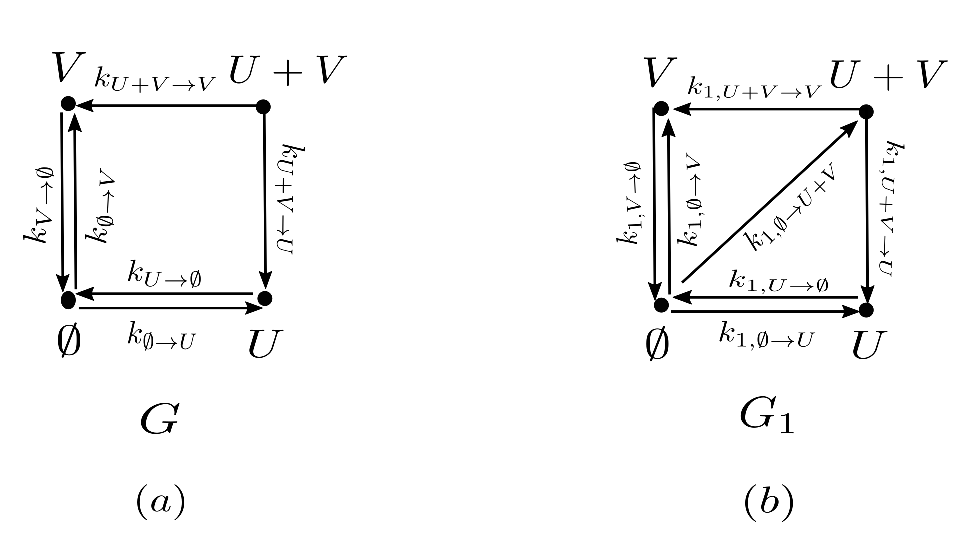}
\caption{ 
(a) The E-graph $G$ represents a Thomas-type model, with all edges labeled by the reaction rate constants $\bk$.
(b) The E-graph $G_1$ is weakly reversible, with all edges labeled by the reaction rate constants $\bk_1$.
The mass-action system $(G_1, \bk_1)$ is complex-balanced.
}
\label{fig:thomas_model_intro}
\end{figure}

Since $G$ is not weakly reversible, the system $(G, \bk)$ is not complex-balanced, so classical complex-balanced theory offers limited insight into the dynamics of $(G, \bk)$.
However, by direct computation, $(G, \bk)$ is dynamically equivalent to the complex-balanced system $(G_1, \bk_1)$, which enables us to deduce its dynamical properties.
Thus, $\bk$ can be viewed as a “good” reaction rate vector for $G$. The disguised toric locus of $G$ consists of such reaction rate vectors $\bk$.

In this paper, we develop a general framework to compute the exact dimensions of both the disguised toric locus and the $\mathbb{R}$-disguised toric locus of a reaction network. Building on \cite{disg_2}, we construct a mapping on the $\mathbb{R}$-disguised toric locus of $G$ and show that this mapping is a homeomorphism, allowing us to determine the dimensions of both the disguised toric locus and the $\mathbb{R}$-disguised toric locus.
When applied to Figure \ref{fig:thomas_model_intro}, the disguised toric locus of $G$ is shown to be full-dimensional, significantly larger than its toric locus, which is empty (see details in Example \ref{ex:thomas}).

\bigskip

\textbf{Structure of the paper.}
In Section~\ref{sec:reaction_networks}, we introduce the basic terminology of reaction networks.
Section~\ref{sec:flux_systems} presents flux systems and analyzes their properties.
In Section~\ref{sec:disguised_locus}, we recall the key concepts of the toric locus, the $\RR$-disguised toric locus, and the disguised toric locus. 
Section~\ref{sec:map} constructs a continuous bijective map $\hPsi$ connecting the $\RR$-disguised toric locus to a specific flux system.
In Section~\ref{sec:continuity}, we first establish key lemmas \ref{lem:key_1} - \ref{lem:key_4} and then use them to prove that $\hPsi$ is a homeomorphism in Theorem \ref{thm:hpsi_homeo}.
Section~\ref{sec:dimension} leverages this homeomorphism to establish precise bounds on the dimension of the disguised toric locus and the $\RR$-disguised toric locus, as shown in Theorem~\ref{thm:dim_kisg_main}.
In Section~\ref{sec:applications}, we apply our results to Thomas-type models and circadian clock models, showing both disguised toric loci are full-dimensional.
Finally, Section~\ref{sec:discussion} summarizes our findings and outlines potential directions for future research.

\bigskip

\textbf{Notation.}
We let $\mathbb{R}_{\geq 0}^n$ and $\mathbb{R}_{>0}^n$ denote the set of vectors in $\mathbb{R}^n$ with non-negative entries and positive entries respectively. For vectors $\bx = (\bx_1, \ldots, \bx_n)^{\intercal}\in \RR^n_{>0}$ and $\by = (\by_1, \ldots, \by_n)^{\intercal} \in \RR^n$, we define:
\begin{equation} \notag
\bx^{\by} = \bx_1^{y_{1}} \ldots \bx_n^{y_{n}}.
\end{equation}
For any two vectors $\bx, \by \in \RR^n$, we define $\langle \bx, \by \rangle = \sum\limits^{n}_{i=1} x_i y_i$.
For E-graphs (see Definition \ref{def:e-graph}), we always let $G, G'$ denote arbitrary E-graphs, and let $G_1$ denote a weakly reversible E-graph.

\section{Reaction networks}
\label{sec:reaction_networks}

We start with the introduction of the concept of a {\em reaction network} as a directed graph in Euclidean space called  {\em E-graph}, and describe some of its properties.

\begin{definition}[\cite{craciun2015toric, craciun2019polynomial,craciun2020endotactic}]
\label{def:e-graph}
\begin{enumerate}[label=(\alph*)]
\item A \textbf{reaction network} $G=(V,E)$ is a directed graph, also called a \textbf{Euclidean embedded graph} (or
\textbf{E-graph}), such that $V \subset \mathbb{R}^n$ is a finite set of \textbf{vertices} and the set $E\subseteq V\times V$ represents a finite set of \textbf{edges}. We assume that there are neither self-loops nor isolated vertices in $G=(V, E)$. 

\item A directed edge $(\by,\by')\in E$ connecting two vertices $\by, \by' \in V$ is denoted by $\by \rightarrow \by' \in E$ and represents a reaction in the network. 
Here $\by$ is called the \textbf{source vertex}, and $\by'$ is called the \textbf{target vertex}.
Further, the difference vector $\by' - \by \in\mathbb{R}^n$ is called the \textbf{reaction vector}. 
\end{enumerate}
\end{definition}

\begin{definition}
Consider an E-graph $G=(V,E)$. 	Then 
\begin{enumerate}[label=(\alph*)]
\item $G$ is \textbf{weakly reversible}, if every reaction in $G$ is part of an oriented cycle. 

\item $G$ is a \textbf{(directed) complete} graph, if $\by\rightarrow \by'\in E$ for every two distinct vertices $\by, \by'\in V$.

\item An E -graph $G' = (V', E')$ is a \textbf{subgraph} of $G$ (denoted by $G' \subseteq G$), if  $V' \subseteq V$ and $E' \subseteq E$. 
In addition, we let $G' \sqsubseteq G$ denote that $G'$ is a weakly reversible subgraph of $G$.

\item We denote the \defi{complete graph on $G$} by $G_c$, which is obtained by connecting every pair of source vertices in $V$. One can check that $G_c$ is weakly reversible and $G \subseteq G_c$.
\end{enumerate}
\end{definition}

\begin{figure}[!ht]
\centering
\includegraphics[scale=0.4]{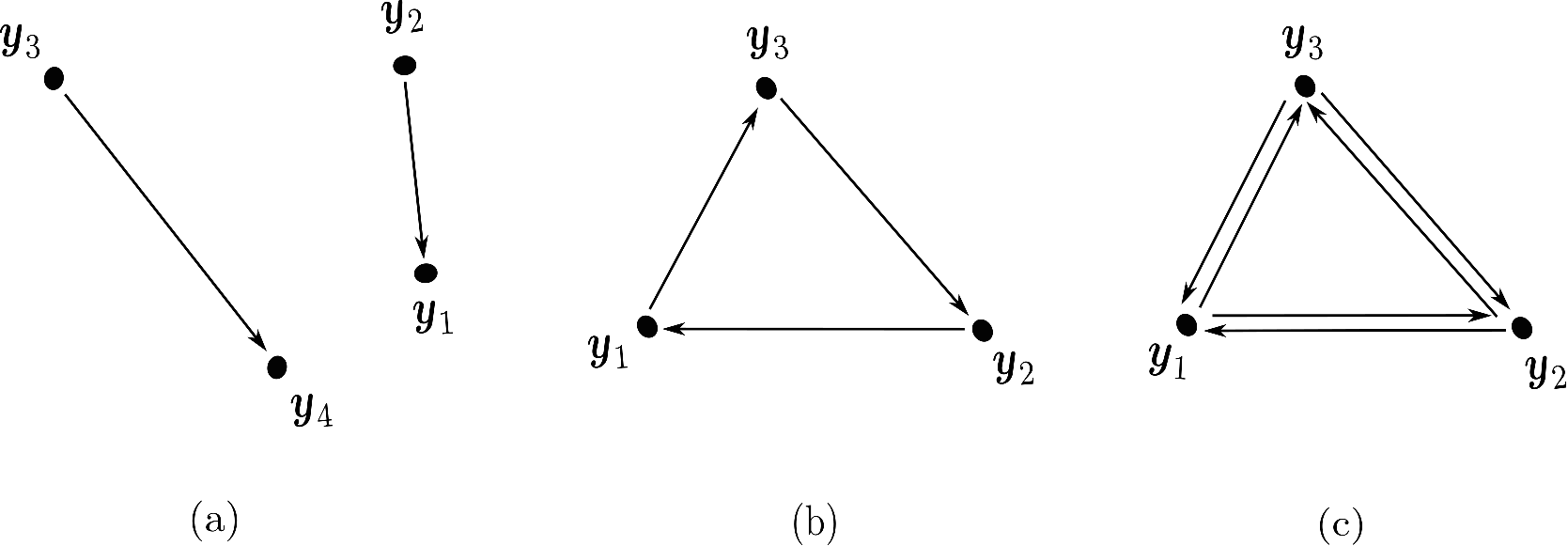}
\caption{\small (a) An E-graph with two reactions. The stoichiometric subspace corresponding to this graph is $\RR^2$. (b) A weakly reversible E-graph. (c) A directed complete E-graph with three vertices. Note that the E-graph in (b) is a weakly reversible subgraph of the E-graph in (c).}
\label{fig:e-graph}
\end{figure} 

\begin{definition}[\cite{adleman2014mathematics,guldberg1864studies,voit2015150,gunawardena2003chemical,yu2018mathematical,feinberg1979lectures}]
Consider an E-graph $G=(V,E)$. Let $k_{\by\to \by'}$ denote the \textbf{reaction rate constant} corresponding to the reaction $\by\to \by'\in E$.
Further, we let ${\bk} :=(k_{\by\to \by'})_{\by\to \by' \in E} \in \mathbb{R}_{>0}^{E}$ denote the \textbf{vector of reaction rate constants} (\textbf{reaction rate vector}).
The \textbf{associated mass-action system} generated by $(G, \bk)$ 
on $\RR^n_{>0}$ is given by
\begin{equation}
\label{def:mas_ds}
\frac{d\bx}{dt} = \displaystyle\sum_{\by \rightarrow \by' \in E}k_{\by\rightarrow\by'}{\bx}^{\by}(\by'-\by).
\end{equation}
We denote the \defi{stoichiometric subspace} of $G$ by $\mathcal{S}_G$, which is
\begin{equation} \notag
\mathcal{S}_G = \spn \{ \by' - \by: \by \rightarrow \by' \in E \}.
\end{equation}
\cite{sontag2001structure} shows that if $V \subset \mathbb{Z}_{\geq 0}^n$, the positive orthant $\mathbb{R}_{>0}^n$ is forward-invariant under system \eqref{def:mas_ds}. 
Any solution to \eqref{def:mas_ds} with initial condition $\bx_0 \in \mathbb{R}_{>0}^n$ and $V \subset \mathbb{Z}_{\geq 0}^n$, is confined to $(\bx_0 + \mathcal{S}_G) \cap \mathbb{R}_{>0}^n$. Thus, the set $(\bx_0 + \mathcal{S}_G) \cap \mathbb{R}_{>0}^n$ is called the \textbf{invariant polyhedron} of $\bx_0$.
\end{definition}

\begin{definition}
Let $(G, \bk)$ be a mass-action system. 

\begin{enumerate}[label=(\alph*)]
\item A point $\bx^* \in \mathbb{R}^n_{>0}$ is called a \defi{positive steady state} of the system if 
\begin{equation}
\label{eq:steady_statez} 
\displaystyle\sum_{\by\rightarrow \by' \in E } k_{\by\rightarrow\by'}{(\bx^*)}^{\by}(\by'-\by)=0.
\end{equation}

\item A point $\bx^* \in \mathbb{R}^n_{>0}$ is called a \defi{complex-balanced steady state} of the system if for every vertex $\by_0 \in V$,
\begin{eqnarray}  \notag
\sum_{\by_0 \rightarrow \by \in E} k_{\by_0 \rightarrow \by} {(\bx^*)}^{\by_0}
= \sum_{\by' \rightarrow \by_0 \in E} k_{\by' \rightarrow \by_0} {(\bx^*)}^{\by'}.
\end{eqnarray}
Further, if the mass-action system $(G, \bk)$ admits a complex-balanced steady state, then it is called a \defi{complex-balanced  (dynamical) system} or \defi{toric dynamical system}. 
\end{enumerate}
\end{definition}

\begin{remark}
\label{rmk:complex_balance_property}

Complex-balanced systems are known to exhibit robust dynamical properties.
As mentioned in the introduction, they are connected to the \emph{Global Attractor Conjecture}, which proposes that complex-balanced systems possess a globally attracting steady state within each stoichiometric compatibility class. Several important special cases of this conjecture and related open problems have been proven. In particular, it has been shown that complex-balanced systems consisting of a single linkage class admit a globally attracting steady state \cite{anderson2011proof}.
Additionally, two- and three-dimensional endotactic networks are known to be permanent \cite{craciun2013persistence}. Strongly endotactic networks have also been proven to be permanent \cite{gopalkrishnan2014geometric}. 
Furthermore, complex-balanced systems that are permanent always admit a globally attracting steady state \cite{yu2018mathematical}.
\end{remark}

\begin{theorem}[\cite{horn1972general}]
\label{thm:cb}

Consider a complex-balanced system $(G, \bk)$. Then 
\begin{enumerate}
\item[(a)] The E-graph $G = (V, E)$ is weakly reversible.

\item[(b)] Every positive steady state is a complex-balanced steady state.
Given any $\bx_0 \in \mathbb{R}_{>0}^n$,
there is exactly one steady state within 
the invariant polyhedron $(\bx_0 + \mathcal{S}_G) \cap \mathbb{R}_{>0}^n$. 
\end{enumerate}
\end{theorem}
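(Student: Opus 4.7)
\textbf{For part (a)}, my plan is a summation argument on the condensation of $G$ into its strong components, which forms a directed acyclic graph. Within any linkage class of $G$ I would pick a strong component $S$ that is terminal in this DAG (no edges leave $S$), which exists because the condensation is finite. Summing the complex-balance equation at $\bx^*$ over all $\by_0\in S$, every edge with both endpoints in $S$ appears on both sides and cancels, leaving only the boundary fluxes. The outgoing flux $\sum_{\by_0\in S,\,\by\notin S,\,\by_0\to\by}k_{\by_0\to\by}(\bx^*)^{\by_0}$ is zero by the choice of $S$, so the incoming flux $\sum_{\by'\notin S,\,\by_0\in S,\,\by'\to\by_0}k_{\by'\to\by_0}(\bx^*)^{\by'}$ must also vanish; since every rate constant and every $(\bx^*)^{\by'}$ is strictly positive, no edges enter $S$ from outside. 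Hence $S$ is isolated from the rest of its linkage class, so $S$ must in fact equal the entire linkage class. Applying this to every linkage class gives that $G$ is weakly reversible.

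\textbf{For part (b)}, I would invoke the Horn-Jackson Lyapunov function. Fix a complex-balanced steady state $\bx^*$ (which exists by hypothesis) and define
\[
L(\bx)=\sum_{i=1}^n\bigl(x_i\log x_i-x_i\log x_i^*-x_i+x_i^*\bigr),
\]
which is strictly convex on $\RR_{>0}^n$ with $\nabla L(\bx)=\bigl(\log(x_i/x_i^*)\bigr)_{i=1}^{n}$. Along trajectories of \eqref{def:mas_ds} I compute
\[
\dot L(\bx)=\sum_{\by\to\by'\in E}k_{\by\to\by'}\bx^{\by}\bigl\langle\by'-\by,\,\log(\bx/\bx^*)\bigr\rangle.
\]
Substituting $\bu=\bx/\bx^*$ and $c_{\by\to\by'}=k_{\by\to\by'}(\bx^*)^{\by}$, then using the complex-balance identity at $\bx^*$ to insert a vanishing correction term at each vertex, I would rewrite the sum so that each edge contribution matches the inequality $a(\log a-\log b)\geq a-b$ for $a,b>0$. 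This yields $\dot L(\bx)\leq 0$ with equality iff $\bx$ is itself complex-balanced. In particular, any positive steady state $\tilde\bx$ of $(G,\bk)$ has $\dot L(\tilde\bx)=0$ trivially, forcing $\tilde\bx$ to be complex-balanced.

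\textbf{For uniqueness within each invariant polyhedron}, I would then characterize the complex-balanced steady state set as $\{\bx\in\RR_{>0}^n:\log\bx-\log\bx^*\in\mSG^{\perp}\}$. A Birch-type argument shows this ``toric'' manifold intersects each affine slice $(\bx_0+\mSG)\cap\RR_{>0}^n$ in exactly one point; equivalently, the critical points of the strictly convex function $L$ restricted to the polyhedron are precisely the complex-balanced steady states there, and strict convexity on the affine slice yields at most one. The main obstacle I anticipate is the careful algebraic manipulation establishing $\dot L\leq 0$ with the tight equality clause: a naive termwise application of $a(\log a-\log b)\geq a-b$ produces indeterminate signs, and only after exploiting complex balance at $\bx^*$ to insert the correct vanishing correction does the sum rearrange into pointwise nonpositive contributions with the sharp characterization of equality.
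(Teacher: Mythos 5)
The paper does not prove this statement at all: Theorem~\ref{thm:cb} is quoted directly from Horn's 1972 work and used as a black box, so there is no internal proof to compare against. Your reconstruction is the standard Horn--Jackson argument and is essentially correct. Part (a) is sound: summing the complex-balance identities over a terminal strong component $S$ cancels the internal edges, the outgoing boundary flux vanishes by terminality, and positivity of $k_{\by'\to\by_0}(\bx^*)^{\by'}$ then kills all incoming edges, so $S$ is the whole (connected) linkage class. Part (b) is also the right mechanism: with $\bu=\bx/\bx^*$ and $c_{\by\to\by'}=k_{\by\to\by'}(\bx^*)^{\by}$ one gets $\dot L=\sum c_{\by\to\by'}\bu^{\by}\bigl(\log \bu^{\by'}-\log \bu^{\by}\bigr)\le\sum c_{\by\to\by'}\bigl(\bu^{\by'}-\bu^{\by}\bigr)=0$, where the final equality (your ``vanishing correction'') is exactly complex balance of $\bc$; equality forces $\bu^{\by}=\bu^{\by'}$ on every edge, hence $\log\bu\in\mathcal{S}_G^{\perp}$, from which complex balance of $\bx$ follows, and a steady state trivially has $\dot L=0$. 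The only place where you are outsourcing genuine work is the \emph{existence} half of the uniqueness claim: strict convexity of $L$ on the affine slice gives at most one critical point, but producing one (equivalently, showing $\{\log\bx-\log\bx^*\in\mathcal{S}_G^{\perp}\}$ meets every polyhedron $(\bx_0+\mathcal{S}_G)\cap\RR^n_{>0}$) requires Birch's theorem or a properness argument for $L$ that rules out minimizers on the boundary; you correctly name Birch's theorem, so this is a citation rather than a gap, but it is the one step that is not self-contained.
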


\begin{theorem}[\cite{johnston2012topics}]
\label{thm:jacobian}

Consider a weakly reversible E-graph $G = (V, E)$ with the stoichiometric subspace $\mS_G$.
Suppose $(G, \bk)$ is a complex-balanced system given by
\begin{equation}
\label{eq:jacobian}
\frac{\mathrm{d} \bx}{\mathrm{d} t} 
= \bf (\bx) = \displaystyle\sum_{\by\rightarrow \by' \in E} k_{\by\rightarrow\by'}{\bx}^{\by}(\by'-\by).
\end{equation}
For any steady state $\bx^* \in \RR^n_{>0}$ of the system \eqref{eq:jacobian}, then
\begin{equation}
\label{eq:jacobian_ker}
\Big( \ker \big( \mathbf{J}_{\bf} |_{\bx = \bx^*} \big) \Big)^{\perp} = \mS_G,
\end{equation}
where $\mathbf{J}_{\bf}$ represents the Jacobian matrix of $\bf (\bx)$.
\end{theorem}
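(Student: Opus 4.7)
The plan is to rewrite the vector field in matrix form and then use the Lyapunov structure of complex-balanced systems to pin down the rank of the Jacobian.

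First I would introduce the standard factorization $\bf(\bx) = Y A_{\bk} \Psi(\bx)$, where $Y \in \RR^{n \times |V|}$ has the vertices of $G$ as its columns, $A_{\bk}$ is the kinetic matrix — a negative weighted graph Laplacian with column sums zero — and $\Psi(\bx) = (\bx^{\by})_{\by \in V}$. Differentiating via the chain rule,
\[
\mathbf{J}_{\bf}(\bx^*) \;=\; Y\, A_{\bk}\, \mathrm{diag}(\Psi(\bx^*))\, Y^{T}\, \mathrm{diag}(1/\bx^*).
\]
Because the columns of $A_{\bk}$ sum to zero, each column of $Y A_{\bk}$ is a linear combination of reaction vectors $\by' - \by$; hence $\mathrm{Im}(\mathbf{J}_{\bf}(\bx^*)) \subseteq \mS_G$. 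Under the standard identification $(\ker A)^{\perp} = \mathrm{Im}(A^{T})$, this already delivers one containment in~\eqref{eq:jacobian_ker}.

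For the reverse containment I need the rank of $\mathbf{J}_{\bf}(\bx^*)$ to be at least $\dim \mS_G$. I would argue this using the Horn--Jackson relative-entropy Lyapunov function $L(\bx) = \sum_i [x_i \log(x_i/x_i^*) - x_i + x_i^*]$; since $\bx^*$ is complex-balanced, $\dot L \leq 0$ along trajectories, with equality exactly at complex-balanced steady states. Taylor-expanding $\dot L$ about $\bx^*$ and using $\nabla L(\bx^*) = 0$ and $\bf(\bx^*) = 0$, the first-order contribution vanishes and the leading-order term is the quadratic form $\bh^{T} \mathrm{diag}(1/\bx^*)\, \mathbf{J}_{\bf}(\bx^*)\, \bh$, which is therefore non-positive. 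Its degeneracy directions coincide with tangent vectors at $\bx^*$ to the set of complex-balanced steady states; by Theorem~\ref{thm:cb}(b) this set is a smooth $(n - \dim \mS_G)$-dimensional manifold transverse to every stoichiometric compatibility class, so its tangent at $\bx^*$ is complementary to $\mS_G$. This pins the rank of $\mathbf{J}_{\bf}(\bx^*)$ at exactly $\dim \mS_G$, forcing $\mathrm{Im}(\mathbf{J}_{\bf}(\bx^*)) = \mS_G$ and hence~\eqref{eq:jacobian_ker}.

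The main obstacle I anticipate is the precise identification of the kernel of the dissipation quadratic form with the tangent space to the CB-steady-state variety: one has to invoke the toric (Birch-type) parametrization $\bx^* \odot e^{\mS_G^{\perp}}$ of that variety and then track carefully how the diagonal conjugation by $\mathrm{diag}(1/\bx^*)$ relates these tangent directions back to $\mS_G$. An alternative that sidesteps the variational argument is to exploit the weighted graph-Laplacian structure of $A_{\bk}\, \mathrm{diag}(\Psi(\bx^*))$ directly: weak reversibility (Theorem~\ref{thm:cb}(a)) together with a Matrix-Tree-type identification yields that the kernel of this matrix is the span of linkage-class indicator vectors, from which the rank of $\mathbf{J}_{\bf}(\bx^*)$ can be read off combinatorially.
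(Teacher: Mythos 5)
The paper itself offers no proof of this statement---it is imported verbatim from \cite{johnston2012topics}---so your argument must stand on its own, and it has a genuine flaw precisely at the step you claim ``already delivers one containment.'' The identity $(\ker A)^{\perp}=\mathrm{Im}(A^{T})$ involves the \emph{transpose}, whereas your factorization controls $\mathrm{Im}(\mathbf{J}_{\bf}(\bx^*))$. Since $\mathbf{J}_{\bf}(\bx^*)=Y A_{\bk}\,\mathrm{diag}(\Psi(\bx^*))\,Y^{T}\mathrm{diag}(1/\bx^*)$ is not symmetric, $\mathrm{Im}(\mathbf{J}_{\bf})$ and $\mathrm{Im}(\mathbf{J}_{\bf}^{T})$ are different subspaces, and $\mathrm{Im}(\mathbf{J}_{\bf}(\bx^*))\subseteq \mS_G$ gives no information about $(\ker \mathbf{J}_{\bf}(\bx^*))^{\perp}$. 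The same conflation recurs at the end: knowing $\operatorname{rank}\mathbf{J}_{\bf}(\bx^*)=\dim\mS_G$ and $\mathrm{Im}(\mathbf{J}_{\bf}(\bx^*))=\mS_G$ pins down $\ker\mathbf{J}_{\bf}(\bx^*)$ only up to dimension; it does not force it to equal $\mS_G^{\perp}$.

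If you push your own computation to completion you get the exact kernel, and it is not $\mS_G^{\perp}$. Writing $D=\mathrm{diag}(\bx^*)$ and $J_{\by\to\by'}=k_{\by\to\by'}(\bx^*)^{\by}$, one has $\mathbf{J}_{\bf}(\bx^*)\bh=\sum_{\by\to\by'\in E}J_{\by\to\by'}\langle \by, D^{-1}\bh\rangle(\by'-\by)$, and complex balancing yields the finite (non-asymptotic) version of your expansion of $\dot L$:
\begin{equation} \notag
\big\langle \mathbf{J}_{\bf}(\bx^*)\bh,\ D^{-1}\bh\big\rangle \;=\; -\tfrac{1}{2}\sum_{\by\to\by'\in E}J_{\by\to\by'}\,\langle \by'-\by,\ D^{-1}\bh\rangle^{2}\ \le\ 0 ,
\end{equation}
from which $\ker\big(\mathbf{J}_{\bf}|_{\bx=\bx^*}\big)=D\,\mS_G^{\perp}$ and hence $\big(\ker(\mathbf{J}_{\bf}|_{\bx=\bx^*})\big)^{\perp}=D^{-1}\mS_G$. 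The network $A\rightleftharpoons B$ with rates $k_1\neq k_2$ already exhibits the discrepancy: there $\mathbf{J}_{\bf}=\begin{pmatrix}-k_1 & k_2\\ k_1 & -k_2\end{pmatrix}$, so $(\ker\mathbf{J}_{\bf})^{\perp}=\spn\{(k_1,-k_2)\}\neq \spn\{(1,-1)\}=\mS_G$. Consequently no proof of \eqref{eq:jacobian_ker} in its literal form can succeed; what is true (and is what Lemma~\ref{lem:key_2} actually requires, since there one only intersects $\ker$ of the modified Jacobian with $\mS_{G_1}^{\perp}$) is that $\ker(\mathbf{J}_{\bf}|_{\bx=\bx^*})=D\,\mS_G^{\perp}$ is a \emph{direct-sum} complement of $\mS_G$, because $D\mS_G^{\perp}\cap\mS_G=\{\mathbf{0}\}$. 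Equivalently, the stated orthogonality holds for $\mathbf{J}_{\bf}(\bx^*)\,D$, i.e.\ for the Jacobian in logarithmic coordinates, which is presumably the intended reading of the citation. Your Lyapunov mechanism is the right engine---and your fallback via the weighted-Laplacian kernel works equally well---but the orthogonality bookkeeping is where the argument, and the statement itself, breaks.
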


\begin{definition}
\label{def:de}
Consider two mass-action systems $(G,\bk)$ and $(G',\bk')$. Then $(G,\bk)$ and $(G',\bk')$ are said to be \defi{dynamically equivalent} if for every vertex\footnote{ Note that when $\by_0 \not\in V$ or $\by_0 \not\in V'$, the corresponding side is considered as an empty sum} $\by_0 \in V \cup V'$,
\begin{eqnarray} \notag
\displaystyle\sum_{\by_0 \rightarrow \by\in E} k_{\by_0 \rightarrow \by} (\by - \by_0) 
= \displaystyle\sum_{\by_0 \rightarrow \by'\in E'} k'_{\by_0 \rightarrow\by'} (\by' - \by_0).
\end{eqnarray}
We let $(G,\bk)\sim (G', \bk')$ denote that two mass-action systems $(G,\bk)$ and $(G',\bk')$ are dynamically equivalent.
\end{definition}

\begin{remark}[\cite{horn1972general,craciun2008identifiability,deshpande2022source}]
\label{rmk:de_ss}
Following Definition \ref{def:de}, two mass-action systems $(G, \bk)$ and $(G', \bk')$ are dynamically equivalent if and only if for all $\bx \in \RR_{>0}^{n}$,
\begin{equation}
\label{eq:eqDE}
\sum_{\by_i \to \by_j \in E} k_{\by_i  \to \by_j} \bx^{\by_i} (\by_j - \by_i) 
= \sum_{\by'_i \to \by'_j \in E'} k'_{\by'_i  \to \by'_j} \bx^{\by'_i} (\by'_j - \by'_i),
\end{equation}
and thus two dynamically equivalent systems share the same set of steady states.
\end{remark}

\begin{definition} 
\label{def:d0}
Consider an E-graph $G=(V, E)$. Let $\bla = (\lambda_{\by \to \by'})_{\by \to \by' \in E} \in \RR^{|E|}$. The set $\mD(G)$ is defined as
\begin{equation} \notag
\mD (G):=
\{\bla  \in \RR^{|E|} \, \Big| \, \sum_{\by_0 \to \by \in E} \lambda_{\by_0  \to \by} (\by - \by_0) = \mathbf{0} \ \text{for every vertex } \by_0 \in V
\}.
\end{equation}
We can check that $\mD (G)$ is a linear subspace of $\RR^E$.
\end{definition} 

\begin{lemma}[\cite{disg_2}]
\label{lem:d0}

Consider two mass-action systems $(G, \bk)$ and $(G, \bk')$. 
Then $\bk' - \bk \in \mD (G)$ if and only if $(G, \bk) \sim (G, \bk')$.    
\end{lemma}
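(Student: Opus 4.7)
The plan is to unpack both sides of the biconditional against their definitions on the common graph $G=(V,E)$ and observe that they coincide term by term. This is a direct computation with no substantive obstacle; the only subtlety is that we are working on the same graph $G$ for both systems, so the edge sets on the two sides of the dynamical-equivalence condition match, and we can combine them into a single sum.

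First I would fix an arbitrary source vertex $\by_0 \in V$ and write out what Definition \ref{def:de} requires for $(G,\bk) \sim (G,\bk')$. Since both systems use the same edge set $E$, that condition reduces to
\begin{equation} \notag
\sum_{\by_0 \to \by \in E} k_{\by_0 \to \by}(\by - \by_0) = \sum_{\by_0 \to \by \in E} k'_{\by_0 \to \by}(\by - \by_0),
\end{equation}
which, by subtracting, is equivalent to
\begin{equation} \notag
\sum_{\by_0 \to \by \in E} (k'_{\by_0 \to \by} - k_{\by_0 \to \by})(\by - \by_0) = \mathbf{0}.
\end{equation}

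Second, I would invoke Definition \ref{def:d0}: setting $\bla := \bk' - \bk \in \RR^{|E|}$, the above equation holding for every $\by_0 \in V$ is precisely the condition $\bla \in \mD(G)$. Both implications of the biconditional therefore follow simultaneously from this one equivalence, once we note that the equations are indexed by vertices in an entirely symmetric way for the two directions.

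The main (and only mild) point to be careful about is verifying that the sum over edges out of $\by_0$ in Definition \ref{def:de} matches the sum used in Definition \ref{def:d0}: in both cases, vertices $\by_0$ that are not sources of any edge contribute the empty sum, which equals $\mathbf{0}$, so the equation is trivially satisfied on such vertices and imposes no constraint. Once this is observed, the lemma follows immediately.
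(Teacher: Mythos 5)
Your proof is correct: unpacking Definition \ref{def:de} with $G' = G$ and subtracting the two sides reduces dynamical equivalence exactly to the defining condition of $\mD(G)$ applied to $\bla = \bk' - \bk$, and your remark about non-source vertices contributing empty sums is the right (and only) point of care. The paper itself cites this lemma from an earlier work without reproducing a proof, and your argument is the standard direct verification one would expect there.
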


\section{Flux systems} \label{sec:flux_systems}

Due to the non-linearity of the dynamical systems, we now introduce linear systems arising from the network structure: the flux systems, and the complex-balanced flux systems, and study their properties.

\begin{definition}
Consider an E-graph $G=(V, E)$. Then 
\begin{enumerate}[label=(\alph*)]
\item Let $J_{\by \to \by'} > 0$ denote the \textbf{flux} corresponding to the edge $\by \to \by'\in E$. 
Further, we let $\bJ = (J_{\by \to \by'})_{\by \to \by' \in E} \in \RR_{>0}^E$ denote the \textbf{flux vector} corresponding to the E-graph $G$. 
The \defi{associated flux system} generated by $(G, \bJ)$ is given by
\begin{equation} \label{eq:flux}
 \frac{\mathrm{d} \bx}{\mathrm{d} t} 
= \sum_{\byi \to \byj \in E} J_{\byi \to \byj} 
(\byj - \byi).
\end{equation}

\item Consider two flux systems $(G,\bJ)$ and $(G', \bJ')$. Then $(G,\bJ)$ and $(G', \bJ')$ are said to be \defi{flux equivalent} if for every vertex\footnote{Note that when $\by_0 \not\in V$ or $\by_0 \not\in V'$, the corresponding side is considered as an empty sum} $\by_0 \in V \cup V'$,
\begin{equation} \notag
\sum_{\by_0 \to \by \in E} J_{\by_0 \to \by} (\by - \by_0) 
= \sum_{\by_0 \to \by' \in E'} J'_{\by_0 \to \by'} (\by' - \by_0).
\end{equation}
We let $(G, \bJ) \sim (G', \bJ')$ denote that two flux systems $(G, \bJ)$ and $(G', \bJ')$ are flux equivalent. 
\end{enumerate}
\end{definition}

\begin{definition}
Let $(G,\bJ)$ be a flux system. A flux vector $\bJ \in \RR_{>0}^E$ is called a \defi{steady flux vector} to $G$ if 
\begin{equation} \notag
\frac{\mathrm{d} \bx}{\mathrm{d} t} 
= \sum_{\byi \to \byj \in E} J_{\byi \to \byj} 
(\byj - \byi) = \mathbf{0}.
\end{equation}
A steady flux vector $\bJ\in \RR^{E}_{>0}$ is called a \defi{complex-balanced flux vector} to $G$ if for every vertex $\by_0 \in V$, 
\begin{eqnarray} \notag
\sum_{ \by_0 \to \by \in E} J_{\by_0 \to \by} 
= \sum_{\by' \to \by_0 \in E} J_{\by' \to \by_0},
\end{eqnarray} 
and then $(G, \bJ)$ is called a \defi{complex-balanced flux system}. 
Further, let $\mathcal{J}(G)$ denote the set of all complex-balanced flux vectors to $G$ as follows:
\begin{equation} \notag
\mathcal{J}(G):=
\{\bJ \in \RR_{>0}^{E} \mid \bJ  \text{ is a complex-balanced flux vector to $G$} \}.
\end{equation}
\end{definition}

\begin{definition} 
\label{def:j0}
Consider an E-graph $G=(V, E)$. Let $\bJ = ({J}_{\byi \to \byj})_{\byi \to \byj \in E} \in \RR^E$.
The set $\eJ (G)$ is defined as
\begin{equation} \label{eq:J_0}
\eJ (G): =
\{{\bJ} \in \mD (G) \, \bigg| \, \sum_{\by \to \by_0 \in E} {J}_{\by \to \by_0} 
= \sum_{\by_0 \to \by' \in E} {J}_{\by_0 \to \by'} \ \text{for every vertex } \by_0 \in V
\}.
\end{equation}
Note that $\eJ(G) \subset \mD (G)$ is a linear subspace of $\RR^E$.
\end{definition} 

\begin{lemma}[\cite{disg_2}]
\label{lem:j0}
Let $(G, \bJ)$ and $(G, \bJ')$ be two flux systems. Then 
\begin{enumerate}
\item[(a)] $(G, \bJ) \sim (G, \bJ')$ if and only if $\bJ' - \bJ \in \mD (G)$.

\item[(b)] If $(G, \bJ)$ and $(G, \bJ')$ are both complex-balanced flux systems, then $(G, \bJ) \sim (G, \bJ')$ if and only if $\bJ' - \bJ \in \eJ(G)$.
\end{enumerate} 
\end{lemma}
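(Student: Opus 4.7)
The plan is to prove both parts by directly unfolding the definitions, since flux systems are linear in $\bJ$ and equivalence on a single graph $G$ compares sums over the \emph{same} set of edges, so the difference $\bJ'-\bJ$ falls out cleanly.

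For part (a), I would fix a vertex $\by_0 \in V$ and write out the flux equivalence condition from Definition 3.1 specialized to $G' = G$:
\begin{equation} \notag
\sum_{\by_0 \to \by \in E} J_{\by_0 \to \by}(\by - \by_0) = \sum_{\by_0 \to \by \in E} J'_{\by_0 \to \by}(\by - \by_0).
\end{equation}
Since both sides range over the same edge set $\{\by_0 \to \by \in E\}$, I can move everything to one side and obtain
\begin{equation} \notag
\sum_{\by_0 \to \by \in E}(J'_{\by_0 \to \by} - J_{\by_0 \to \by})(\by - \by_0) = \mathbf{0}.
\end{equation}
This is precisely the defining condition of $\mD(G)$ applied to the vector $\bJ' - \bJ$. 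The implication runs in both directions since each step is an equivalence, giving part (a).

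For part (b), the ``if'' direction is immediate: $\bJ' - \bJ \in \eJ(G)$ implies $\bJ' - \bJ \in \mD(G)$ by the inclusion $\eJ(G) \subseteq \mD(G)$ noted in Definition 3.4, and then part (a) gives $(G,\bJ) \sim (G,\bJ')$. For the ``only if'' direction, part (a) already yields $\bJ' - \bJ \in \mD(G)$, so I only need to verify the additional balance condition appearing in \eqref{eq:J_0}. I would obtain it by subtracting the complex-balanced relations for $\bJ$ and $\bJ'$ at each vertex $\by_0$: both satisfy
\begin{equation} \notag
\sum_{\by \to \by_0 \in E} J_{\by \to \by_0} = \sum_{\by_0 \to \by' \in E} J_{\by_0 \to \by'}, \qquad \sum_{\by \to \by_0 \in E} J'_{\by \to \by_0} = \sum_{\by_0 \to \by' \in E} J'_{\by_0 \to \by'},
\end{equation}
and subtracting these term-by-term shows that the difference $\bJ' - \bJ$ inherits the same conservation law at every vertex, hence lies in $\eJ(G)$.

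There is no serious obstacle here; the only thing to be careful about is that the definition of flux equivalence is stated for possibly different graphs $G,G'$, and one has to observe that when $G = G'$ the empty-sum convention in the footnote of Definition 3.1 is vacuous and the sums over outgoing edges at each $\by_0$ match up. Once that bookkeeping is in place, both directions of both statements are algebraic rearrangements.
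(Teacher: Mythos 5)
Your proof is correct: both parts follow exactly as you describe, by subtracting the vertex-wise flux-equivalence (resp.\ complex-balance) identities over the common edge set and recognizing the defining conditions of $\mD(G)$ and $\eJ(G)$. The paper itself gives no proof of this lemma (it is quoted from the reference for Lemma \ref{lem:j0}), and your definitional unfolding is the standard argument one would expect there, so there is nothing to add.
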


\begin{proposition}[\cite{craciun2020efficient}]
\label{prop:craciun2020efficient}
Consider two mass-action systems $(G, \bk)$ and $(G', \bk')$. Let $\bx \in \RR_{>0}^n$. 
Define the flux vector $\bJ (\bx) = (J_{\by \to \by'})_{\by \to \by' \in E}$ on $G$, such that for every $\by \to \by' \in E$,
\begin{equation} \notag
J_{\by \to \by'} = k_{\by \to \by'} \bx^{\by}.
\end{equation}
Further, define the flux vector $\bJ' (\bx) = (J'_{\by \to \by'})_{\by \to \by' \in E'}$ on $G'$, such that for every $\by \to \by' \in E$,
\begin{equation} \notag
J'_{\by \to \by'} = k'_{\by \to \by'} \bx^{\by}.
\end{equation} 
Then the following are equivalent:
\begin{enumerate}
\item[(a)] The mass-action systems $(G, \bk)$ and $(G', \bk')$ are dynamically equivalent.

\item[(b)] The flux systems $(G, \bJ(\bx))$ and $(G', \bJ')$ are flux equivalent for all $\bx \in \RR_{>0}^n$.

\item[(c)] The flux systems $(G, \bJ(\bx))$ and $(G', \bJ'(\bx))$ are flux equivalent for some $\bx \in \RR_{>0}^n$
\end{enumerate}
\end{proposition}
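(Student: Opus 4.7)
The plan is to prove the cyclic implications $(a) \Rightarrow (b) \Rightarrow (c) \Rightarrow (a)$ via direct substitution, exploiting the fact that both Definition~\ref{def:de} (dynamical equivalence) and the definition of flux equivalence are formulated as identities indexed by a common source vertex $\by_0 \in V \cup V'$, with the flux on an edge $\by_0 \to \by$ differing from the rate constant $k_{\by_0 \to \by}$ precisely by the scalar factor $\bx^{\by_0}$.

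For $(a) \Rightarrow (b)$, I would fix an arbitrary $\bx \in \RR_{>0}^n$ and a vertex $\by_0 \in V \cup V'$. By Definition~\ref{def:de} the vertex-wise identity
\begin{equation*}
\sum_{\by_0 \to \by \in E} k_{\by_0 \to \by} (\by - \by_0)
= \sum_{\by_0 \to \by' \in E'} k'_{\by_0 \to \by'} (\by' - \by_0)
\end{equation*}
holds. Multiplying both sides by the positive scalar $\bx^{\by_0}$ and absorbing it into each summand using the definitions $J_{\by_0 \to \by}(\bx) = k_{\by_0 \to \by}\bx^{\by_0}$ and $J'_{\by_0 \to \by'}(\bx) = k'_{\by_0 \to \by'}\bx^{\by_0}$, I obtain precisely the flux equivalence identity at $\by_0$. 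Since $\by_0$ and $\bx$ were arbitrary, this gives $(G, \bJ(\bx)) \sim (G', \bJ'(\bx))$ for every $\bx \in \RR^n_{>0}$.

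The implication $(b) \Rightarrow (c)$ is immediate by specializing from "for all $\bx$" to "for some $\bx$." For $(c) \Rightarrow (a)$, I would run the previous computation in reverse. Given a witness $\bx \in \RR^n_{>0}$ for which $(G, \bJ(\bx)) \sim (G', \bJ'(\bx))$, the flux equivalence condition at each vertex $\by_0 \in V \cup V'$ reads
\begin{equation*}
\bx^{\by_0} \sum_{\by_0 \to \by \in E} k_{\by_0 \to \by} (\by - \by_0)
= \bx^{\by_0} \sum_{\by_0 \to \by' \in E'} k'_{\by_0 \to \by'} (\by' - \by_0).
\end{equation*}
Since $\bx^{\by_0} > 0$, dividing both sides recovers the per-vertex dynamical equivalence identity, yielding $(G, \bk) \sim (G', \bk')$.

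There is no genuine obstacle here; the argument reduces to bookkeeping. The only point requiring care is to verify that in both definitions the summation is taken over edges emanating from a \emph{common} source vertex $\by_0$, so that the common scalar $\bx^{\by_0}$ can be factored out (or multiplied in) simultaneously on both sides of the identity. Once this parallelism is noted, the three equivalences follow by scalar multiplication and division by $\bx^{\by_0}$.
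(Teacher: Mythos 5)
Your proof is correct. The paper states this proposition as a cited result from \cite{craciun2020efficient} and gives no proof of its own; your direct verification --- observing that the flux-equivalence identity at each source vertex $\by_0$ is exactly the dynamical-equivalence identity at $\by_0$ scaled by the positive factor $\bx^{\by_0}$, so the two are interchangeable for any fixed $\bx$ --- is the standard argument and handles all three equivalences correctly.
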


\section{Toric locus, disguised toric locus and \texorpdfstring{$\RR$}{R}-disguised toric locus} 
\label{sec:disguised_locus}

In this section, we introduce the key concepts in this paper: the Toric locus, the Disguised toric locus, and the $\RR$-disguised toric locus. 

\begin{definition}[\cite{disg_2}]
\label{def:mas_realizable}
Let $G=(V, E)$ be an E-graph. Consider a dynamical system 
\begin{equation} \label{eq:realization_ode}
\frac{\mathrm{d} \bx}{\mathrm{d} t} 
= \bf (\bx).
\end{equation}
It is said to be \defi{$\RR$-realizable} (or has a \defi{$\RR$-realization}) on $G$, if there exists some $\bk \in \mathbb{R}^{E}$ such that
\begin{equation} \label{eq:realization}
\bf (\bx) =
\sum_{\by_i \rightarrow \by_j \in E}k_{\by_i \rightarrow \by_j} \bx^{\by_i}(\by_j - \by_i).
\end{equation}
Further, if $\bk \in \mathbb{R}^{E}_{>0}$ in \eqref{eq:realization}, the system \eqref{eq:realization_ode} is said to be \defi{realizable} (or has a \defi{realization}) on $G$.
\end{definition}

\begin{definition}
Consider an E-graph $G=(V, E)$.
\begin{enumerate}
\item[(a)] Define the \defi{toric locus} of $G$ as
\begin{equation} \notag
\mK (G) := \{ \bk \in \mathbb{R}_{>0}^{E} \ \big| \ \text{the mass-action system generated by } (G, \bk) \ \text{is toric} \}.
\end{equation}

\item[(b)] Consider a dynamical system 
\begin{equation} \label{eq:def_cb_realization}
 \frac{\mathrm{d} \bx}{\mathrm{d} t} 
= \bf (\bx).
\end{equation}
It is said to be \defi{disguised toric} on $G$ if it is realizable on $G$ for some $\bk \in \mK (G)$. Further, we say the system 
\eqref{eq:def_cb_realization} has a \defi{complex-balanced realization} on $G$.
\end{enumerate}
\end{definition}

\begin{definition}
\label{def:de_realizable}
Consider two E-graphs $G =(V,E)$ and $G' =(V', E')$. 
\begin{enumerate}
\item[(a)] Define the set $\mK_{\RR}(G', G)$ as 
\begin{equation} \notag
\mK_{\RR}(G', G) := \{ \bk' \in \mK (G') \ \big| \ \text{the mass-action system } (G', \bk' ) \ \text{is $\RR$-realizable on } G \}.
\end{equation}

\item[(b)] Define the set $\dK(G, G')$ as
\begin{equation} \notag
\dK(G, G') := \{ \bk \in \mathbb{R}^{E} \ \big| \ \text{the dynamical system} \ (G, \bk) \ \text{is disguised toric on } G' \}.
\end{equation} 
Note that $\bk$ may have negative or zero components.

\item[(c)] Define the \defi{$\RR$-disguised toric locus} of $G$ as
\begin{equation} \notag
\dK(G) := \displaystyle\bigcup_{G' \sqsubseteq G_{c}} \ \dK(G, G').
\end{equation}
Note that in the above definition of $\RR$-disguised toric locus of $G$, we take a union over only those E-graphs which are weakly reversible subgraphs of $G_c$. This follows from a result in~\cite{craciun2020efficient} which asserts that if a dynamical system generated by $G$ has a complex-balanced realization using some graph $G_1$, then it also has a complex-balanced realization using $G'\sqsubseteq G_{c}$.

\item[(d)]
Define the set $\pK (G, G')$ as
\begin{equation} \notag
\pK (G, G') := \dK(G, G') \cap \mathbb{R}^{E}_{>0}.
\end{equation} 
Further, define the \defi{disguised toric locus} of $G$ as
\begin{equation} \notag
\pK (G) := \displaystyle\bigcup_{G' \sqsubseteq G_{c}} \ \pK(G, G').
\end{equation}
Similar to the $\RR$-disguised toric locus, it is sufficient for us to include those E-graphs which are weakly reversible subgraphs of $G_c$~\cite{craciun2020efficient}.

\end{enumerate}

\end{definition}

\begin{lemma}[\cite{disg_2}]
\label{lem:semi_algebaic}
Let $G = (V, E)$ be an E-graph. \begin{enumerate}
\item[(a)] Suppose that $G_1 = (V_1, E_1)$ is a weakly reversible E-graph, then $\dK(G,G_1)$ and $\pK(G,G_1)$ are semialgebraic sets.

\item[(b)] Both $\dK(G)$ and $\pK(G)$ are semialgebraic sets.
\end{enumerate}
\end{lemma}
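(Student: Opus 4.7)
The plan is to express each of these sets as a first-order formula in the language of ordered fields with polynomial (in)equalities, and then invoke the Tarski--Seidenberg theorem, which guarantees that projections of semialgebraic sets remain semialgebraic, so any set definable by such a formula is semialgebraic. The only semialgebraic data we will need are (i) dynamical equivalence, which by Definition \ref{def:de} is a system of \emph{linear} equations in $(\bk,\bk_1)$ obtained by collecting coefficients of $\bx^{\by_0}$ for each vertex $\by_0\in V\cup V_1$, and (ii) the complex-balance condition, which for a fixed steady state $\bx^*\in \RR^n_{>0}$ is a system of \emph{polynomial} equations in $(\bk_1,\bx^*)$ obtained from the node-balance equalities $\sum_{\by_0\to\by\in E_1} k_{1,\by_0\to\by}(\bx^*)^{\by_0} = \sum_{\by'\to\by_0\in E_1} k_{1,\by'\to\by_0}(\bx^*)^{\by'}$.

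For part (a), I would write
\begin{equation}\notag
\dK(G,G_1) = \Big\{\bk\in\RR^E \;\Big|\; \exists\,\bk_1\in\RR^{E_1}_{>0},\ \exists\,\bx^*\in\RR^n_{>0},\ \Phi_{DE}(\bk,\bk_1)\wedge\Phi_{CB}(\bk_1,\bx^*)\Big\},
\end{equation}
where $\Phi_{DE}$ is the conjunction of the linear equations from Definition \ref{def:de} and $\Phi_{CB}$ is the conjunction of the polynomial complex-balance equations together with strict positivity of the coordinates of $\bk_1$ and $\bx^*$. The set cut out of $\RR^E\times\RR^{E_1}\times\RR^n$ by $\Phi_{DE}\wedge\Phi_{CB}$ together with the positivity conditions is clearly semialgebraic, and $\dK(G,G_1)$ is its projection onto the first factor, hence semialgebraic by Tarski--Seidenberg. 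Then $\pK(G,G_1)=\dK(G,G_1)\cap \RR^E_{>0}$ is an intersection of two semialgebraic sets, and so is semialgebraic as well.

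For part (b), observe that $G_c$ is a finite directed graph, so the collection $\{G'\sqsubseteq G_c\}$ of weakly reversible subgraphs of $G_c$ is finite. Consequently,
\begin{equation}\notag
\dK(G) = \bigcup_{G'\sqsubseteq G_c} \dK(G,G')\qquad\text{and}\qquad \pK(G) = \bigcup_{G'\sqsubseteq G_c} \pK(G,G')
\end{equation}
are \emph{finite} unions of sets shown to be semialgebraic in part (a), and finite unions of semialgebraic sets are semialgebraic.

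The step that requires the most care is the encoding of the complex-balance condition: a priori $\mK(G_1)$ is described through the existence of a positive steady state, which introduces the auxiliary variable $\bx^*$ that must then be eliminated. The main conceptual point is that this existential quantifier is harmless precisely because Tarski--Seidenberg applies over $\RR$; without it, one would need an explicit polynomial characterization of $\mK(G_1)$ (for example, via the binomial toric ideal arising from the matrix-tree theorem as in \cite{CraciunDickensteinShiuSturmfels2009}), which is a stronger statement than what is needed here.
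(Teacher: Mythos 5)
Your proof is correct, but it takes a different route from the paper for part (a): the paper simply cites Lemma 3.6 of \cite{disg_2} for the semialgebraicity of $\dK(G,G_1)$ and then only performs the bookkeeping (intersection with the positive orthant for $\pK(G,G_1)$, finite unions over $G'\sqsubseteq G_c$ for part (b)), whereas you reconstruct the substance of that cited lemma from scratch. Your encoding of $\dK(G,G_1)$ as the projection of the set cut out by the linear dynamical-equivalence equations in $(\bk,\bk_1)$, the polynomial complex-balance equations in $(\bk_1,\bx^*)$, and the positivity conditions, followed by Tarski--Seidenberg, is exactly the standard argument and matches Definition \ref{def:de_realizable}; the union and intersection steps then coincide with the paper's. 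What your version buys is self-containedness and an explicit first-order description that makes clear where the existential quantifiers (over $\bk_1$ and the steady state $\bx^*$) enter; what it costs is length, plus one hypothesis you should make explicit: for $\Phi_{CB}$ to be a conjunction of \emph{polynomial} equations you need the monomials $(\bx^*)^{\by}$ to be genuine polynomials, i.e.\ the vertex coordinates must lie in $\mathbb{Z}_{\geq 0}^n$ (the standard mass-action setting, implicitly assumed in \cite{disg_2} as well); for arbitrary real exponents the set would only be definable in a larger o-minimal structure, not semialgebraic. With that assumption stated, your argument is complete.
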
 

\begin{proof}

For part $(a)$, following Lemma 3.6 in \cite{disg_2}, we obtain that $\dK(G, G_1)$ is a semialgebraic set. The positive orthant is also a semialgebraic set since it can be defined by polynomial inequalities on all components. Since finite intersections of semialgebraic sets are semialgebraic sets, together with Definition \ref{def:de_realizable}, we conclude that $\pK(G, G_1)$ is a semialgebraic set.

\smallskip

For part $(b)$, since finite unions of semialgebraic sets are semialgebraic sets~\cite{coste2000introduction}, together with Definition \ref{def:de_realizable} and part $(a)$, we conclude that $\dK(G)$ and $\pK(G)$ are semialgebraic sets.
\end{proof}

\begin{remark}[\cite{lee2010introduction}]
\label{rmk:semi_algebaic}
From Lemma \ref{lem:semi_algebaic} and \cite{lee2010introduction}, on a dense open subset of any semialgebraic set $\dK(G, G_1)$ or $\pK(G, G_1)$, it is locally a \textbf{submanifold}. 
The dimension of $\dK(G, G_1)$ or $\pK(G, G_1)$ can be defined to be the largest dimension at points at which it is a submanifold.
\end{remark}

\begin{remark}
\label{rmk:mJ_dK}
Let $G_1 = (V_1, E_1)$ be a weakly reversible E-graph and let $G = (V, E)$ be an E-graph. From Definition \ref{def:de_realizable}, it follows that $\dK (G, G_1)$ is empty if and only if $\mK_{\RR} (G_1, G)$ is empty. 
\end{remark}

Analogous to the $\RR$-disguised toric locus, we also introduce the $\RR$-realizable complex-balanced flux system, which plays a crucial role in the rest of the paper.

\begin{definition}
\label{def:flux_realizable}
Consider a flux system $(G', \bJ')$. It is said to be \defi{$\RR$-realizable} on $G$ if there exists some $\bJ \in \mathbb{R}^{E}$, such that for every vertex\footnote{Note that when $\by_0 \not\in V$ or $\by_0 \not\in V'$, the corresponding side is considered as an empty sum} $\by_0 \in V \cup V'$,
\begin{equation} \notag
\sum_{\by_0 \to \by \in E} J_{\by_0 \to \by} 
(\by - \by_0) 
= \sum_{\by_0 \to \by' \in E'} J'_{\by_0 \to \by'} 
(\by' - \by_0).
\end{equation}
Further, define the set $\mJ (G', G)$ as
\begin{equation} \notag
\mJ (G', G) := \{ \bJ' \in \mathcal{J} (G') \ \big| \ \text{the flux system } (G', \bJ') \ \text{is $\RR$-realizable on } G \}.
\end{equation}
Proposition \ref{prop:craciun2020efficient} implies that $\dK (G, G')$ is empty if and only if $\mJ(G', G)$ is empty.
\end{definition} 

\begin{lemma}[{\cite[Lemma 2.33]{disg_2}}]
\label{lem:j_g1_g_cone}
Consider a weakly reversible E-graph $G_1 = (V_1, E_1)$ and let $G = (V, E)$ be an E-graph. Then we have the following:
\begin{enumerate}
\item[(a)] There exists a vectors $\{ \bv_1, \bv_2, \ldots, \bv_k \} \subset \RR^{|E_1|}$, such that \begin{equation} \label{j_g1_g_generator}
\mJ (G_1, G) = \{ a_1 \bv_1 + \cdots a_k \bv_k \ | \ a_i \in \RR_{>0}, \bv_i \in \RR^{|E_1|} \}.
\end{equation} 

\item[(b)] $\dim (\mJ (G_1, G)) = \dim ( \spn \{ \bv_1, \bv_2, \ldots, \bv_k \} )$.

\item[(c)] If $\mJ (G_1, G) \neq \emptyset$, then
\[
\eJ(G_1) \subseteq \spn \{ \bv_1, \bv_2, \ldots, \bv_k \}.
\]
\end{enumerate}
\end{lemma}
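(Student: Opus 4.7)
The plan is to exhibit $\mJ(G_1, G)$ as the intersection of a linear subspace $W \subset \RR^{|E_1|}$ with the open positive orthant $\RR^{|E_1|}_{>0}$, and then deduce (a)--(c) from standard polyhedral convex analysis. First I would identify $W$ explicitly. Membership in $\mathcal{J}(G_1)$ imposes the homogeneous complex-balance equations at each vertex of $V_1$, which cut out a linear subspace. The $\RR$-realizability condition of Definition \ref{def:flux_realizable} asks that there exist $\bJ \in \RR^{|E|}$ whose vertex-flux vector matches that of $\bJ'$; equivalently, the image of $\bJ'$ under the linear "vertex-flux" map on $\RR^{|E_1|}$ must lie in the image of the analogous map on $\RR^{|E|}$, which is also a linear condition on $\bJ'$. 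Intersecting these two linear conditions yields a subspace $W$, and by construction $\mJ(G_1, G) = W \cap \RR^{|E_1|}_{>0}$.

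For part (a), consider the closed convex cone $C = W \cap \RR^{|E_1|}_{\geq 0}$. Being cut out by linear equalities together with the inequalities $J \geq 0$, it is polyhedral, hence by the Minkowski--Weyl theorem finitely generated by its extreme rays $\bv_1, \ldots, \bv_k$. Under the hypothesis $\mJ(G_1, G) \neq \emptyset$ (the interesting case; otherwise the statement is vacuous with $k = 0$), no coordinate can vanish identically on $C$, so the relative interior of $C$ coincides with $W \cap \RR^{|E_1|}_{>0}$. A standard fact about polyhedral cones then identifies this relative interior with the set of strictly positive combinations $\{a_1\bv_1 + \cdots + a_k\bv_k : a_i > 0\}$, which gives (a). Part (b) is immediate: $\mJ(G_1, G)$ is an open subset of its affine hull $\spn\{\bv_1, \ldots, \bv_k\}$, so regarded as a submanifold in the sense of Remark \ref{rmk:semi_algebaic} it has dimension $\dim\spn\{\bv_1, \ldots, \bv_k\}$.

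For part (c), fix any $\bJ_0 \in \mJ(G_1, G)$ and any $\bJ' \in \eJ(G_1)$. By Lemma \ref{lem:j0}, the vector $\bJ_0 + t\bJ'$ is flux-equivalent to $\bJ_0$ on $G_1$ and is again complex-balanced; in particular, if $\bJ_0$ is $\RR$-realizable on $G$ via some $\bJ \in \RR^{|E|}$, then so is $\bJ_0 + t\bJ'$ via the same $\bJ$. For $|t|$ small, positivity is preserved, so $\bJ_0 + t\bJ' \in \mJ(G_1, G) \subseteq \spn\{\bv_1, \ldots, \bv_k\}$; since $\bJ_0$ already lies in this span, linearity forces $\bJ' \in \spn\{\bv_1, \ldots, \bv_k\}$. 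The main obstacle is paragraph two: correctly recognizing that the existentially stated $\RR$-realizability condition reduces to membership in a linear subspace, and then invoking Minkowski--Weyl together with the relative-interior description. Once the polyhedral picture is in place, (b) and (c) follow in a largely formal manner.
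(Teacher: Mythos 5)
The paper does not actually prove this lemma---it is imported verbatim from \cite[Lemma 2.33]{disg_2}---so there is no in-paper argument to compare against. Judged on its own, your proof is correct and is the natural one. The key observation, that $\mJ(G_1,G)=W\cap\RR^{|E_1|}_{>0}$ for a linear subspace $W$, is right: the complex-balance conditions on $G_1$ are homogeneous linear equations, and the existential $\RR$-realizability condition is membership in $B^{-1}(\operatorname{im}A)$, where $A$ and $B$ are the linear net-vertex-flux maps on $\RR^{|E|}$ and $\RR^{|E_1|}$, hence again a subspace. From there, $C=W\cap\RR^{|E_1|}_{\geq 0}$ is a pointed polyhedral cone (pointed because it sits in the nonnegative orthant), Minkowski--Weyl gives finitely many generators, and the identification of $W\cap\RR^{|E_1|}_{>0}$ with $\operatorname{relint}(C)=\{\sum a_i\bv_i : a_i>0\}$ is the standard relative-interior description of a finitely generated cone. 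You correctly flag that this identification needs $\mJ(G_1,G)\neq\emptyset$ (otherwise $C$ may still contain nonzero boundary vectors while the positive part is empty), which matches the nonemptiness hypothesis appearing in part (c). Your argument for (c) is also sound: a vector of $\eJ(G_1)$ lies in $\mD(G_1)$, so it perturbs neither the net vertex fluxes (hence realizability via the same $\bJ\in\RR^{|E|}$) nor the homogeneous balance equations, and positivity survives small perturbations, so the difference quotient lands in $\spn\{\bv_1,\ldots,\bv_k\}$. No gaps.
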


\section{The map \texorpdfstring{$\hPsi$}{hPsi}}
\label{sec:map}

The goal of this section is to study the properties of a map $\hat{\Psi}$ (see Definition \ref{def:hpsi}) that relates the sets $\dK(G, G_1)$ and $\hat{\mJ} (G_1, G)$ (see Equation \eqref{def:hat_j_g1_g}). In particular, we show the map $\hat{\Psi}$ is bijective and continuous.

\paragraph{Notation.}
We introduce the following notation that will be used for the entire section. Let $G = (V, E)$ be an E-graph. Let $b$ denote the dimension of the linear subspace $\mD(G)$, and denote an orthonormal basis of $\mD(G)$ by
\[
\{\bB_1, \bB_2, \ldots, \bB_b\}.
\]
Further, we consider $G_1 = (V_1, E_1)$ to be a weakly reversible E-graph.
Let $a$ denote the dimension of the linear subspace $\eJ(G_1)$, and denote an orthonormal basis of $\eJ(G_1)$ by 
\[
\{\bA_1, \bA_2, \ldots, \bA_a \}.
\]
\qed

\medskip

Recall the set $\mJ (G_1,G)$. Now we define the set $\hat{\mJ} (G_1,G) \subset \RR^{|E_1|}$ as 
\begin{equation}
\label{def:hat_j_g1_g}
\hat{\mJ} (G_1,G) = \{ \bJ + \sum\limits^a_{i=1} w_i \bA_i \ | \ \bJ \in \mJ (G_1,G), \text{ and } w_i \in \RR \text{ for } 1 \leq i \leq a \}.
\end{equation}
Further, we define the set $\hat{\mathcal{J}} (G_1) \subset \RR^{|E_1|}$ as 
\begin{equation} 
\label{def:hat_j_g1}
\hat{\mathcal{J}} (G_1) = \{\bJ \in \RR^{E} \mid \sum_{\by \to \by_0 \in E} J_{\by \to \by_0} 
= \sum_{\by_0 \to \by' \in E} J_{\by_0 \to \by'}  \text{ for every vertex $\by_0 \in V_1$}\}.
\end{equation}

\begin{remark}
\label{rmk:hat_j_g1_g}
Following~\eqref{def:hat_j_g1_g}, it is clear that $\mJ (G_1,G) \subset \hat{\mJ} (G_1,G)$.
Further, from $\{\bA_i \}^{a}_{i=1} \in \eJ(G)$ and Lemma \ref{lem:j0}, we conclude that 
\[\hat{\mJ} (G_1,G) \cap \RR^{|E_1|}_{>0} = \mJ (G_1,G).
\]
Similarly, we have $\hat{\mathcal{J}} (G_1) \cap \RR^{|E_1|}_{>0} = \mathcal{J} (G_1)$.
\end{remark}

\begin{remark}
Note that $\hat{\mathcal{J}} (G_1)$ is a linear subspace of $\RR^{|E_1|}$, while the sets $\hat{\mJ} (G_1,G)$, $\mJ (G_1,G)$ and $\mathcal{J} (G_1)$ are not linear subspaces.
\end{remark}

\begin{definition} \label{def:hpsi}
Given a weakly reversible E-graph $G_1 = (V_1, E_1)$ with its stoichiometric subspace $\mS_{G_1}$.
Consider an E-graph $G = (V, E)$ and $\bx_0\in\mathbb{R}^n_{>0}$, define the map 
\begin{equation} \label{eq:hpsi}
\hPsi: \hat{\mJ} (G_1,G) \times [(\bx_0 + \mS_{G_1} )\cap\mathbb{R}^n_{>0}] \times \RR^b \rightarrow \dK(G,G_1) \times \RR^a,
\end{equation}
such that for $(\hat{\bJ}, \bx, \bp) \in \hat{\mJ} (G_1,G) \times [(\bx_0 + \mS_{G_1} )\cap\mathbb{R}^n_{>0}] \times \mathbb{R}^b$, 
\begin{equation} \notag
\hat{\Psi} (\hat{\bJ},\bx, \bp) 
: = (\bk, \bq),
\end{equation}
where
\begin{equation} \label{def:hpsi_k}
(G, \bk) \sim (G_1, \hat{\bk}_1) \ \text{ with } \ \hat{k}_{1, \by\rightarrow \by'} = \frac{\hat{J}_{\by\rightarrow \by'}}{{\bx}^{\by}},
\end{equation} 
and
\begin{equation} \label{def:hpsi_kq}
\bp = ( \langle \bk, \bB_1 \rangle, \langle \bk, \bB_2 \rangle, \ldots, \langle \bk, \bB_b \rangle), 
\ \
\bq = ( \langle \hat{\bJ}, \bA_1 \rangle, \langle \hat{\bJ}, \bA_2 \rangle, \ldots, \langle \hat{\bJ}, \bA_a \rangle ).
\end{equation} 
\end{definition}

Recall Remark \ref{rmk:mJ_dK}, $\dK (G, G_1)$ is empty if and only if $\mJ(G_1, G)$ is empty. 
If $\mJ(G_1, G) = \dK (G, G_1) = \emptyset$, then the map $\hPsi$ is trivial.
However, we are interested in the case when $\dK (G, G_1) \neq \emptyset$, therefore we assume both $\mJ(G_1, G)$ and $\dK (G, G_1)$ are non-empty sets in the rest of the paper.

\begin{lemma}
\label{lem:hpsi_well_def}
The map $\hPsi$ in Definition \ref{def:hpsi}
is well-defined.
\end{lemma}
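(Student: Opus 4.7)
The plan is to verify three things about the candidate map: (i) given $(\hat{\bJ}, \bx, \bp)$, there exists a rate vector $\bk \in \RR^{E}$ satisfying both the dynamical-equivalence condition \eqref{def:hpsi_k} and the projection condition $\bp = (\langle \bk, \bB_i \rangle)_{i=1}^{b}$; (ii) such a $\bk$ is unique; and (iii) this $\bk$ actually lies in $\dK(G, G_1)$. The second output $\bq$ is manifestly well-defined, being a tuple of inner products of the fixed vectors $\bA_i$ against $\hat{\bJ}$.

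For existence, I would first invoke \eqref{def:hat_j_g1_g} to write $\hat{\bJ} = \bJ + \sum_{i=1}^{a} w_i \bA_i$ with $\bJ \in \mJ(G_1, G)$. By Definition \ref{def:flux_realizable}, there exists $\bJ^{*} \in \RR^{E}$ such that $(G_1, \bJ)$ and $(G, \bJ^{*})$ are flux equivalent. Setting $k^{*}_{1, \by \to \by'} := J_{\by \to \by'}/\bx^{\by}$ and $k^{*}_{\by \to \by'} := J^{*}_{\by \to \by'}/\bx^{\by}$, Proposition \ref{prop:craciun2020efficient} yields $(G_1, k^{*}_1) \sim (G, k^{*})$. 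A direct computation shows that the componentwise operation ``divide by $\bx^{\by}$'' at edges starting at $\by$ carries $\mD(G_1)$ into itself; therefore $\hat{\bJ} - \bJ \in \eJ(G_1) \subset \mD(G_1)$ implies $\hat{\bk}_1 - k^{*}_1 \in \mD(G_1)$, and Lemma \ref{lem:d0} gives $(G_1, \hat{\bk}_1) \sim (G_1, k^{*}_1) \sim (G, k^{*})$. To meet the projection constraint, set
\[
\bk := k^{*} + \sum_{i=1}^{b} \bigl(p_i - \langle k^{*}, \bB_i \rangle\bigr) \bB_i;
\]
since the shift lies in $\mD(G)$, Lemma \ref{lem:d0} gives $(G, \bk) \sim (G, k^{*}) \sim (G_1, \hat{\bk}_1)$, and by construction $\langle \bk, \bB_j \rangle = p_j$ for each $j$.

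For uniqueness, if $\bk, \bk'$ both satisfy \eqref{def:hpsi_k} and \eqref{def:hpsi_kq}, then $(G, \bk) \sim (G, \bk')$ forces $\bk - \bk' \in \mD(G)$, while equal projections onto the orthonormal basis $\{\bB_i\}_{i=1}^{b}$ of $\mD(G)$ force $\bk - \bk'$ to be orthogonal to $\mD(G)$; hence $\bk = \bk'$. To place $\bk$ in $\dK(G, G_1)$, I would check that $k^{*}_1 \in \mK(G_1)$: positivity follows from $\bJ > 0$ (since $\bJ \in \mJ(G_1, G) \subset \mathcal{J}(G_1)$) and $\bx > 0$, while $\bx$ is a complex-balanced steady state of $(G_1, k^{*}_1)$ because its flux at $\bx$ is exactly the complex-balanced flux vector $\bJ$. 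The chain $(G, \bk) \sim (G_1, \hat{\bk}_1) \sim (G_1, k^{*}_1)$ with $k^{*}_1 \in \mK(G_1)$ then certifies $\bk \in \dK(G, G_1)$. The main obstacle I anticipate is the bookkeeping in the rescaling step — in particular, confirming that componentwise division by $\bx^{\by}$ converts flux equivalence into dynamical equivalence and preserves membership in $\mD(G_1)$ for $\RR$-valued vectors, so that Lemma \ref{lem:d0} and Proposition \ref{prop:craciun2020efficient} may be applied as stated; once this is pinned down the remainder reduces to routine linear algebra.
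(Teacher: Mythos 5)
Your proposal is correct and follows essentially the same route as the paper's proof: decompose $\hat{\bJ}$ via the orthonormal basis $\{\bA_i\}$ of $\eJ(G_1)$, pass from fluxes to rate constants by dividing by $\bx^{\by}$ and invoke Proposition \ref{prop:craciun2020efficient} and Lemma \ref{lem:d0}, adjust by an element of $\mD(G)$ to hit the prescribed projections $\bp$, and conclude uniqueness from the orthonormality of $\{\bB_i\}$. The only (harmless) difference is that you construct the realization on $G$ explicitly at the flux level via Definition \ref{def:flux_realizable}, and verify $k^{*}_1 \in \mK(G_1)$ directly, where the paper routes this through the set $\mK_{\RR}(G_1,G)$.
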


\begin{proof}
Consider any point $(\hbJ^*, \bx^*, \bp^*) \in \hat{\mJ} (G_1,G)\times [(\bx_0 + \mS_{G_1} )\cap\mathbb{R}^n_{>0}] \times \mathbb{R}^b$.
From Equation\eqref{def:hat_j_g1_g}, there exist a $\bJ^* = (J^*_{\by\rightarrow \by'})_{\by\rightarrow \by' \in E_1} \in \mJ (G_1,G)$ and $w^*_i \in \RR$ for $1 \leq i \leq a$, such that
\[
\hbJ^* = \bJ^* + \sum\limits^a_{i=1} w^*_i \bA_i.
\]
Since $\{ \bA_i \}^a_{i=1}$ is an orthonormal basis of the subspace $\eJ(G_1)$, we obtain 
\begin{equation} \label{eq:psi_wd_1}
(G_1, \hbJ^*) \sim (G_1, \bJ^*).
\end{equation}
From $\bJ^* \in \mJ (G_1,G) \subset \bJ (G_1)$, set
$\bk_1 = (k_{1, \by\rightarrow \by'})_{\by\rightarrow \by' \in E_1}$ with 
$k_{1, \by\rightarrow \by'} = \frac{J^*_{\by \rightarrow \by'} }{ (\bx^*)^{\by} }$. Then
\begin{equation} \label{eq:psi_wd_2}
\bk_1 \in \mK_{\RR} (G_1,G) \subset \mK(G_1).
\end{equation}
Moreover, $\bx^*$ is the complex-balanced steady state of $(G_1, \bk_1)$.
Set $\hbk_1 = (\hat{k}_{1, \by\rightarrow \by'})_{\by\rightarrow \by' \in E_1}$ with 
$\hat{k}_{1, \by\rightarrow \by'} = \frac{\hat{J}^*_{\by \rightarrow \by'} }{ (\bx^*)^{\by} }$. From Equation\eqref{eq:psi_wd_1} and Proposition \ref{prop:craciun2020efficient}, we have
\begin{equation} \label{eq:psi_wd_3}
(G_1, \bk_1) \sim (G_1, \hat{\bk}_1).
\end{equation} 

From Equation\eqref{eq:psi_wd_2}, there exists a $\bk \in \dK(G,G_1) \subset \RR^{|E|}$, such that
$(G, \bk) \sim (G_1, \bk_1)$.
Now suppose $\bp^* = (p^*_1, p^*_2, \ldots, p^*_b) \in \RR^b$, we construct the vector $\bk^* \in \RR^{|E|}$ as 
\[
\bk^* = \bk + \sum\limits^{b}_{i=1} (p^*_i - \langle \bk, \bB_i \rangle ) \bB_i.
\]
Since $\{ \bB_i \}^b_{i=1}$ is an orthonormal basis of the subspace $\mD(G)$, then for $1 \leq j \leq b$,
\begin{equation} \label{eq:k*p*}
\langle \bk^*, \bB_j \rangle 
= \langle \bk + \sum\limits^{b}_{i=1} (p^*_i - \langle \bk, \bB_i \rangle ) \bB_i, \bB_j \rangle
= \langle \bk, \bB_j \rangle + (p^*_j - \langle \bk, \bB_j \rangle ) = p^*_j.
\end{equation}
Using Lemma \ref{lem:d0}, together with
$\sum\limits^{b}_{i=1} (p^*_i - \bk \bB_i ) \bB_i \in \mD(G)$ and \eqref{eq:psi_wd_3}, we obtain
\begin{equation}
\label{eq:psi_wd_4}
(G, \bk^*) \sim (G, \bk) \sim (G_1, \hat{\bk}_1).
\end{equation}
Therefore, $\bk^*$ satisfies Equations\eqref{def:hpsi_k} and \eqref{def:hpsi_kq}.

\smallskip

\noindent Let us assume that there exists $\bk^{**} \in \dK(G,G_1)$ satisfying Equations\eqref{def:hpsi_k} and \eqref{def:hpsi_kq}, i.e.,
\[(G, \bk^{**}) \sim (G_1, \hat{\bk}_1)
\ \text{ and } \
\bp^* = ( \langle \bk^{**}, \bB_1 \rangle, \langle \bk^{**}, \bB_2 \rangle, \ldots, \langle \bk^{**}, \bB_b \rangle).
\]
This implies that $(G, \bk^{**}) \sim (G, \bk^*)$. From Lemma \ref{lem:d0}, we obtain
\[
\bk^{**} - \bk^{*} \in \mD(G).
\]
Using \eqref{eq:k*p*}, we get
\[
\langle \bk^*, \bB_j \rangle  
= \langle \bk^{**}, \bB_j \rangle = p^*_j
\ \text{ for any } \
1 \leq j \leq b.
\]
Recall that $\{ \bB_i \}^b_{i=1}$ is an orthonormal basis of $\mD(G)$. Therefore, we get 
\[
\bk^{**} = \bk^{*}.
\]
This implies that $\bk^* \in \dK(G,G_1)$ is well-defined. Moreover, from \eqref{def:hpsi_kq} we obtain 
\[
\bq^* = ( \langle \hbJ^*, \bA_1 \rangle, \langle \hbJ^*, \bA_2 \rangle, \ldots, \langle \hbJ^*, \bA_a \rangle )
\ \text{ is well-defined}.
\] 
This implies that we get
\[
\hPsi (\hbJ^*, \bx^*, \bp^*) = (\bk^*, \bq^*),
\]
and thus the map $\hPsi$ is well-defined.
\end{proof}

The following is a direct consequence of Lemma \ref{lem:hpsi_well_def}.

\begin{corollary}
\label{cor:hpsi_ss}
Consider the map $\hPsi$ in Definition \ref{def:hpsi}.
Suppose that $\hat{\Psi} (\hat{\bJ},\bx, \bp) = (\bk, \bq)$, then $\bx$ is a steady state of the system $(G, \bk)$.
\end{corollary}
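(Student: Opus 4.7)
The plan is to read off the conclusion directly from the construction carried out in the proof of Lemma \ref{lem:hpsi_well_def}, combined with the fact (Remark \ref{rmk:de_ss}) that dynamically equivalent mass-action systems share the same set of positive steady states.

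First I would unpack the hypothesis $\hPsi(\hat{\bJ},\bx,\bp) = (\bk,\bq)$ using Definition \ref{def:hpsi}. Write $\hat{\bJ} = \bJ^* + \sum_{i=1}^a w_i^* \bA_i$ with $\bJ^* \in \mJ(G_1,G)$, as in \eqref{def:hat_j_g1_g}. Define the auxiliary rate vector $\bk_1 = (k_{1,\by \to \by'})$ on $G_1$ by
\[
k_{1,\by \to \by'} = \frac{J^*_{\by \to \by'}}{\bx^{\by}},
\]
and the rate vector $\hat{\bk}_1$ on $G_1$ with $\hat{k}_{1,\by\to\by'} = \hat{J}_{\by\to\by'}/\bx^{\by}$ as in \eqref{def:hpsi_k}.

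Next I would argue, in one short step each, that $\bx$ is a steady state of the $G$-system by tracing a chain of dynamical equivalences. Since $\bJ^* \in \mJ(G_1,G) \subset \mathcal{J}(G_1)$ is a complex-balanced flux vector, the definition of $\bk_1$ makes $\bx$ a complex-balanced (hence positive) steady state of $(G_1,\bk_1)$. Because $\hat{\bJ} - \bJ^* \in \eJ(G_1)$, Lemma \ref{lem:j0}(a) gives $(G_1,\hat{\bJ}) \sim (G_1,\bJ^*)$ as flux systems; applying Proposition \ref{prop:craciun2020efficient} (with the common scaling $\bx$) then yields $(G_1,\hat{\bk}_1) \sim (G_1,\bk_1)$. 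Finally, \eqref{def:hpsi_k} says $(G,\bk) \sim (G_1,\hat{\bk}_1)$ by construction. Composing these equivalences gives $(G,\bk) \sim (G_1,\bk_1)$, and Remark \ref{rmk:de_ss} shows that these two mass-action systems share the same positive steady states; hence $\bx$ is a steady state of $(G,\bk)$.

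There is no real obstacle here — this corollary is essentially a bookkeeping statement about what the map $\hPsi$ was designed to produce. The only thing to be careful about is applying Proposition \ref{prop:craciun2020efficient} in the right direction (going from flux equivalence at one point $\bx$ to dynamical equivalence of the rate-constant systems), and keeping straight that $\bk_1$ rather than $\hat{\bk}_1$ is the one that is literally complex-balanced at $\bx$, while $\hat{\bk}_1$ is only dynamically equivalent to it. Once that is in order, the proof is a two-line composition.
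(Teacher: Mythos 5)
Your proposal is correct and follows essentially the same route as the paper: both unpack $\hat{\bJ}$ as $\bJ^* + \sum_i w_i \bA_i$ with $\bJ^* \in \mJ(G_1,G)$, observe that $\bx$ is the complex-balanced steady state of $(G_1,\bk_1)$ with $k_{1,\by\to\by'} = J^*_{\by\to\by'}/\bx^{\by}$, chain the equivalences $(G,\bk) \sim (G_1,\hat{\bk}_1) \sim (G_1,\bk_1)$ via Lemma \ref{lem:j0} and Proposition \ref{prop:craciun2020efficient}, and conclude with Remark \ref{rmk:de_ss}. The paper simply cites the intermediate equations \eqref{eq:psi_wd_2}--\eqref{eq:psi_wd_4} already established in the proof of Lemma \ref{lem:hpsi_well_def} rather than rederiving them, but the content is identical.
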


\begin{proof}

It is clear that $\hat{\bJ} \in \hat{\mJ} (G_1,G)$ and $\bx \in (\bx_0 + \mS_{G_1} )\cap\mathbb{R}^n_{>0}$.
From Equation\eqref{def:hat_j_g1_g}, there exist some $\bJ^* = (J^*_{\by\rightarrow \by'})_{\by\rightarrow \by' \in E_1} \in \mJ (G_1,G)$, such that 
\[
\hbJ - \bJ^* \in \spn \{\bA_i \}^{a}_{i=1}.
\]
Using \eqref{eq:psi_wd_2} and setting $\bk_1 = (k_{1, \by\rightarrow \by'})_{\by\rightarrow \by' \in E_1}$ with $k_{1, \by\rightarrow \by'} = \frac{J^*_{\by \rightarrow \by'} }{ (\bx^*)^{\by} }$, we derive
\[
\bk_1 \in \mK_{\RR} (G_1,G),
\]
and $\bx^*$ is the complex-balanced steady state of $(G_1, \bk_1)$.
Finally, using Equations\eqref{eq:psi_wd_3} and \eqref{eq:psi_wd_4}, together with Remark \ref{rmk:de_ss}, we obtain $(G, \bk) \sim (G_1, \bk_1)$ and conclude that $\bx$ is a steady state of the system $(G, \bk)$.
\end{proof}

\begin{lemma}
\label{lem:hpsi_bijective}
The map $\hPsi$ in Definition \ref{def:hpsi}
is bijective.
\end{lemma}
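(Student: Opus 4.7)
Since the map is already well-defined by Lemma \ref{lem:hpsi_well_def}, I would establish bijectivity by separately showing injectivity and surjectivity.

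\emph{Injectivity.} Suppose $\hPsi(\hbJ,\bx,\bp)=\hPsi(\hbJ',\bx',\bp')=(\bk,\bq)$. The equality $\bp=\bp'$ is immediate from \eqref{def:hpsi_kq}. For $\bx=\bx'$: by Corollary \ref{cor:hpsi_ss}, both $\bx,\bx'$ are positive steady states of $(G,\bk)$. The construction in Lemma \ref{lem:hpsi_well_def} produces $\bk_1\in\mK_{\RR}(G_1,G)$ with $(G,\bk)\sim(G_1,\bk_1)$ having $\bx$ as its complex-balanced steady state, and analogously $\bk'_1$ with complex-balanced steady state $\bx'$. Since dynamically equivalent systems share steady states (Remark \ref{rmk:de_ss}) and every positive steady state of a complex-balanced system is itself complex-balanced (Theorem \ref{thm:cb}(b)), both $\bx$ and $\bx'$ are complex-balanced steady states of the single complex-balanced system $(G_1,\bk_1)$ lying in $(\bx_0+\mS_{G_1})\cap\RR^n_{>0}$, and uniqueness within an invariant polyhedron forces $\bx=\bx'$.

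Given $\bx=\bx'$, the vectors $\hbk_1,\hbk'_1$ from Definition \ref{def:hpsi} (now with a common denominator $\bx^{\by}$) satisfy $(G_1,\hbk_1)\sim(G,\bk)\sim(G_1,\hbk'_1)$, and Lemma \ref{lem:d0} combined with multiplication of the identity at each vertex $\by_0$ by $\bx^{\by_0}$ yields $\hbJ-\hbJ'\in\mD(G_1)$. Because each of $\hbJ,\hbJ'$ satisfies incoming-equals-outgoing at every vertex (being the sum of a complex-balanced flux vector in $\mJ(G_1,G)$ and an element of $\eJ(G_1)$), their difference belongs to $\eJ(G_1)$. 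Expanding $\hbJ-\hbJ'=\sum c_i\bA_i$ and pairing with $\bA_j$ shows $c_j=\langle\hbJ-\hbJ',\bA_j\rangle=q_j-q_j=0$, so $\hbJ=\hbJ'$.

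\emph{Surjectivity.} For $(\bk,\bq)\in\dK(G,G_1)\times\RR^a$, I would pick $\bk_1\in\mK_{\RR}(G_1,G)$ with $(G,\bk)\sim(G_1,\bk_1)$ and let $\bx$ be the unique complex-balanced steady state of $(G_1,\bk_1)$ in $(\bx_0+\mS_{G_1})\cap\RR^n_{>0}$ (Theorem \ref{thm:cb}(b)). Define $\bJ\in\mJ(G_1,G)$ by $J_{\by\to\by'}=k_{1,\by\to\by'}\bx^\by$ and set
\[
\hbJ := \bJ + \sum_{i=1}^{a}\bigl(q_i - \langle\bJ,\bA_i\rangle\bigr)\bA_i \in \hat{\mJ}(G_1,G), \qquad \bp := \bigl(\langle\bk,\bB_1\rangle,\ldots,\langle\bk,\bB_b\rangle\bigr).
\]
Orthonormality of $\{\bA_i\}$ gives $\langle\hbJ,\bA_j\rangle=q_j$, while the choice of $\bp$ annihilates the correction term $\sum_i(p_i-\langle\bk,\bB_i\rangle)\bB_i$ that appears in the Lemma \ref{lem:hpsi_well_def} construction, so the first output coordinate is exactly the prescribed $\bk$; hence $\hPsi(\hbJ,\bx,\bp)=(\bk,\bq)$.

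The main obstacle I anticipate is the equality $\bx=\bx'$ in the injectivity step: it requires simultaneously locating $\bx$ and $\bx'$ inside a single complex-balanced system on $G_1$ via the dynamical-equivalence chain, so that uniqueness of the complex-balanced steady state within an invariant polyhedron applies. Once this is done, the rest of injectivity collapses to an orthonormal-basis computation in $\eJ(G_1)$ and surjectivity is a direct construction.
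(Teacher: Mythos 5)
Your proposal is correct and follows essentially the same route as the paper: injectivity is obtained by forcing $\bx=\bx'$ through uniqueness of the complex-balanced steady state in the invariant polyhedron (via the dynamical-equivalence chain through $G_1$) and then reducing $\hbJ=\hbJ'$ to an orthonormal-basis computation in $\eJ(G_1)$, while surjectivity is the same explicit construction of $\hbJ$ and $\bp$ from a complex-balanced realization. The only cosmetic difference is that you show $\hbJ-\hbJ'\in\eJ(G_1)$ directly from the vertex-balance condition, whereas the paper routes through the complex-balanced parts $\bJ^*,\bJ^{**}$ and Lemma \ref{lem:j0}(b); both are valid.
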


\begin{proof}

First, we show the map $\hPsi$ is injective. 
Suppose two elements $(\hbJ^*, \bx^*, \bp^*)$ and $(\hbJ^{**}, \bx^{**}, \bp^{**})$ of $\hat{\mJ} (G_1,G) \times [(\bx_0 + \mS_{G_1} )\cap\mathbb{R}^n_{>0}] \times \mathbb{R}^b$ satisfy 
\[
\hPsi (\hbJ^*, \bx^*, \bp^*) = \hPsi (\hbJ^{**}, \bx^{**}, \bp^{**}) = (\bk, \bq) \in \dK(G,G_1)\times \RR^a.
\]
From \eqref{def:hat_j_g1_g}, there exist $\bJ^* = (J^*_{\by\rightarrow \by'})_{\by\rightarrow \by' \in E_1} \in \mJ (G_1,G)$ and $\bJ^{**} = (J^{**}_{\by\rightarrow \by'})_{\by\rightarrow \by' \in E_1} \in \mJ (G_1,G)$, such that
\begin{equation}
\label{eq:hpsi_bijective_1}
\hbJ^* - \bJ^* \in \spn \{ \bA_i \}^{a}_{i=1} 
\ \text{ and } \
\hbJ^{**} - \bJ^{**}
\in \spn \{ \bA_i \}^{a}_{i=1}.
\end{equation}
Then we set
$\bk^* = (k^*_{\by\rightarrow \by'})_{\by\rightarrow \by' \in E_1}$ and $\bk^{**} = (k^{**}_{\by\rightarrow \by'})_{\by\rightarrow \by' \in E_1}$ with
\[
k^*_{\by\rightarrow \by'} = \frac{J^*_{\by\rightarrow \by'}}{{(\bx^*)}^{\by}}
\ \text{ and } \
k^{**}_{\by\rightarrow \by'} = \frac{J^{**}_{\by\rightarrow \by'}}{{(\bx^*)}^{\by}}.
\]
Using Propositions\ref{prop:craciun2020efficient} and \eqref{def:hpsi_k}, we get 
\[\bk^*, \bk^{**} \in \mK_{\RR} (G_1,G)
\ \text{ and } \ 
(G, \bk) \sim (G_1, \bk^*) \sim (G_1, \bk^{**}).
\]
Moreover, two complex-balanced system $(G_1, \bk^*)$ and $(G_1, \bk^{**})$ admit steady states
\[
\bx^* \in
(\bx_0 + \mS_{G_1} )\cap\mathbb{R}^n_{>0}
\ \text{ and } \ 
\bx^{**} \in
(\bx_0 + \mS_{G_1} )\cap\mathbb{R}^n_{>0},
\ \text{respectively}.
\]
Since every complex-balanced system has a unique steady state within each invariant polyhedron and $(G_1, \bk^*) \sim (G_1, \bk^{**})$, then
\[
\bx^* = \bx^{**}.
\]
Now applying Proposition \ref{prop:craciun2020efficient} and Lemma \ref{lem:j0}, we get 
\begin{equation} 
\label{eq:hpsi_bijective_2}
(G_1, \bJ^*) \sim (G_1, \bJ^{**})
\ \text{ and } \
\bJ^{**} - \bJ^* \in \eJ(G_1).
\end{equation}
Since $\eJ(G_1) = \spn \{ \bA_i \}^{a}_{i=1}$, using \eqref{eq:hpsi_bijective_1} and \eqref{eq:hpsi_bijective_2}, we have
\begin{equation} 
\label{eq:hpsi_bijective_3}
\hbJ^{**} - \hbJ^* \in \spn \{ \bA_i \}^{a}_{i=1}.
\end{equation}
On the other hand, Equation\eqref{def:hpsi_kq} shows that
\[
\bq = ( \langle \hbJ^*, \bA_1 \rangle, \langle \hbJ^*, \bA_2 \rangle, \ldots, \langle \hbJ^*, \bA_a \rangle ) 
= ( \langle \hbJ^{**}, \bA_1 \rangle, \langle \hbJ^{**}, \bA_2 \rangle, \ldots, \langle \hbJ^{**}, \bA_a \rangle ).
\]

Since $\{\bA_i \}^{a}_{i=1}$ is an orthonormal basis of the subspace $\eJ(G)$, together with \eqref{eq:hpsi_bijective_3}, then 
\[
\hbJ^* = \hbJ^{**}.
\]
Furthermore, from \eqref{def:hpsi_kq} we obtain
\[
\bp^* =  \bp^{**} = ( \langle \bk, \bB_1 \rangle, \langle \bk, \bB_2 \rangle, \ldots, \langle \bk, \bB_b \rangle).
\]
Therefore, we show $(\bJ^*, \bx^*, \bp^*) = (\bJ^{**}, \bx^{**}, \bp^{**})$ and conclude the injectivity.

\medskip

We now show that the map $\hPsi$ is surjective.
Assume any point $(\bk, \bq) \in \dK(G,G_1)\times \RR^a$.
Since $\bk \in \dK (G, G_1)$, there exists some $\bk_1 \in \mK (G_1, G)$, such that 
\begin{equation}
\label{eq:gk_g1k1}
(G, \bk) \sim (G_1, \bk_1)
\ \text{ with } \ 
\bk_1 = (k_{1, \by\rightarrow \by'})_{\by\rightarrow \by' \in E_1}.
\end{equation}
From Theorem \ref{thm:cb}, the complex-balanced system $(G_1, \bk_1)$ has a unique steady state $\bx \in (\bx_0 + \mS_{G_1} )\cap\mathbb{R}^n_{>0}$. 
We set the flux vector $\bJ_1$ as 
\[
\bJ_1 = (J_{1, \by\rightarrow \by'})_{\by\rightarrow \by' \in E_1}
\ \text{ with } \ J_{1, \by\rightarrow \by'} = k_{1, \by\rightarrow \by'} {\bx}^{\by}.
\]
It is clear that 
$\bJ_1 \in \mJ (G_1,G)$ and the flux system $(G_1, \bJ_1)$ gives rise to the complex-balanced system $(G_1, \bk_1)$ with a  steady state $\bx \in (\bx_0 + \mS_{G_1} )\cap\mathbb{R}^n_{>0}$.
Now suppose $\bq = (q_1, q_2, \ldots, q_a)$, we construct a new flux vector $\hbJ$ as follows:
\[
\hbJ = \bJ_1 + \sum\limits^{a}_{i=1} (q_i - \langle \bJ_1, \bA_i \rangle ) \bA_i.
\]
Using the fact that $\{ \bA_i \}^a_{i=1}$ is an orthonormal basis of the subspace $\eJ(G_1)$, we can compute
\begin{equation} \notag
\langle \hbJ, \bA_i \rangle = \hat{q}_i
\ \text{ for any } \
1 \leq i \leq a.
\end{equation}
From Lemma \ref{lem:j0} and $\sum\limits^{a}_{i=1} (q_i - \langle\bJ_1 \bA_i\rangle ) \bA_i \in \eJ(G_1)$, we obtain
\[
(G, \hbJ) \sim (G_1, \bJ_1).
\]
Let $\hbk_1 = (k_{1, \by\rightarrow \by'})_{\by\rightarrow \by' \in E_1}$ with $\hat{k}_{1, \by\rightarrow \by'} = \frac{\hat{J}_{\by\rightarrow \by'}}{{\bx}^{\by}}$. From Proposition \ref{prop:craciun2020efficient} and \eqref{eq:gk_g1k1}, we have
\[
(G, \bk) \sim (G_1, \bk_1) \sim (G, \hbk_1).
\]
Finally, let $\bp = ( \langle \bk, \bB_1 \rangle, \langle \bk, \bB_2 \rangle, \ldots, \langle \bk, \bB_b \rangle)$ and derive that
\[
\hat{\Psi} (\hat{\bJ},\bx, \bp) = (\bk, \bq).
\]
Therefore, we prove the map $\hat{\Psi}$ is surjective.
\end{proof}

\begin{lemma}
\label{lem:hpsi_cts} 
The map $\hPsi$ in Definition \ref{def:hpsi} is continuous.
\end{lemma}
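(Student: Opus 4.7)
The plan is to decompose $\hPsi=(\hPsi_{\bk},\hPsi_{\bq})$ into its two coordinate functions and treat them separately. The $\bq$-component $\hPsi_{\bq}(\hat{\bJ},\bx,\bp)=(\langle\hat{\bJ},\bA_1\rangle,\ldots,\langle\hat{\bJ},\bA_a\rangle)$ is the restriction of a fixed linear map to $\hat{\mJ}(G_1,G)\times[(\bx_0+\mS_{G_1})\cap\RR^n_{>0}]\times\RR^b$, hence continuous. All the work is in showing continuity of the $\bk$-component $\hPsi_{\bk}$.

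First I would argue that the intermediate rate vector $\hbk_1=(\hat{J}_{\by\to\by'}/\bx^{\by})_{\by\to\by'\in E_1}$ depends continuously on $(\hat{\bJ},\bx)$, because $\bx\mapsto\bx^{\by}$ is continuous and strictly positive on $\RR^n_{>0}$, so the entrywise division is continuous. Next, writing out the dynamical equivalence $(G,\bk)\sim(G_1,\hbk_1)$ vertex by vertex, define the linear map
\[
L:\RR^{|E|}\to\RR^{n\cdot|V\cup V_1|},\qquad L(\bk)=\Big(\sum_{\by_0\to\by\in E}k_{\by_0\to\by}(\by-\by_0)\Big)_{\by_0\in V\cup V_1},
\]
and the linear map $M:\RR^{|E_1|}\to\RR^{n\cdot|V\cup V_1|}$ defined analogously from the $G_1$-side. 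By Definition \ref{def:de}, $(G,\bk)\sim(G_1,\hbk_1)$ is equivalent to $L(\bk)=M(\hbk_1)$. Let $P:\RR^{|E|}\to\RR^b$ be the map $P(\bk)=(\langle\bk,\bB_1\rangle,\ldots,\langle\bk,\bB_b\rangle)$, and set $T=(L,P):\RR^{|E|}\to\RR^{n\cdot|V\cup V_1|}\times\RR^b$. Then by construction $\bk=\hPsi_{\bk}(\hat{\bJ},\bx,\bp)$ is characterized by the single linear equation
\[
T(\bk)=\bigl(M(\hbk_1),\,\bp\bigr).
\]

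The key observation is that $T$ is injective. Indeed, $\ker L=\mD(G)$ by Definition \ref{def:d0}, while $\ker P=\mD(G)^{\perp}$ because $\{\bB_1,\ldots,\bB_b\}$ is an orthonormal basis of $\mD(G)$; hence $\ker T=\mD(G)\cap\mD(G)^{\perp}=\{\mathbf{0}\}$. Well-definedness of $\hPsi$ (Lemma \ref{lem:hpsi_well_def}) guarantees that $(M(\hbk_1),\bp)$ lies in the image of $T$ for every admissible input, so $\bk$ is the unique preimage. Since $T$ is an injective linear map between finite-dimensional spaces, its inverse on the image $\operatorname{Im}(T)$ extends to a continuous (in fact linear) map $T^{+}:\RR^{n\cdot|V\cup V_1|}\times\RR^b\to\RR^{|E|}$, for instance the Moore--Penrose pseudoinverse. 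Therefore
\[
\hPsi_{\bk}(\hat{\bJ},\bx,\bp)=T^{+}\bigl(M(\hbk_1(\hat{\bJ},\bx)),\,\bp\bigr),
\]
which is a composition of continuous maps.

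The main obstacle is purely bookkeeping: one must verify that $\ker L=\mD(G)$ (which is exactly Definition \ref{def:d0}) and that the inputs $(M(\hbk_1),\bp)$ always land in $\operatorname{Im}(T)$ so that $T^{+}$ actually recovers the unique solution; the latter is a restatement of the surjectivity argument already contained in Lemma \ref{lem:hpsi_bijective}. Once these two points are in place, continuity of $\hPsi_{\bk}$ follows from continuity of $\hbk_1$ in $(\hat{\bJ},\bx)$ and the linearity of $T^{+}\circ M$ and the identity in $\bp$. Combining with continuity of $\hPsi_{\bq}$ yields continuity of $\hPsi$ on the whole domain.
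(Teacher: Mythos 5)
Your proposal is correct and follows essentially the same route as the paper: the $\bq$-component is a linear projection, and the $\bk$-component is the unique solution of the linear system consisting of the vertex-wise dynamical-equivalence equations together with the conditions $\langle\bk,\bB_i\rangle=p_i$, with continuously varying right-hand side. Your formulation via the injective map $T=(L,P)$ and its pseudoinverse is in fact a cleaner and more rigorous rendering of the paper's argument, which phrases the same fact as transversal intersection of two affine subspaces with complementary directions $\mD(G)$ and $\mD(G)^{\perp}$ (and which, incidentally, misstates the direction of the second subspace as $\mD(G)$ rather than $\mD(G)^{\perp}$); your identification $\ker L=\mD(G)$, $\ker P=\mD(G)^{\perp}$ makes the uniqueness and continuity completely explicit.
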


\begin{proof}
Consider any fixed point $(\hbJ, \bx, \bp) \in \hmJ (G_1,G)\times [(\bx_0 + \mS_{G_1} )\cap\mathbb{R}^n_{>0}] \times \mathbb{R}^b$, such that
\[
\hPsi (\hbJ, \bx, \bp) = (\bk, \bq).
\]
From \eqref{def:hpsi_kq} in Definition \ref{def:hpsi}, $\bq$ is defined as
\[
\bq = ( \langle \hat{\bJ}, \bA_1 \rangle, \langle \hat{\bJ}, \bA_2 \rangle, \ldots, \langle \hat{\bJ}, \bA_a \rangle ).
\]
It follows that $\bq$ is a continuous function of $\hbJ$.

\smallskip

Now it remains to show that $\bk$ is also a continuous function of $(\hbJ,\bx,\bq)$.
Recall \eqref{def:hpsi_k} in Definition \ref{def:hpsi}, $\bk$ is defined as
\[
(G, \bk) \sim (G_1, \hat{\bk}_1) \ \text{ with } \ \hat{k}_{1, \by\rightarrow \by'} = \frac{\hat{J}_{\by\rightarrow \by'}}{{\bx}^{\by}}.
\]
Together with \eqref{def:hpsi_kq}, we get 
\begin{equation} \label{eq:k_ct_2}
\bp = ( \langle \bk, \bB_1 \rangle, \langle \bk, \bB_2 \rangle, \ldots, \langle \bk, \bB_b \rangle),
\end{equation}
and for every vertex $\by_0 \in V \cup V_1$,
\begin{equation} \label{eq:k_ct_1}
\sum_{\by_0 \to \by \in E} k_{\by_0  \to \by} (\by - \by_0) 
= \sum_{\by_0 \to \by' \in E_1} \frac{\hat{J}_{\by_0 \rightarrow \by'}}{{\bx}^{\by_0}} (\by' - \by_0).
\end{equation}
Note that $\hbJ$ and $\bx$ are fixed, then \eqref{eq:k_ct_1} can be rewritten as
\begin{equation} \label{eq:k_ct_1_1}
\sum_{\by_0 \to \by \in E} k_{\by_0  \to \by} (\by - \by_0) 
= \text{constant}. 
\end{equation}

Assume $\bk'$ is another solution to \eqref{eq:k_ct_1_1}, then 
\[
(G, \bk) \sim (G, \bk').
\]
Using Lemma \ref{lem:d0}, we obtain that 
\[
\bk' - \bk \in \mD (G).
\]
Together with the linearity of $\mD (G)$, the solutions to \eqref{eq:k_ct_1_1} form an affine linear subspace.
Hence, the tangent space of the solution to \eqref{eq:k_ct_1_1} at $(\bJ, \bx, \bp)$ is $\mD(G)$.

Analogously, given fixed $\bp$, the solutions to \eqref{eq:k_ct_2} also form an affine linear subspace, whose tangent space at $(\bJ, \bx, \bp)$ is tangential to 
\begin{equation} \notag
\spn \{\bB_1, \bB_2, \ldots, \bB_b\} = \mD(G).
\end{equation}

This indicates that two tangent spaces at $(\bJ, \bx, \bp)$ are complementary, and thus intersect transversally~\cite{guillemin2010differential}. 
From Lemma \ref{lem:hpsi_well_def}, $\bk$ is the unique solution to \eqref{eq:k_ct_2} and \eqref{eq:k_ct_1}. Therefore, we conclude that $\bk$ as the unique intersection point (solution) of two equations \eqref{eq:k_ct_2} and \eqref{eq:k_ct_1} must vary continuously with respect to parameters $(\hbJ, \bx, \bp)$.
\end{proof}

\section{Continuity of \texorpdfstring{$\hPsi^{-1}$}{hPsi-1}}
\label{sec:continuity}

In this section, we first introduce the map $\Phi$ (see Definition \ref{def:phi}) and prove $\Phi = \hPsi^{-1}$ is well-defined. 
Then we show the map $\Phi$ is continuous, i.e. $\hPsi^{-1}$ is also continuous.

\begin{definition} \label{def:phi}
Given a weakly reversible E-graph $G_1 = (V_1, E_1)$ with its stoichiometric subspace $\mS_{G_1}$.
Consider an E-graph $G = (V, E)$ and $\bx_0\in\mathbb{R}^n_{>0}$, define the map 
\begin{equation} \label{eq:phi}
\Phi: \dK(G,G_1)\times \RR^a \rightarrow \hat{\mJ} (G_1,G) \times [(\bx_0 + \mS_{G_1} )\cap\mathbb{R}^n_{>0}] \times \RR^b,
\end{equation}
such that for $(\bk, \bq) \in \dK(G,G_1)\times \RR^a$, 
\begin{equation} \notag
\Phi (\bk, \bq) := (\hat{\bJ},\bx, \bp),
\end{equation}
where $\bx \in (\bx_0 + \mS_{G_1} )\cap\mathbb{R}^n_{>0}$ is the steady state of $(G, \bk)$, and
\begin{equation} \label{def:phi_k}
(G, \bk) \sim (G_1, \hat{\bk}_1) \ \text{ with } \ \hat{k}_{1, \by\rightarrow \by'} = \frac{\hat{J}_{\by\rightarrow \by'}}{{\bx}^{\by}},
\end{equation} 
and
\begin{equation} \label{def:phi_kq}
\bp = ( \langle \bk, \bB_1 \rangle, \langle \bk, \bB_2 \rangle, \ldots, \langle \bk, \bB_b \rangle), 
\ \
\bq = ( \langle \hat{\bJ}, \bA_1 \rangle, \langle \hat{\bJ}, \bA_2 \rangle, \ldots, \langle \hat{\bJ}, \bA_a \rangle ).
\end{equation} 
\end{definition}

\medskip

\begin{lemma}
\label{lem:phi_wd}
The map $\Phi$ in Definition \ref{def:phi} is well-defined, and $\Phi = \hPsi^{-1}$ is bijective.
\end{lemma}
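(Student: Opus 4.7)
The plan is to verify well-definedness of $\Phi$ by producing $(\hat{\bJ}, \bx, \bp)$ explicitly from $(\bk, \bq)$ and checking its uniqueness, and then to observe that $\Phi$ obeys literally the same defining equations as $\hPsi^{-1}$, so that bijectivity drops out of Lemma \ref{lem:hpsi_bijective}.

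First I would produce the image. Given $(\bk, \bq) \in \dK(G, G_1) \times \RR^a$, the definition of $\dK(G, G_1)$ gives some $\bk_1 \in \mK_{\RR}(G_1, G)$ with $(G, \bk) \sim (G_1, \bk_1)$. Theorem \ref{thm:cb} provides a unique positive steady state $\bx \in (\bx_0 + \mS_{G_1}) \cap \RR^n_{>0}$ for the complex-balanced system $(G_1, \bk_1)$, and by Remark \ref{rmk:de_ss} this $\bx$ is also a steady state of $(G, \bk)$. Setting $J_{1, \by \to \by'} := k_{1, \by \to \by'} \bx^{\by}$ yields $\bJ_1 \in \mJ(G_1, G)$, after which I would correct the $\eJ(G_1)$-coordinates by defining
\[
\hat{\bJ} := \bJ_1 + \sum_{i=1}^{a} \bigl( q_i - \langle \bJ_1, \bA_i \rangle \bigr) \bA_i,
\]
so that $\hat{\bJ} \in \hat{\mJ}(G_1, G)$ and $\langle \hat{\bJ}, \bA_i \rangle = q_i$ for every $i$. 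Finally, let $\bp := (\langle \bk, \bB_1 \rangle, \ldots, \langle \bk, \bB_b \rangle)$. These choices satisfy \eqref{def:phi_k} and \eqref{def:phi_kq}.

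For uniqueness, suppose another triple $(\hat{\bJ}', \bx', \bp')$ fulfills the same conditions. Uniqueness of $\bp$ is immediate from its formula. Both $\bx$ and $\bx'$ are steady states of $(G, \bk)$ inside $(\bx_0 + \mS_{G_1}) \cap \RR^n_{>0}$, and because $(G, \bk)$ is dynamically equivalent to the complex-balanced system $(G_1, \bk_1)$, Remark \ref{rmk:de_ss} together with Theorem \ref{thm:cb}(b) forces $\bx' = \bx$. For $\hat{\bJ}$, applying Proposition \ref{prop:craciun2020efficient} to the mass-action equivalences $(G, \bk) \sim (G_1, \hat{\bk}_1) \sim (G_1, \hat{\bk}_1')$ shows that the complex-balanced flux systems $(G_1, \hat{\bJ})$ and $(G_1, \hat{\bJ}')$ are flux equivalent, so Lemma \ref{lem:j0}(b) gives $\hat{\bJ}' - \hat{\bJ} \in \eJ(G_1) = \spn\{\bA_1, \ldots, \bA_a\}$; combined with $\langle \hat{\bJ}, \bA_i \rangle = \langle \hat{\bJ}', \bA_i \rangle = q_i$ and orthonormality of $\{\bA_i\}$, this forces $\hat{\bJ}' = \hat{\bJ}$.

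Once well-definedness is in hand, I would observe that the conditions \eqref{def:phi_k}--\eqref{def:phi_kq} defining $\Phi (\bk, \bq) = (\hat{\bJ}, \bx, \bp)$ are identical to the conditions \eqref{def:hpsi_k}--\eqref{def:hpsi_kq} defining $\hPsi(\hat{\bJ}, \bx, \bp) = (\bk, \bq)$, with the extra requirement that $\bx$ be a steady state of $(G, \bk)$ automatically produced on the $\hPsi$ side by Corollary \ref{cor:hpsi_ss}. Hence $\Phi \circ \hPsi$ and $\hPsi \circ \Phi$ both equal the identity on their respective domains, so $\Phi = \hPsi^{-1}$, and bijectivity of $\Phi$ is immediate from Lemma \ref{lem:hpsi_bijective}. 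The main subtlety I anticipate is the uniqueness argument for $\hat{\bJ}$: one must be careful that Lemma \ref{lem:j0}(b) is applied to the (not necessarily positive) elements of $\hat{\mJ}(G_1, G)$, which is legitimate because every such element still satisfies the complex-balance identity at each vertex by virtue of $\mJ(G_1, G) \subseteq \mathcal{J}(G_1)$ and $\eJ(G_1) \subseteq \mD(G_1)$ being preserved under the defining shift.
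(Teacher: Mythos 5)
Your proposal is correct and follows essentially the same route as the paper: the same explicit construction of $(\hat{\bJ},\bx,\bp)$ from $(\bk,\bq)$, and the same observation that the defining conditions of $\Phi$ coincide with those of $\hPsi$, so that $\Phi=\hPsi^{-1}$ and bijectivity follows from Lemma \ref{lem:hpsi_bijective}. The only (harmless) deviation is that you establish uniqueness of the image by a direct argument that re-runs the injectivity proof of $\hPsi$ (including the correctly flagged point that Lemma \ref{lem:j0} applies to the possibly non-positive elements of $\hat{\mJ}(G_1,G)$), whereas the paper simply notes that any two candidate images are preimages of the same point under $\hPsi$ and cites its injectivity.
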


\begin{proof}

Assume any point $(\bk^*, \bq^*) \in \dK(G,G_1)\times \RR^a$.
There exists $\bk_1 \in \mK_{\RR} (G_1,G)$ satisfying
\begin{equation} \label{eq:phi_wd_1}
(G, \bk^*) \sim (G_1, \bk_1).
\end{equation}
From Theorem \ref{thm:cb}, $(G_1, \bk_1)$ has a unique steady state $\bx^* \in (\bx_0 + \mS_{G_1} )\cap\mathbb{R}^n_{>0}$.
Further, Remark \ref{rmk:de_ss} shows that $(G, \bk^*)$ and $(G_1, \bk_1)$ share the same steady states, thus $\bx^* \in (\bx_0 + \mS_{G_1} )\cap\mathbb{R}^n_{>0}$ is also the unique steady state of $(G, \bk^*)$, i.e. $\bx^*$ is well-defined.
Moreover, from \eqref{def:phi_kq} we obtain 
\begin{equation}
\label{eq:phi_wd_2}
\bp^* = ( \langle \bk^*, \bB_1 \rangle, \langle \bk^*, \bB_2 \rangle, \ldots, \langle \bk^*, \bB_b \rangle),
\end{equation}
which is well-defined.

Since $\bk_1 \in \mK_{\RR} (G_1,G)$, then $(G_1, \bk_1)$ and its steady state $\bx^*$ give rise to the complex-balanced flux system $(G_1, \bJ^*)$, such that
\[
\bJ^* = (J^*_{\by\rightarrow \by'})_{\by\rightarrow \by' \in E_1} \in \mJ (G_1,G)
\ \text{ with } \ 
J^*_{\by\rightarrow \by'} = k_{1, \by\rightarrow \by'} (\bx^*)^{\by}.
\]
Suppose $\bq^* = (q^*_1, q^*_2, \ldots, q^*_a) \in \RR^a$, we construct the vector $\hbJ^* \in \RR^{|E|}$ as 
\[
\hbJ^* = \bJ^* + \sum\limits^a_{i=1} (q^*_i - \langle \bJ^*, \bA_i \rangle ) \bA_i \in \hat{\mJ} (G_1,G).
\]
Note that $\{ \bA_i \}^a_{i=1}$ is an orthonormal basis of $\eJ(G_1)$, together with Lemma \ref{lem:j0}, we obtain
\begin{equation} \notag
\bq^* = ( \langle \hbJ^*, \bA_1 \rangle, \langle \hbJ^*, \bA_2 \rangle, \ldots, \langle \hbJ^*, \bA_a \rangle )
\ \text{ and } \
(G_1, \hbJ^*) \sim (G_1, \bJ^*).
\end{equation}
Using Proposition \ref{prop:craciun2020efficient} and \eqref{eq:phi_wd_1}, we set 
$\hbk_1 = (\hat{k}_{1, \by\rightarrow \by'})_{\by\rightarrow \by' \in E_1}$ with
$\hat{k}_{1, \by\rightarrow \by'} = \frac{\hat{J}^*_{\by\rightarrow \by'}}{{(\bx^*)}^{\by}}$ and derive 
\begin{equation} \notag
(G_1, \hat{\bk}_1) \sim (G_1, \bk_1) \sim (G, \bk^*).
\end{equation}
Together with \eqref{eq:phi_wd_2}, we conclude that $(\hbJ^*, \bx^*, \bp^*)$ satisfies \eqref{def:phi_k} and \eqref{def:phi_kq}.

Now suppose there exists another $(\hbJ^{**}, \bx^{**}, \bp^{**}) \in \hat{\mJ} (G_1,G)\times [(\bx_0 + \mS_{G_1} )\cap\mathbb{R}^n_{>0}] \times \mathbb{R}^b$, which also satisfies \eqref{def:phi_k} and \eqref{def:phi_kq}.
From Definition \ref{def:hpsi}, we deduce
\begin{equation} \notag
\hPsi (\hbJ^*, \bx^*, \bp^*) = \hPsi (\hbJ^{**}, \bx^{**}, \bp^{**}) = (\bk^*, \bq^*).
\end{equation}
Since $\hPsi$ is proved to be bijective in Lemma \ref{lem:hpsi_bijective}, then
\begin{equation} \notag
(\hbJ^*, \bx^*, \bp^*) = (\hbJ^{**}, \bx^{**}, \bp^{**}).
\end{equation}
Thus, we conclude that $\Phi$ is well-defined.

\smallskip

Next, for any $(\hbJ, \bx, \bp) \in \hat{\mJ} (G_1,G)\times [(\bx_0 + \mS_{G_1} )\cap\mathbb{R}^n_{>0}] \times \mathbb{R}^b$, suppose that
\begin{equation} 
\label{eq:phi_wd_3}
\hPsi (\hbJ, \bx, \bp) = (\bk, \bq) \in \dK(G,G_1)\times \RR^a.
\end{equation}
From Definition \ref{def:hpsi} and Corollary \ref{cor:hpsi_ss}, together with \eqref{def:phi_k} and \eqref{def:phi_kq}, we have
\begin{equation} 
\label{eq:phi_wd_4}
\Phi (\bk, \bq) = (\hbJ, \bx, \bp).
\end{equation}
This implies $\Phi = \hPsi^{-1}$.
Recall that $\hPsi$ is bijective, thus its inverse $\hPsi^{-1}$ is well-defined and bijective. Therefore, we prove the lemma.
\end{proof}

\begin{lemma}
\label{lem:inverse_cts_q}
Consider the map $\Phi$ in Definition \ref{def:phi}, suppose any fixed $\bk \in \dK(G,G_1)$ and $\bq_1, \bq_2 \in \RR^a$, then 
\begin{equation}
\label{eq:inverse_cts_q_1}
\Phi (\bk, \bq_1) - \Phi (\bk, \bq_2)
= \left(\sum\limits^{a}_{i=1} \varepsilon_i \bA_i, \mathbf{0}, \mathbf{0}\right),
\end{equation}
where $\bq_1 - \bq_2 := (\varepsilon_1, \varepsilon_2, \ldots, \varepsilon_a) \in \RR^a$.
\end{lemma}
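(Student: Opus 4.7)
The plan is to write $\Phi(\bk,\bq_1) = (\hat{\bJ}_1,\bx_1,\bp_1)$ and $\Phi(\bk,\bq_2) = (\hat{\bJ}_2,\bx_2,\bp_2)$, and then examine the three components of the difference separately. The third component is immediate, the second uses uniqueness of the complex-balanced steady state, and the first is the substantive part.

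First I would handle $\bp_1 - \bp_2$ and $\bx_1 - \bx_2$. By \eqref{def:phi_kq}, each $\bp_j$ equals $(\langle \bk,\bB_1\rangle,\ldots,\langle \bk,\bB_b\rangle)$, which depends only on the fixed $\bk$; hence $\bp_1 = \bp_2$. For the middle coordinate, Definition \ref{def:phi} says each $\bx_j \in (\bx_0+\mS_{G_1})\cap\RR^n_{>0}$ is the steady state of $(G,\bk)$, and by the way $\Phi$ is constructed (through the realization $(G,\bk) \sim (G_1,\hat\bk_{1,j})$ where $(G_1,\hat\bk_{1,j})$ is complex-balanced with steady state $\bx_j$) together with Theorem \ref{thm:cb}(b), this steady state is the unique one inside the invariant polyhedron. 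Since $\bk$ is fixed, $\bx_1 = \bx_2$.

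The main step is to identify $\hat{\bJ}_1 - \hat{\bJ}_2$. From \eqref{def:phi_k} with the common $\bx := \bx_1 = \bx_2$, both $\hat{\bk}_{1,1}$ and $\hat{\bk}_{1,2}$ are dynamically equivalent to $(G,\bk)$, so $(G_1,\hat{\bk}_{1,1}) \sim (G_1,\hat{\bk}_{1,2})$. Applying Proposition \ref{prop:craciun2020efficient}, the two flux systems $(G_1,\hat{\bJ}_1)$ and $(G_1,\hat{\bJ}_2)$ are flux equivalent, so Lemma \ref{lem:j0}(a) gives $\hat{\bJ}_1 - \hat{\bJ}_2 \in \mD(G_1)$. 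On the other hand, both $\hat{\bJ}_j \in \hat{\mJ}(G_1,G) \subset \hat{\mathcal{J}}(G_1)$ by \eqref{def:hat_j_g1_g} and \eqref{def:hat_j_g1} (since $\mJ(G_1,G) \subset \mathcal{J}(G_1) \subset \hat{\mathcal{J}}(G_1)$ and $\spn\{\bA_i\} = \eJ(G_1) \subset \hat{\mathcal{J}}(G_1)$), so each $\hat{\bJ}_j$ satisfies the complex-balance condition at every vertex of $V_1$; taking the difference preserves this. Combining with $\hat{\bJ}_1 - \hat{\bJ}_2 \in \mD(G_1)$ and Definition \ref{def:j0} of $\eJ(G_1)$, we conclude $\hat{\bJ}_1 - \hat{\bJ}_2 \in \eJ(G_1) = \spn\{\bA_1,\ldots,\bA_a\}$.

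To finish, write $\hat{\bJ}_1 - \hat{\bJ}_2 = \sum_{i=1}^a c_i \bA_i$. Using \eqref{def:phi_kq} and the orthonormality of $\{\bA_i\}_{i=1}^a$, the $i$-th coordinate of $\bq_1 - \bq_2$ equals $\langle \hat{\bJ}_1 - \hat{\bJ}_2,\bA_i\rangle = c_i$. Hence $c_i = \varepsilon_i$, giving $\hat{\bJ}_1 - \hat{\bJ}_2 = \sum_{i=1}^a \varepsilon_i \bA_i$, which together with $\bx_1 - \bx_2 = \mathbf{0}$ and $\bp_1 - \bp_2 = \mathbf{0}$ yields \eqref{eq:inverse_cts_q_1}. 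I expect the main obstacle to be justifying the transition from dynamic equivalence of the $(G_1,\hat{\bk}_{1,j})$ to membership of $\hat{\bJ}_1 - \hat{\bJ}_2$ in $\eJ(G_1)$ rather than just $\mD(G_1)$; this is handled by the observation that every element of $\hat{\mJ}(G_1,G)$ individually satisfies the complex-balance equations.
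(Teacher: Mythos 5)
Your proposal is correct. The ingredients are the same ones the paper uses (uniqueness of the positive steady state of $(G,\bk)$ in the invariant polyhedron, Proposition \ref{prop:craciun2020efficient}, Lemma \ref{lem:j0}, and the orthonormality of $\{\bA_i\}_{i=1}^a$), but the organization is genuinely different: the paper \emph{constructs} the candidate $\hat{\bJ}+\sum_{i=1}^a\varepsilon_i\bA_i$, checks that it satisfies the defining conditions \eqref{def:phi_k}--\eqref{def:phi_kq} for $\Phi(\bk,\bq+\bep)$, and then invokes the uniqueness statement in Lemma \ref{lem:phi_wd} to conclude it must be the actual output; you instead compare the two outputs directly and show $\hat{\bJ}_1-\hat{\bJ}_2\in\mD(G_1)\cap\hat{\mathcal{J}}(G_1)=\eJ(G_1)$, then read off the coefficients from the $\bq$-coordinates. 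Your route is slightly more self-contained in that it does not lean on the well-definedness machinery a second time, at the cost of having to verify the inclusion $\hat{\mJ}(G_1,G)\subset\hat{\mathcal{J}}(G_1)$ explicitly (which you do correctly, since $\mathcal{J}(G_1)\subset\hat{\mathcal{J}}(G_1)$ and $\eJ(G_1)\subset\hat{\mathcal{J}}(G_1)$ and the latter is a linear subspace). One small looseness: you describe $(G_1,\hat\bk_{1,j})$ as complex-balanced with steady state $\bx_j$, but $\hat\bk_{1,j}$ may have negative entries; what you actually need, and what Lemma \ref{lem:phi_wd} provides, is that $\bx_j$ is the unique steady state of $(G,\bk)$ in $(\bx_0+\mS_{G_1})\cap\RR^n_{>0}$ via some genuinely complex-balanced realization $(G_1,\bk_1)$, so the conclusion $\bx_1=\bx_2$ stands.
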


\begin{proof}

Given fixed $\bk \in \dK(G,G_1)$, consider any $\bq \in \RR^a$, such that
\begin{equation} \notag
\Phi (\bk, \bq) = (\hat{\bJ},\bx, \bp).
\end{equation}
From Definition \ref{def:phi}, $\bx \in (\bx_0 + \mS_{G_1} )\cap\mathbb{R}^n_{>0}$ is the steady state of $(G, \bk)$. Further, we have
\begin{equation} 
\label{eq:inverse_cts_q_3}
(G, \bk) \sim (G_1, \hat{\bk}_1) \ \text{ with } \ \hat{k}_{1, \by\rightarrow \by'} = \frac{\hat{J}_{\by\rightarrow \by'}}{{\bx}^{\by}},
\end{equation} 
and
\begin{equation} 
\label{eq:inverse_cts_q_4}
\bp = ( \langle \bk, \bB_1 \rangle, \langle \bk, \bB_2 \rangle, \ldots, \langle \bk, \bB_b \rangle), 
\ \
\bq = ( \langle \hat{\bJ}, \bA_1 \rangle, \langle \hat{\bJ}, \bA_2 \rangle, \ldots, \langle \hat{\bJ}, \bA_a \rangle ).
\end{equation}

\smallskip

Now consider any vector $\bep = (\varepsilon_1, \varepsilon_2, \ldots, \varepsilon_a) \in \RR^a$, it follows that \eqref{eq:inverse_cts_q_1} is equivalent to show the following:
\begin{equation}
\label{eq:inverse_cts_q_2}
\Phi (\bk, \bq + \bep) = (\hat{\bJ} + \sum\limits^{a}_{i=1} \varepsilon_i \bA_i,\bx, \bp).
\end{equation}

Suppose $\Phi (\bk, \bq + \bep) = (\hbJ^{\bep}, \bx^{\bep}, \bp^{\bep})$. 
From Definition \ref{def:phi} and Lemma \ref{lem:phi_wd}, $\bx^{\bep}$ is the unique steady state of $(G, \bk)$ in the invariant polyhedron $ (\bx_0 + \mS_{G_1} )\cap\mathbb{R}^n_{>0}$. 
Recall that $\bx \in (\bx_0 + \mS_{G_1} )\cap\mathbb{R}^n_{>0}$ is also the steady state of $(G, \bk)$, thus we have
\begin{equation}
\label{eq:inverse_cts_q_6}
\bx = \bx^{\bep}.
\end{equation}
Since $\hat{\bJ} \in \hmJ (G_1,G)$ and $\{ \bA_i \}^a_{i=1}$ is an orthonormal basis of $\eJ(G_1)$, we get
\[
(G_1, \hat{\bJ}) \sim (G_1, \hat{\bJ} + \sum\limits^{a}_{i=1} \varepsilon_i \bA_i).
\]
Using Proposition \ref{prop:craciun2020efficient} and \eqref{eq:inverse_cts_q_3}, by setting $\hat{J}_{\by\rightarrow \by'} + \sum\limits^{a}_{i=1} \varepsilon_i \bA_{i, \by\rightarrow \by'} = \hat{k}^{\bep}_{1, \by\rightarrow \by'} \bx^{\by}$, we obtain  
\begin{equation} 
\label{eq:inverse_cts_q_5}
(G_1, \hat{\bk}^{\bep}_1) \sim (G_1, \hat{\bk}_1) \sim (G, \bk).
\end{equation}
Under direct computation, for $1 \leq i \leq a$,
\begin{equation} \notag
\langle \hat{\bJ} + \sum\limits^{a}_{i=1} \varepsilon_i \bA_i, \bA_i \rangle
= \langle \hat{\bJ}, \bA_i \rangle + \langle \sum\limits^{a}_{i=1} \varepsilon_i \bA_i, \bA_i \rangle = \langle \hat{\bJ}, \bA_i \rangle + \varepsilon_i.
\end{equation}
From Lemma \ref{lem:phi_wd} and \eqref{eq:inverse_cts_q_5}, we get
\begin{equation}
\label{eq:inverse_cts_q_7}
\hbJ^{\bep} = \hat{\bJ} + \sum\limits^{a}_{i=1} \varepsilon_i \bA_i.
\end{equation}
Finally, from Definition \ref{def:phi} and \eqref{eq:inverse_cts_q_4}, it is clear that
\begin{equation}
\label{eq:inverse_cts_q_8}
\bp^{\bep} = ( \langle \bk, \bB_1 \rangle, \langle \bk, \bB_2 \rangle, \ldots, \langle \bk, \bB_b \rangle ) = \bp.
\end{equation}
Combining Equations~\eqref{eq:inverse_cts_q_6}, \eqref{eq:inverse_cts_q_7} and \eqref{eq:inverse_cts_q_8}, we prove \eqref{eq:inverse_cts_q_2}.
\end{proof}

Here we present Proposition \ref{prop:inverse_cts_k}, which is the key for the continuity of $\hPsi^{-1}$.

\begin{proposition}
\label{prop:inverse_cts_k}
Consider the map $\Phi$ in Definition \ref{def:phi} and any fixed $\bq \in \RR^a$, then $\Phi (\cdot, \bq)$ is continuous with respect to $\bk$.
\end{proposition}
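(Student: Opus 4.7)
Write $\Phi(\bk, \bq) = (\hat{\bJ}, \bx, \bp)$ and establish continuity of each of the three components in $\bk$ separately. The component $\bp = (\langle \bk, \bB_1\rangle, \ldots, \langle \bk, \bB_b\rangle)$ is manifestly linear in $\bk$, hence continuous; the substance of the proposition lies in the continuity of $\bx$ and of $\hat{\bJ}$.

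For $\bx$, I would apply the implicit function theorem to
\[
F(\bx, \bk) := \sum_{\by \to \by' \in E} k_{\by \to \by'} \bx^{\by}(\by' - \by),
\]
so that $\bx$ is characterized by $F(\bx, \bk) = 0$ together with $\bx \in (\bx_0 + \mS_{G_1}) \cap \RR^n_{>0}$. Since $\bk^* \in \dK(G, G_1)$, Definition~\ref{def:de_realizable} supplies $\bk_1^* \in \mK_{\RR}(G_1, G)$ with $(G, \bk^*) \sim (G_1, \bk_1^*)$; dynamical equivalence forces the Jacobians of the two systems to agree at $\bx^*$, and Theorem~\ref{thm:jacobian} then yields $\ker \partial_{\bx} F(\bx^*, \bk^*) = \mS_{G_1}^{\perp}$. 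A rank-nullity count identifies the image of $\partial_{\bx} F(\bx^*, \bk^*)$ with $\mS_{G_1}$, so the restriction $\partial_{\bx} F(\bx^*, \bk^*) \colon \mS_{G_1} \to \mS_{G_1}$ is a linear isomorphism. Applying the implicit function theorem to $F$ restricted to the affine slice $\bx^* + \mS_{G_1}$ produces $\bx = \bx(\bk)$ as a smooth, hence continuous, function of $\bk$ near $\bk^*$.

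For $\hat{\bJ}$, once $\bx(\bk)$ is continuous the defining relations \eqref{def:phi_k} and \eqref{def:phi_kq} become the linear system
\begin{align*}
\sum_{\by_0 \to \by' \in E_1} \frac{\hat{J}_{\by_0 \to \by'}}{\bx^{\by_0}}(\by' - \by_0) &= \sum_{\by_0 \to \by \in E} k_{\by_0 \to \by}(\by - \by_0), \quad \by_0 \in V \cup V_1, \\
\langle \hat{\bJ}, \bA_i \rangle &= q_i, \quad 1 \leq i \leq a,
\end{align*}
in the unknown $\hat{\bJ}$, together with the constraint $\hat{\bJ} \in \hmJ(G_1, G)$. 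The coefficients depend continuously on $(\bk, \bx(\bk))$, and by Lemma~\ref{lem:phi_wd} the solution is unique. I would then adapt the transversality argument from Lemma~\ref{lem:hpsi_cts}: the tangent spaces at $\hat{\bJ}^*$ to the solution sets of the two equation blocks are $\mD(G_1)$ and $\eJ(G_1)^{\perp}$ respectively, while by Lemma~\ref{lem:j_g1_g_cone} the tangent of $\hmJ(G_1, G)$ at $\hat{\bJ}^*$ is $\spn\{\bv_1, \ldots, \bv_k\}$, a subspace of $\hat{\mathcal{J}}(G_1)$. The main obstacle is verifying
\[
\mD(G_1) \cap \eJ(G_1)^{\perp} \cap \spn\{\bv_1, \ldots, \bv_k\} = \{\mathbf{0}\},
\]
which I would deduce from $\hat{\mathcal{J}}(G_1) \cap \mD(G_1) = \eJ(G_1)$ (immediate from Definition~\ref{def:j0} and \eqref{def:hat_j_g1}): any element of this triple intersection lies in $\eJ(G_1)$ and is simultaneously orthogonal to $\eJ(G_1)$, so it vanishes. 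With this transversality in hand, the implicit function theorem yields $\hat{\bJ}$ as a continuous function of $(\bk, \bx(\bk), \bq)$, and combining with the continuity of $\bx(\bk)$ completes the proof.
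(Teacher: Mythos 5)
Your proposal is correct in substance but takes a genuinely different route from the paper for the harder half of the argument. The paper proves continuity of the $\bx$-component through a chain of three results: a quantitative transfer lemma (Lemma \ref{lem:key_1}) producing nearby complex-balanced realizations $\hbk_1$ with $\|\hbk_1-\bk_1\|\le C\varepsilon$, an implicit-function-theorem argument on the realization $(G_1,\bk_1)$ (Lemma \ref{lem:key_2}), and their combination (Lemma \ref{lem:key_3}); you instead apply the implicit function theorem directly to the vector field of $(G,\bk)$, which is legitimate because dynamical equivalence makes that vector field literally identical to the one of its complex-balanced realization, so Theorem \ref{thm:jacobian} applies verbatim. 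More significantly, for the $\hbJ$-component the paper runs a sequential compactness argument (Proposition \ref{lem:key_4}, Step 2: normalize $\bv^i=\hbJ_i-\bJ_i$, extract a convergent subsequence via Bolzano--Weierstrass, and derive a contradiction from $\bw^*\in\eJ(G_1)\cap\eJ(G_1)^\perp$), whereas you replace this entirely by a static linear-algebra observation: $\hbJ$ is the unique solution, within the fixed subspace $\spn\{\bv_1,\dots,\bv_k\}\subseteq\hat{\mathcal{J}}(G_1)$, of a linear system whose coefficients depend continuously on $(\bk,\bx(\bk),\bq)$, and uniqueness follows from $\mD(G_1)\cap\eJ(G_1)^\perp\cap\hat{\mathcal{J}}(G_1)=\eJ(G_1)\cap\eJ(G_1)^\perp=\{\mathbf{0}\}$ --- the same orthogonality that drives the paper's contradiction, but used once rather than along a subsequence. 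Your version is arguably cleaner and makes the mechanism more transparent. Two presentational points deserve tightening. First, your implicit function theorem step maps an $s$-dimensional slice into $\RR^n$; you must either project onto $\mS_{G_1}$ or append the constraints $\langle\bx-\bx_0,\bv_j\rangle=0$ as the paper does in \eqref{eq:key_2_2}, and then note that for $\hbk\in\dK(G,G_1)$ the resulting branch coincides with the unique steady state in the invariant polyhedron. Second, $\hmJ(G_1,G)$ is a cone, so speaking of its ``tangent space'' at a possibly non-interior point is loose; what your argument actually uses is existence of a solution in the cone (Lemma \ref{lem:phi_wd}) together with uniqueness in its linear span, after which continuity follows from continuous dependence of the solution of an injective linear system on its coefficients --- no implicit function theorem is needed for this linear step.
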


To prove Proposition~\ref{prop:inverse_cts_k}, we need to show Lemmas \ref{lem:key_1} - \ref{lem:key_3} and Proposition \ref{lem:key_4}. 
The following is the overview of the process. First, Lemma \ref{lem:key_1} shows that if two reaction rate vectors in $\dK (G, G_1)$ are close enough, then there exist two reaction rate vectors (dynamically equivalent respectively) in $\mK (G_1, G_1)$ such that their distance can be controlled. 
Second, in Lemma \ref{lem:key_2} we show that given a complex-balanced rate vector $\bk_1 \in \mK (G_1)$, there exists a neighborhood around $\bk_1$ of $\RR^{E_1}_{>0}$, in which the steady states of the system associated with the rate constants vary continuously. Combining Lemma \ref{lem:key_1} with \ref{lem:key_2}, we prove in Lemma \ref{lem:key_3} that given a reaction rate vector $\bk \in \dK (G, G_1)$, there exists an open neighborhood $\bk \in U \subset \RR^{E}$, such that the steady states of the system associated with the rate vectors in $U$ vary continuously.
Finally, in Proposition \ref{lem:key_4} we prove that given a complex-balanced rate vector $\bk^* \in \mK (G_1, G_1)$, for any sequence $\bk_i \to \bk^*$ in $\mK (G_1, G_1)$, there exists another sequence of reaction rate vectors (dynamically equivalent respectively) $\hbk_i \to \bk^*$ in $\RR^{E_1}$, and all associated fluxes from reaction rate vectors have the same projections on $\eJ (G_1)$. 

\medskip

\begin{lemma} \label{lem:key_1}

Let $\bk \in \dK (G,G_1)$. Then we have the following:
\begin{enumerate}[label=(\alph*)]
\item There exists $\bk_1 \in \mK (G_1)$ satisfying $(G, \bk) \sim (G_1, \bk_1)$.
\item There exist constants $\varepsilon = \varepsilon (\bk) > 0$ and $C = C (\bk) > 0$, such that for any $\hbk \in \dK (G,G_1)$ with $\| \hbk - \bk \| \leq \varepsilon$, there exists $\hbk_1 \in \mK (G_1,G_1)$ that satisfies
\begin{enumerate}[label=(\roman*)]
\item $\|\hbk_1 - \bk_1 \| \leq C \varepsilon $.
\item $(G,\hbk) \sim (G_1, \hbk_1)$.
\end{enumerate}
\end{enumerate}
\end{lemma}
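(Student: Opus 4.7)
For part (a), I would chase definitions: $\bk \in \dK(G, G_1)$ means, by Definition~\ref{def:de_realizable}, that $(G, \bk)$ is disguised toric on $G_1$, i.e.\ $\RR$-realizable on $G_1$ through some $\bk_1 \in \mK(G_1)$, which (combining Definitions~\ref{def:mas_realizable} and~\ref{def:de}) is precisely $(G, \bk) \sim (G_1, \bk_1)$.

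For part (b), the plan is to exploit the linearity of the dynamical-equivalence relation and produce $\hbk_1$ via a continuous linear section. First I would introduce the linear subspace
\[
\Pi := \{(\bk^*, \bk_1^*) \in \RR^E \times \RR^{E_1} : (G, \bk^*) \sim (G_1, \bk_1^*)\},
\]
which is cut out by finitely many linear equations (Definition~\ref{def:de}). The projection $\pi_E \colon \Pi \to \RR^E$ is a continuous linear map whose kernel equals $\{\mathbf{0}\} \times \mD(G_1)$ by Lemma~\ref{lem:d0}. Picking a linear complement $M$ of this kernel inside $\Pi$ produces a continuous linear right inverse $s = (\pi_E|_M)^{-1} \colon \pi_E(\Pi) \to M$, which I write as $s(\bk^*) = (\bk^*, s_1(\bk^*))$ for a continuous linear map $s_1 \colon \pi_E(\Pi) \to \RR^{E_1}$; since $\dK(G, G_1) \subseteq \pi_E(\Pi)$, the map $s_1$ is defined at both $\bk$ and any nearby $\hbk \in \dK(G, G_1)$.

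Because both $s_1(\bk)$ and the $\bk_1 \in \mK(G_1)$ from part~(a) are dynamically equivalent to $(G, \bk)$ on $G_1$, Lemma~\ref{lem:d0} places the offset $\bd := \bk_1 - s_1(\bk)$ in $\mD(G_1)$. For $\hbk \in \dK(G, G_1)$ with $\|\hbk - \bk\| \leq \varepsilon$, I would define
\[
\hbk_1 := s_1(\hbk) + \bd.
\]
Item (ii) follows: $(G, \hbk) \sim (G_1, s_1(\hbk))$ by the section property, and Lemma~\ref{lem:d0} combined with $\bd \in \mD(G_1)$ upgrades this to $(G, \hbk) \sim (G_1, \hbk_1)$. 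Item (i) is immediate from linearity of $s_1$:
\[
\|\hbk_1 - \bk_1\| = \|s_1(\hbk) - s_1(\bk)\| \leq \|s_1\|_{\mathrm{op}}\,\varepsilon,
\]
so I take $C := \|s_1\|_{\mathrm{op}}$. Positivity $\hbk_1 \in \RR^{E_1}_{>0}$ is guaranteed by shrinking $\varepsilon$, since $\bk_1 \in \mK(G_1) \subset \RR^{E_1}_{>0}$.

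The main obstacle I foresee is verifying $\hbk_1 \in \mK(G_1)$ — that $(G_1, \hbk_1)$ is genuinely complex-balanced and not merely dynamically equivalent to one. Translating a rate vector by an arbitrary element of $\mD(G_1)$ preserves the ODE but typically does not preserve complex-balancedness, because at the flux level the correction would have to sit in the proper subspace $\eJ(G_1)$ rather than in all of $\mD(G_1)$. To close this gap I plan to combine two ingredients. First, part~(a) applied to $\hbk$ produces some $\bar{\bk}_1 \in \mK(G_1)$ with $(G, \hbk) \sim (G_1, \bar{\bk}_1)$, so $(G_1, \hbk_1) \sim (G_1, \bar{\bk}_1)$ and the two share a common positive steady state (Remark~\ref{rmk:de_ss}). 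Second, $\mK_{\RR}(G_1, G)$ is semialgebraic and locally a submanifold near $\bk_1$ (Lemma~\ref{lem:semi_algebaic} and Remark~\ref{rmk:semi_algebaic}); a transversality/persistence argument between this smooth locus and the affine fiber $s_1(\hbk) + \mD(G_1)$ should then force $\hbk_1$ itself onto $\mK(G_1)$ for $\varepsilon$ small enough. Making this transversality step precise is the technical crux of the proof.
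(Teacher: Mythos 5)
Your construction through part~(b), item~(ii), and the Lipschitz bound is correct, but it takes a genuinely different route from the paper. The paper argues vertex by vertex: for each $\by \in V \cup V_1$ it writes the perturbation $\Delta_{\by}(\hbk,\bk) = \sum_{\by \to \tilde\by \in G}(\hat k_{\by\to\tilde\by} - k_{\by\to\tilde\by})(\tilde\by - \by)$, which lies in $W_{\by} = \spn\{\by'-\by : \by\to\by'\in G_1\}$, expands it in an orthogonal basis of $W_{\by}$ whose elements are fixed positive combinations of the reaction vectors of $G_1$ out of $\by$, and distributes the resulting coefficients as explicit corrections $\hat c_{\by\to\by'}$ to $\bk_1$, giving the bound $C\varepsilon$ by collecting estimates over all vertices. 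Your approach instead works globally: you observe that the dynamical-equivalence relation cuts out a linear subspace $\Pi \subset \RR^E \times \RR^{E_1}$ with $\ker(\pi_E|_\Pi) = \{\mathbf{0}\}\times\mD(G_1)$, choose a linear section $s_1$, and translate by the fixed offset $\bd = \bk_1 - s_1(\bk) \in \mD(G_1)$. This is cleaner and yields the operator norm of $s_1$ as the constant $C$; the paper's version is more hands-on but produces the same kind of estimate. Both are valid.

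The problem is your final paragraph. The lemma does \emph{not} ask for $\hbk_1 \in \mK(G_1)$; it asks for $\hbk_1 \in \mK(G_1,G_1)$, which, as the notation is used throughout Section~\ref{sec:continuity} (see Proposition~\ref{lem:key_4}(b), where $\bk \in \mK(G_1,G_1)$ is characterized by the existence of $\bk' \in \mK(G_1)$ with $(G_1,\bk)\sim(G_1,\bk')$, and the strict inclusion $\mK(G_1)\subset\mK(G_1,G_1)$), means only that $\hbk_1$ is a positive rate vector on $G_1$ whose system is dynamically equivalent to a complex-balanced system on $G_1$. You have already established exactly this: positivity of $\hbk_1$ for small $\varepsilon$, and $(G_1,\hbk_1)\sim(G_1,\bar\bk_1)$ with $\bar\bk_1\in\mK(G_1)$ obtained by applying part~(a) to $\hbk$. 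This is precisely how the paper closes its own Step~2. So the transversality/persistence argument you flag as the ``technical crux'' is unnecessary --- and it is fortunate that it is, because forcing $\hbk_1$ itself onto $\mK(G_1)$ while keeping it in the affine fiber $s_1(\hbk)+\mD(G_1)$ is not achievable in general (at the flux level the admissible correction would have to lie in $\eJ(G_1)$, exactly as you note). Delete that paragraph and replace it with the two-line observation above, and your proof is complete.
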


\begin{proof}
For part $(a)$, from Definitions \ref{def:mas_realizable} and \ref{def:de_realizable}, given $\bk \in \dK (G,G_1)$, the system $(G, \bk)$ is disguised toric on $G_1$, that is, there exists $\bk_1 \in \mK_{\RR} (G_1, G) \subset \mK (G_1)$ with $(G, \bk) \sim (G_1, \bk_1)$. 

\smallskip

Now we prove part $(b)$.\\
\textbf{Step 1: } 
Let $\by \in G \cup G_1$ be a fixed vertex and consider the following vector space:
\begin{equation} \notag
W_{\by} = \spn \{ \by' - \by: \by \rightarrow \by' \in G_1 \}.
\end{equation}
Let $d(\by) = \dim (W_{\by})$. Then there exists an orthogonal basis of $W_{\by}$ denoted by: 
\begin{equation} \label{eq:key_1_1}
\{ \bw_1, \bw_2, \ldots, \bw_{d (\by)} \}.
\end{equation}
For each $\bw_i$ in \eqref{eq:key_1_1}, there exist positive $\{ c_{i, \by \rightarrow \by'} \}_{\by \rightarrow \by' \in G_1}$, that satisfy
\begin{equation} \label{eq:key_1_2}
\bw_i = \sum\limits_{\by \rightarrow \by' \in G_1} c_{i, \by \rightarrow \by'} (\by' - \by).
\end{equation}
Let $\hbk \in \dK (G,G_1)$. From Definition \ref{def:de_realizable}, 
$\sum\limits_{\by \rightarrow \tilde{\by} \in G} \hbk_{\by \rightarrow \tilde{\by}} (\tilde{\by} - \by)$ is realizable on $G_1$ at the vertex $\by \in G \cup G_1$. This implies that

\begin{equation} \label{eq:key_1_3}
\sum\limits_{\by \rightarrow \tilde{\by} \in G} \hbk_{\by \rightarrow \tilde{\by}} (\tilde{\by} - \by) \in W_{\by}.
\end{equation}
Since $\bk \in \dK (G,G_1)$, together with Equation~\eqref{eq:key_1_3}, we obtain
\begin{equation} 
\label{eq:key_1_Delta}
\Delta_{\by} (\hbk, \bk) := \sum\limits_{\by \rightarrow \tilde{\by} \in G} ( \hbk_{\by \rightarrow \tilde{\by}} - \bk_{\by \rightarrow \tilde{\by}}) (\tilde{\by} - \by) \in W_{\by}.
\end{equation}

Assume that $\| \hbk - \bk \| \leq \varepsilon$. 
Consider all reaction vectors in $G$ and let $m = \max\limits_{\by \rightarrow \tilde{\by} \in G} \| \tilde{\by} - \by \|$, then there exists a constant $C_1 = m |E|$, such that 
\[
\| \Delta_{\by} (\hbk, \bk) \| 
\leq \sum\limits_{\by \rightarrow \tilde{\by} \in G} m \varepsilon = C_1 \varepsilon.
\]
On the other side, from \eqref{eq:key_1_1}, $\Delta_{\by} (\hbk, \bk)$ can be expressed as
\begin{equation} \label{eq:key_1_4}
\Delta_{\by} (\hbk, \bk) = \sum\limits^{d(\by)}_{i=1} \delta_i \bw_i
\ \text{ with } \
\delta_i \in \RR.
\end{equation}
Using \eqref{eq:key_1_4} and the orthogonal basis in \eqref{eq:key_1_1},  for any $1 \leq i \leq d (\by)$,
\begin{equation}
\label{eq:key_1_5}
| \delta_i | \leq \| \Delta_{\by} (\hbk, \bk) \| \leq C_1 \varepsilon.
\end{equation}
Inputting \eqref{eq:key_1_2} into \eqref{eq:key_1_4}, we get
\begin{equation}
\label{eq:key_1_6}
\Delta_{\by} (\hbk, \bk) 
= \sum\limits^{d(\by)}_{i=1} \delta_i \big( \sum\limits_{\by \rightarrow \by' \in G_1} c_{i, \by \rightarrow \by'} (\by' - \by) \big)
= \sum\limits_{\by \rightarrow \by' \in G_1} \big( \sum\limits^{d(\by)}_{i=1} \delta_i c_{i, \by \rightarrow \by'} \big) (\by' - \by).
\end{equation}
From \eqref{eq:key_1_5} and \eqref{eq:key_1_6}, there exists a constant $C_2$, such that for any $\by \rightarrow \by' \in G_1$,
\begin{equation}
\label{eq:key_1_7}
\big| \hat{c}_{\by \rightarrow \by'} := \sum\limits^{d(\by)}_{i=1} \delta_i c_{i, \by \rightarrow \by'} \big| 
\leq C_2 \varepsilon.
\end{equation}
Then we construct $\hbk_1$ as follows: 
\begin{equation}
\label{eq:key_1_8}
\hbk_{1, \by \rightarrow \by'} := \bk_{1, \by \rightarrow \by'} + \hat{c}_{\by \rightarrow \by'}
\ \text{ for any } \
\by \rightarrow \by' \in G_1.
\end{equation} 
Consider all reaction vectors in $G_1$, together with \eqref{eq:key_1_7}, we derive 
\begin{equation} 
\label{eq:key_1_estimate}
\| \hbk_1 - \bk_1 \| 
\leq \sum\limits_{\by \rightarrow \by' \in G_1} |\hat{c}_{\by \rightarrow \by'}|
\leq \sum\limits_{\by \rightarrow \by' \in G_1} C_2 \varepsilon \leq C_2 |E_1| \varepsilon.
\end{equation}

Similarly, we can go through all vertices in $G \cup G_1$, and take the above steps to update $\hbk_1$.
For every vertex, we can derive an estimate similar to \eqref{eq:key_1_estimate}.
Collecting the estimates on all vertices, we can find a constant $C$, such that
\[
\| \hbk_1 - \bk_1 \| \leq C \varepsilon
\ \text{ for any } \
\| \hbk - \bk \| \leq \varepsilon.
\]

\textbf{Step 2: } 
We claim that there exists a sufficiently small constant $\varepsilon = \varepsilon (\bk) > 0$, such that for any $\hbk$  with $\| \hbk - \bk \| \leq \varepsilon$, then $\hbk_1$ defined in \eqref{eq:key_1_8} satisfies
\begin{equation}
\label{eq:key_1_claim}
(G, \hbk) \sim (G_1, \hbk_1)
\ \text{ and } \
\hbk_1 \in \mK (G_1,G_1).
\end{equation}
Recall \eqref{eq:key_1_3} and \eqref{eq:key_1_Delta}, at vertex $\by \in G \cup G_1$,
\begin{equation}
\label{eq:key_1_9}
\Delta_{\by} (\hbk, \bk)
= \sum\limits_{\by \rightarrow \tilde{\by} \in G} \hbk_{\by \rightarrow \tilde{\by}} (\tilde{\by} - \by)
- \sum\limits_{\by \rightarrow \tilde{\by} \in G} \bk_{\by \rightarrow \tilde{\by}} (\tilde{\by} - \by).
\end{equation}
On the other hand, from \eqref{eq:key_1_6}-\eqref{eq:key_1_8}, at vertex $\by \in G \cup G_1$,
\begin{equation}
\label{eq:key_1_10}
\Delta_{\by} (\hbk, \bk)
= \sum\limits_{\by \rightarrow \by' \in G_1} \hbk_{1, \by \rightarrow \by'} (\by' - \by)
- \sum\limits_{\by \rightarrow \by' \in G_1} \bk_{1, \by \rightarrow \by'} (\by' - \by).
\end{equation}
Note that $(G, \bk) \sim (G_1, \bk_1)$ implies that, at vertex $\by \in G \cup G_1$,
\[
\sum\limits_{\by \rightarrow \tilde{\by} \in G} \bk_{\by \rightarrow \tilde{\by}} (\tilde{\by} - \by) = \sum\limits_{\by \rightarrow \by' \in G_1} \bk_{1, \by \rightarrow \by'} (\by' - \by).
\]
Together with \eqref{eq:key_1_9} and \eqref{eq:key_1_10}, we have, at vertex $\by \in G \cup G_1$,
\begin{equation}
\sum\limits_{\by \rightarrow \tilde{\by} \in G} \hbk_{\by \rightarrow \tilde{\by}} (\tilde{\by} - \by)
= \sum\limits_{\by \rightarrow \by' \in G_1} \hbk_{1, \by \rightarrow \by'} (\by' - \by).
\end{equation}
Hence, we derive $(G, \hbk) \sim (G_1, \hbk_1)$. Moreover, since $\hbk \in \dK (G,G_1)$, there exists $\hbk^* \in \mK (G_1)$ with $(G, \hbk) \sim (G_1, \hbk^*)$, and thus
\[
(G_1, \hbk_1) \sim (G_1, \hbk^*).
\]

Recall that $\bk_1 \in \mK (G_1) \subset \RR^{E_1}_{>0}$, together with \eqref{eq:key_1_estimate}, there must exist a constant $\varepsilon = \varepsilon (\bk) > 0$, such that for any $\hbk$  with $\| \hbk - \bk \| \leq \varepsilon$, we have $\hbk_1 \in \RR^{E_1}_{>0}$.
Therefore, we obtain $\hbk_1 \in \mK (G_1,G_1)$ and prove the claim.
\end{proof}

\begin{lemma} \label{lem:key_2}

Suppose $\bx_0 \in \mathbb{R}^n_{>0}$ and $\bk_1 \in \mK (G_1)$, then there exists an open set $U \subset \RR^{E_1}_{>0}$ containing $\bk_1$, such that there exists a unique continuously differentiable function
\begin{equation} \label{lem:key_2_1}
T : U \rightarrow (\bx_0 + \mS_{G_1} )\cap\mathbb{R}^n_{>0}.
\end{equation}
such that for any $\hbk \in U$,
\begin{equation} \label{lem:key_2_2}
T (\hbk) = \hbx,
\end{equation}
where $\hbx \in (\bx_0 + \mS_{G_1} )\cap\mathbb{R}^n_{>0}$ is the steady state of $(G_1, \hbk)$.
\end{lemma}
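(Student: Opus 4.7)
The plan is to apply the Implicit Function Theorem to the map $F : \RR^{E_1}_{>0} \times \big((\bx_0 + \mS_{G_1}) \cap \RR^n_{>0}\big) \to \mS_{G_1}$ defined by $F(\bk, \bx) = \sum_{\by \to \by' \in E_1} k_{\by \to \by'} \bx^{\by} (\by' - \by)$. The codomain is indeed $\mS_{G_1}$ by the definition of the stoichiometric subspace, and the second factor of the domain is an open subset of the affine subspace $\bx_0 + \mS_{G_1}$, which is a smooth manifold whose tangent space at every point equals $\mS_{G_1}$. Because $\bk_1 \in \mK(G_1)$, Theorem \ref{thm:cb} supplies a unique steady state $\bx_1 \in (\bx_0 + \mS_{G_1}) \cap \RR^n_{>0}$ of $(G_1, \bk_1)$, so $F(\bk_1, \bx_1) = \mathbf{0}$, and $F$ is smooth (in fact polynomial in $\bx$ and linear in $\bk$).

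The key analytical step is to verify that the partial derivative $D_{\bx} F|_{(\bk_1, \bx_1)} : \mS_{G_1} \to \mS_{G_1}$ is a linear isomorphism. This map is precisely the restriction to $\mS_{G_1}$ of the Jacobian $\mathbf{J}_{\bf}|_{\bx = \bx_1}$ appearing in Theorem \ref{thm:jacobian}. That theorem yields $(\ker \mathbf{J}_{\bf}|_{\bx = \bx_1})^{\perp} = \mS_{G_1}$, equivalently $\ker \mathbf{J}_{\bf}|_{\bx = \bx_1} = \mS_{G_1}^{\perp}$, so the restriction to $\mS_{G_1}$ has trivial kernel. Being an endomorphism of the finite-dimensional space $\mS_{G_1}$, it is therefore a linear isomorphism.

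With the hypotheses of the Implicit Function Theorem satisfied at $(\bk_1, \bx_1)$, I obtain open neighborhoods $\bk_1 \in U \subset \RR^{E_1}_{>0}$ and $\bx_1 \in V \subset (\bx_0 + \mS_{G_1}) \cap \RR^n_{>0}$ together with a unique continuously differentiable map $T : U \to V$ satisfying $F(\hat{\bk}, T(\hat{\bk})) = \mathbf{0}$ for every $\hat{\bk} \in U$ and $T(\bk_1) = \bx_1$. By construction $T(\hat{\bk})$ is a steady state of $(G_1, \hat{\bk})$ lying in the invariant polyhedron, and the local uniqueness provided by IFT delivers the uniqueness of the continuously differentiable selection asserted in the lemma.

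I expect the main potential pitfall to be conceptual rather than computational: one has to be careful to regard the codomain of $F$ as the linear subspace $\mS_{G_1}$ (not all of $\RR^n$), so that the source tangent space $\mS_{G_1}$ and the target $\mS_{G_1}$ have equal dimension and IFT applies cleanly without an artificial rank deficiency coming from conservation laws. Once this setup is in place, Theorem \ref{thm:jacobian} does all the heavy lifting, and no further estimates are required.
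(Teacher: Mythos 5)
Your proposal is correct and follows essentially the same route as the paper: both arguments hinge on Theorem \ref{thm:jacobian} to show that the Jacobian of the mass-action vector field is nondegenerate transverse to the conservation laws, and then invoke the implicit function theorem at $(\bk_1,\bx_1)$. The only difference is bookkeeping — you restrict the domain to the affine slice $\bx_0+\mS_{G_1}$ and the codomain to $\mS_{G_1}$, whereas the paper augments the $s$ independent components of $\bf$ with the $n-s$ linear equations $\bx\cdot\bv_i=\bx_0\cdot\bv_i$ to obtain an invertible square system; these are equivalent setups.
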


\begin{proof}

Given $\bx_0 \in \mathbb{R}^n_{>0}$ and $\bk_1 \in \mK (G_1)$, Theorem \ref{thm:cb} shows the system $(G_1, \bk_1)$ has a unique steady state $\bx^* \in (\bx_0 + \mS_{G_1}) \cap \mathbb{R}^n_{>0}$.
Consider the system $(G_1, \bk_1)$ as follows:
\begin{equation} \label{eq:key_2_0}
\frac{d\bx}{dt} = \bf (\bk_1, \bx) := (\bf_1, \bf_2, \ldots, \bf_n)^{\intercal}
= \sum_{\by_i \rightarrow \by_j \in E_1} k_{1, \by_i \rightarrow \by_j} \bx^{\by_i}(\by_j - \by_i).
\end{equation}
Suppose $\dim (\mS_{G_1}) = s \leq n$. This implies that there exist exactly $s$ linearly independent components among $\bf (\bk_1, \bx)$.
Without loss of generality, we assume that $\{\bf_1, \ldots, \bf_s \}$ are linearly independent components, and every $\bf_i$ with $s+1 \leq i \leq n$ can be represented as a linear combination of $\{\bf_i \}^{s}_{i=1}$.

Using Theorem~\ref{thm:jacobian}, we obtain that
\begin{equation} \notag
\ker \Big( \big[ \frac{\partial \bf_i}{ \partial \bx_j} \big]_{1 \leq i, j \leq n}  \big|_{\bx = \bx^*} \Big) = \mS^{\perp}_{G_1}.
\end{equation}
Together with the linear dependence among $\{ \bf_i (\bx) \}^{n}_{i=1}$, we derive 
\begin{equation} \label{eq:key_2_1}
\ker \Big( \big[ \frac{\partial \bf_i}{ \partial \bx_j} \big]_{1 \leq i \leq s, 1 \leq j \leq n}  \big|_{\bx = \bx^*} \Big) = \mS^{\perp}_{G_1}.
\end{equation}
Consider the orthogonal complement $\mS^{\perp}_{G_1}$ to the stoichiometric subspace in $\mathbb{R}^n$, which admits an orthonormal basis given by
\[
\{\bv_1, \bv_2, \ldots, \bv_{n-s} \}.
\]

Now we construct a system of $n$ equations $\bg (\bk, \bx) = (\bg_1, \bg_2, \ldots, \bg_n )^{\intercal}$ as follows:
\begin{equation}
\label{eq:key_2_2}
\bg_i (\bk, \bx) =
\begin{cases}
\bf_i (\bk, \bx), 
& \text{ for } 1 \leq i \leq s, \\[5pt]
\bx \cdot \bv_{i-s} - \bx_0 \cdot \bv_{i-s}, & \text{ for } s+1 \leq i \leq n.
\end{cases}
\end{equation}
From \eqref{eq:key_2_0}, we can check that $\bg (\bk, \bx) = \mathbf{0}$ if and only if $\bx \in \bx_0 + \mS_{G_1}$ is the steady state of the system $(G_1, \bk)$.
Thus, $(\bk_1, \bx^*)$ can be considered as a solution to $\bg (\bk, \bx) = \mathbf{0}$, that is, $\bg (\bk_1, \bx^*) = \mathbf{0}$.

Computing the Jacobian matrix of $\bg (\bk, \bx)$ as in Equation~\eqref{eq:key_2_2}, we get
\begin{equation} \notag
\mathbf{J}_{\bg, \bx} =
\begin{pmatrix}
\big[ \frac{\partial \bf_i}{ \partial \bx_j} \big]_{1 \leq i \leq s, 1 \leq j \leq n} \\[5pt]
\bv_1 \\
\ldots \\
\bv_{n-s}
\end{pmatrix}.
\end{equation}
From~\eqref{eq:key_2_1}, we have 
\[
\ker \big( \mathbf{J}_{\bg, \bx} |_{\bk = \bk_1, \bx = \bx^*} \big) \subseteq \mS^{\perp}_{G_1}.
\]
Since the last $n-s$ rows of $\mathbf{J}_{\bg} (\bx)$, $\{\bv_1, \bv_2, \ldots, \bv_{n-s} \}$, is a orthonormal basis of $\mS^{\perp}_{G_1}$, we derive
\begin{equation} \label{eq:key_2_3}
\det \big( \mathbf{J}_{\bg, \bx} |_{\bk = \bk_1, \bx = \bx^*} \big) \neq 0.
\end{equation}
Hence, the Jacobian matrix $\mathbf{J}_{\bg, \bx}$ is invertible at $(\bk, \bx) = (\bk_1, \bx^*)$.
Further, note that
$\bg (\bk, \bx)$ is continuously differentiable. 
Using the implicit function theorem, for any $\hbk \in U$, we have
\begin{equation} \notag
T (\hbk) = \hbx,
\end{equation}
where $\hbx \in (\bx_0 + \mS_{G_1} )\cap\mathbb{R}^n_{>0}$ is the steady state of $(G_1, \hbk)$.

\end{proof}

\begin{lemma}
\label{lem:key_3}

Suppose $\bx_0\in\mathbb{R}^n_{>0}$ and $\bk \in \dK (G,G_1)$, then there exists an open set $U \subset \dK (G,G_1)$ containing $\bk$, such that there exists a unique continuous function
\begin{equation} \label{eq:key_3_1}
h : U \rightarrow (\bx_0 + \mS_{G_1} )\cap\mathbb{R}^n_{>0}.
\end{equation}
such that for any $\hbk \in U$,
\begin{equation} \label{eq:key_3_2}
h (\hbk) = \hbx,
\end{equation}
where $\hbx \in (\bx_0 + \mS_{G_1} )\cap\mathbb{R}^n_{>0}$ is the steady state of $(G, \hbk)$.
\end{lemma}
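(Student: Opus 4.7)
The plan is to construct the continuous selection $h$ by composing the two preceding lemmas. First, using Lemma \ref{lem:key_1}(a), I fix some $\bk_1 \in \mK(G_1)$ with $(G, \bk) \sim (G_1, \bk_1)$, and apply Lemma \ref{lem:key_2} to this $\bk_1$ to obtain an open neighborhood $U' \subset \RR^{E_1}_{>0}$ of $\bk_1$ together with a continuously differentiable steady-state map $T : U' \to (\bx_0 + \mS_{G_1}) \cap \RR^n_{>0}$. Next, Lemma \ref{lem:key_1}(b) provides constants $\varepsilon > 0$ and $C > 0$, together with an explicit construction that, for each $\hbk \in \dK(G, G_1)$ with $\|\hbk - \bk\| \leq \varepsilon$, produces some $\hbk_1 \in \mK(G_1, G_1)$ satisfying $(G, \hbk) \sim (G_1, \hbk_1)$ and $\|\hbk_1 - \bk_1\| \leq C\varepsilon$. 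After shrinking $\varepsilon$ so that all such $\hbk_1$ lie in $U'$, I set $U := \{\hbk \in \dK(G, G_1) : \|\hbk - \bk\| < \varepsilon\}$ and define $h(\hbk) := T(\hbk_1)$.

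Three things then need to be verified. That $h(\hbk)$ is a steady state of $(G, \hbk)$ in $(\bx_0 + \mS_{G_1}) \cap \RR^n_{>0}$ follows because $T(\hbk_1)$ is by Lemma \ref{lem:key_2} the steady state of $(G_1, \hbk_1)$ in this polyhedron, and Remark \ref{rmk:de_ss} guarantees that dynamically equivalent systems share the same steady-state set. Uniqueness of this steady state---and hence of the function $h$ itself, regardless of the auxiliary choices of $\bk_1$ and $\hbk_1$---comes from Theorem \ref{thm:cb}: any other candidate steady state of $(G, \hbk)$ in this polyhedron would be a second complex-balanced steady state of the toric system $(G_1, \hbk_1)$, which is impossible. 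Continuity of $h$ then reduces to continuity of the assignment $\hbk \mapsto \hbk_1$ composed with $T$.

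The delicate point is that Lemma \ref{lem:key_1}(b) only asserts the \emph{existence} of $\hbk_1$, so to extract continuity I would inspect the construction inside its proof. There, $\hbk_1$ is obtained by adding to $\bk_1$ correction coefficients $\hat{c}_{\by \to \by'}$, each a fixed linear combination of the projections of $\Delta_{\by}(\hbk, \bk)$ onto a pre-chosen orthogonal basis of $W_{\by}$. Since $\bk$, $\bk_1$, the bases $\{\bw_i\}$, and the decomposition coefficients $c_{i, \by \to \by'}$ are all fixed once $\bk$ is chosen, the resulting map $\hbk \mapsto \hbk_1$ is affine-linear in $\hbk$, hence continuous. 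Composing with the $C^1$ map $T$ from Lemma \ref{lem:key_2} then establishes continuity of $h$ on $U$. I expect the main obstacle to be exactly this step: unpacking the existence proof of Lemma \ref{lem:key_1}(b) into a concrete, continuous selection and checking that a single shrinkage of $\varepsilon$ is compatible with both the inclusion $\hbk_1 \in U'$ and the constraint $\hbk \in \dK(G, G_1)$.
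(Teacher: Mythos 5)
Your proposal is correct and follows essentially the same route as the paper: fix $\bk_1$ via Lemma \ref{lem:key_1}(a), obtain the steady-state map $T$ from Lemma \ref{lem:key_2}, use the constants $\varepsilon, C$ from Lemma \ref{lem:key_1}(b) to shrink the neighborhood so that $\hbk_1$ lands in the domain of $T$, and set $h(\hbk) = T(\hbk_1)$. Your extra step of unpacking the construction of $\hbk_1$ to check that the selection $\hbk \mapsto \hbk_1$ is affine-linear (hence continuous) is a point the paper's proof passes over silently, and it is a correct and worthwhile addition rather than a deviation.
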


\begin{proof}

Given $\bk \in \dK (G, G_1)$ and $\bx_0 \in \mathbb{R}^n_{>0}$, there exists 
$\bk_1 \in \mK (G_1)$ such that 
\[
(G, \bk) \sim (G_1, \bk_1).
\]
Theorem \ref{thm:cb} shows the system $(G_1, \bk_1)$ has a unique steady state $\bx^* \in (\bx_0 + \mS_{G_1}) \cap \mathbb{R}^n_{>0}$.
Since $(G, \bk) \sim (G_1, \bk_1)$, $\bx^* \in (\bx_0 + \mS_{G_1}) \cap \mathbb{R}^n_{>0}$ is also the unique steady state of the system $(G, \bk)$.

Analogously, for any $\hbk \in \dK (G,G_1)$, it has a unique steady state of the system $(G, \hbk)$ in $(\bx_0 + \mS_{G_1}) \cap \mathbb{R}^n_{>0}$.
Thus, the function $h$ in \eqref{eq:key_3_1}-\eqref{eq:key_3_2} is well-defined.
It remains to prove that there exists an open set $U \subset \dK (G, G_1)$ containing $\bk$ and $h$ is continuous with respect to the domain $U$.

From Lemma~\ref{lem:key_2}, there exists an open set $U_1 \subset \RR^{E_1}_{>0}$ containing $\bk_1$, such that there exists a unique continuously differentiable function
\begin{equation} \label{eq:key_3_4}
T : U_1 \rightarrow (\bx_0 + \mS_{G_1} )\cap\mathbb{R}^n_{>0}.
\end{equation}
such that for any $\hbk \in U_1$,
\begin{equation} \notag
T (\hbk) = \hbx,
\end{equation}
where $\hbx \in (\bx_0 + \mS_{G_1} )\cap\mathbb{R}^n_{>0}$ is the steady state of $(G_1, \hbk)$.
Using \eqref{eq:key_3_4}, we can find a constant $\varepsilon_1 = \varepsilon_1 (\bk)$ such that
\begin{equation}
\label{eq:key_3_B}
B = \{ \bk^* \in \RR^{E_1}_{>0}: \|\bk^* - \bk_1 \| \leq \varepsilon_1 \} \subseteq U_1.
\end{equation}
Hence, it is clear that $T$ is continuous with respect to the domain $B$.

On the other hand, from Lemma \ref{lem:key_1}, there exist $\varepsilon = \varepsilon (\bk) > 0$ and $C = C (\bk) > 0$, such that for any $\hbk \in \dK (G,G_1)$ with $\| \hbk - \bk \| \leq \varepsilon$, there exists $\hbk_1 \in \mK (G_1,G_1)$ 
satisfying
\begin{equation} \label{eq:key_3_3}
\|\hbk_1 - \bk_1 \| \leq C \varepsilon
\ \text{ and } \
(G,\hbk) \sim (G_1, \hbk_1).
\end{equation}
Now pick $\varepsilon_2 = \min ( \varepsilon, \varepsilon_1 / C)$, and consider the following set:
\begin{equation} \notag
U := \{ \bk^* \in \RR^{E}_{>0}: \|\bk^* - \bk \| < \varepsilon_2 \} 
\ \cap \ \dK (G,G_1).
\end{equation}
Using~\eqref{eq:key_3_3}, we have that for any $\bk^* \in U$, there exists $\bk^*_1 \in \mK (G_1,G_1)$ such that
\begin{equation} \label{eq:key_3_5}
\| \bk^*_1 - \bk_1 \| \leq C \varepsilon_2 = \varepsilon_1
\ \text{ and } \
(G, \bk^*) \sim (G_1, \bk^*_1).
\end{equation}
From \eqref{eq:key_3_B}, this shows that $\bk^*_1 \in B$.
Further, from \eqref{eq:key_3_4} and \eqref{eq:key_3_3}, we obtain
\[
h (\bk^*) = T (\bk^*_1)
\]
Since $T$ is continuous with respect to the domain $B$, together with \eqref{eq:key_3_5} and $\bk^*_1 \in B$, we conclude that $h$ is continuous on $U$.
\end{proof}

\begin{proposition}
\label{lem:key_4}

Suppose $\bx_0 \in \RR^n_{>0}$ and $\bk^* \in \mK (G_1) \subset \mK (G_1,G_1)$. For any $\bk \in \mK (G_1,G_1)$, then we have the following:
\begin{enumerate}[label=(\alph*)]
\item The system $(G_1, \bk^*)$ has a unique steady state $\bx^* \in (\bx_0 + \mS_{G_1}) \cap \mathbb{R}^n_{>0}$.

\item The system $(G_1, \bk)$ has a unique steady state $\bx \in (\bx_0 + \mS_{G_1}) \cap \mathbb{R}^n_{>0}$.

\item 
Consider the steady state $\bx^*$ in part $(a)$ and $\bx$ obtained in part $(b)$. Then there exists a unique $\hbk \in \RR^{E_1}$, such that
\begin{enumerate}[label=(\roman*)]
\item \label{lem:key_4_a} 
$(G_1, \bk) \sim (G_1, \hbk)$.

\item\label{lem:key_4_b} 
$\hbJ := (\hat{k}_{\by \to \by'} \bx^{\by})_{\by \to \by' \in E_1} \in \hat{\mathcal{J}} (G_1)$.

\item \label{lem:key_4_c} 
$\langle \hbJ, \bA_i \rangle = \langle \bJ^*, \bA_i \rangle$ for any $1 \leq i \leq a$, where $\bJ^* := (k^*_{\by \to \by'} (\bx^*)^{\by})_{\by \to \by' \in E_1}$.

\end{enumerate}

\item  
 For any sequence $\{ \bk_i \}^{\infty}_{i = 1}$ in $\mK (G_1,G_1)$ converging to $\bk^*$, there exist a unique corresponding sequence $\{ \hbk_i \}^{\infty}_{i = 1}$ obtained from part $(c)$. 
Moreover, the sequence $\{ \hbk_i \}^{\infty}_{i = 1}$ satisfies
\begin{equation}
\notag
\hbk_i \to \bk^* 
\ \text{ as } \
i \to \infty.
\end{equation}

\end{enumerate}
\end{proposition}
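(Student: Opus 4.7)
The plan is to handle parts (a) and (b) as immediate consequences of Theorem \ref{thm:cb}, then construct $\hbk$ in part (c) explicitly via a projection adjustment of the flux vector, and finally use Lemma \ref{lem:key_2} to pass the convergence through the steady-state map in part (d).

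For parts (a) and (b), since $\bk^*\in\mK(G_1)$ and $\mK(G_1,G_1)\subset\mK(G_1)$, both $(G_1,\bk^*)$ and $(G_1,\bk)$ are complex-balanced systems, so Theorem~\ref{thm:cb}(b) supplies unique steady states $\bx^*$ and $\bx$ in the invariant polyhedron $(\bx_0+\mS_{G_1})\cap\RR^n_{>0}$.

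For part (c), I would start from the natural flux $\bJ:=(k_{\by\to\by'}\bx^{\by})_{\by\to\by'\in E_1}$, which lies in $\mathcal{J}(G_1)\subset\hat{\mathcal{J}}(G_1)$ because $\bx$ is a complex-balanced steady state of $(G_1,\bk)$. I then define
\begin{equation} \notag
\hbJ := \bJ + \sum_{i=1}^{a}\bigl(\langle \bJ^*,\bA_i\rangle - \langle \bJ,\bA_i\rangle\bigr)\bA_i,
\end{equation}
and set $\hat k_{\by\to\by'}:=\hat J_{\by\to\by'}/\bx^{\by}$. Since $\{\bA_i\}_{i=1}^a$ is an orthonormal basis of $\eJ(G_1)$, the correction lies in $\eJ(G_1)\subset\mD(G_1)$, so Lemma~\ref{lem:j0} yields $(G_1,\bJ)\sim(G_1,\hbJ)$ and in particular $\hbJ\in\hat{\mathcal{J}}(G_1)$, giving (ii). Proposition~\ref{prop:craciun2020efficient} then converts this flux equivalence into the dynamical equivalence (i). Condition (iii) is immediate from orthonormality. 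For uniqueness, if another $\hbk'$ satisfies (i)--(iii), then letting $\hbJ':=(\hat k'_{\by\to\by'}\bx^{\by})$ and applying Proposition~\ref{prop:craciun2020efficient} together with Lemma~\ref{lem:j0} gives $\hbJ'-\bJ\in\mD(G_1)$; since both $\bJ,\hbJ'\in\hat{\mathcal{J}}(G_1)$ satisfy the vertex-balance condition \eqref{def:hat_j_g1}, the difference $\hbJ'-\hbJ$ actually lies in $\eJ(G_1)=\spn\{\bA_i\}$, while condition (iii) forces it to be orthogonal to every $\bA_i$; hence $\hbJ'=\hbJ$ and $\hbk'=\hbk$.

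For part (d), given $\bk_i\to\bk^*$ in $\mK(G_1,G_1)$, I would apply Lemma~\ref{lem:key_2} at $\bk^*\in\mK(G_1)$ to obtain a neighborhood $U\subset\RR^{E_1}_{>0}$ of $\bk^*$ on which the steady-state assignment $T$ is continuously differentiable. For all sufficiently large $i$, $\bk_i\in U$, so the unique steady states $\bx_i$ of $(G_1,\bk_i)$ in $(\bx_0+\mS_{G_1})\cap\RR^n_{>0}$ satisfy $\bx_i=T(\bk_i)\to T(\bk^*)=\bx^*$. Consequently the associated fluxes $\bJ_i:=(k_{i,\by\to\by'}\bx_i^{\by})$ converge to $\bJ^*$, the correction $\sum_j(\langle \bJ^*,\bA_j\rangle-\langle \bJ_i,\bA_j\rangle)\bA_j$ vanishes in the limit, so $\hbJ_i\to\bJ^*$, and dividing coordinatewise by $\bx_i^{\by}\to(\bx^*)^{\by}>0$ yields $\hbk_i\to\bk^*$ as desired.

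The main technical obstacle is uniqueness in part (c): one must carefully argue that the two natural subspace constraints ($\hbJ'-\bJ\in\mD(G_1)$ from dynamical equivalence, and the vertex-balance condition \eqref{def:hat_j_g1} inherited from $\hat{\mathcal{J}}(G_1)$) combine to place the difference in $\eJ(G_1)$ and not merely in $\mD(G_1)$. Everything else reduces to invoking Theorem~\ref{thm:cb}, Proposition~\ref{prop:craciun2020efficient}, Lemma~\ref{lem:j0}, and the continuity statement in Lemma~\ref{lem:key_2}.
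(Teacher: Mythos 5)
There is a genuine gap, and it originates in a single inverted inclusion that then propagates through parts (b), (c), and (d). You assert that $\mK(G_1,G_1)\subset\mK(G_1)$, but the inclusion goes the other way: the statement of the proposition itself reads $\bk^*\in\mK(G_1)\subset\mK(G_1,G_1)$, and (as the proof of Lemma \ref{lem:key_1} makes clear) $\mK(G_1,G_1)$ consists of positive rate vectors $\bk$ on $G_1$ that are merely \emph{dynamically equivalent} to some complex-balanced $(G_1,\bk')$ with $\bk'\in\mK(G_1)$ --- the system $(G_1,\bk)$ itself need not be complex-balanced. Consequently, in part (b) you must first pass to such a realization $\bk'$ and apply Theorem \ref{thm:cb} to $(G_1,\bk')$, then transfer the steady state back via Remark \ref{rmk:de_ss}. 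More seriously, in part (c) your ``natural flux'' $\bJ=(k_{\by\to\by'}\bx^{\by})$ does \emph{not} lie in $\mathcal{J}(G_1)$, nor even in $\hat{\mathcal{J}}(G_1)$, because $\bx$ is only a steady state of $(G_1,\bk)$, not a complex-balanced one; the vertex-balance condition in \eqref{def:hat_j_g1} fails for $\bJ$ in general. Since your $\hbJ$ differs from $\bJ$ by an element of $\eJ(G_1)\subset\hat{\mathcal{J}}(G_1)$ and $\hat{\mathcal{J}}(G_1)$ is a linear subspace, your $\hbJ$ inherits the failure of vertex balance, so condition (ii) does not hold for it. The paper instead builds $\hbJ$ from $\bJ'=(k'_{\by\to\by'}\bx^{\by})\in\mathcal{J}(G_1)$, the flux of the complex-balanced realization, and only afterwards observes that $\hbJ-\bJ\in\mD(G_1)$ (via $\bJ'-\bJ\in\mD(G_1)$) to recover the dynamical equivalence (i). Note that $\bJ-\bJ'$ lies in $\mD(G_1)$ but generally not in $\eJ(G_1)=\mD(G_1)\cap\hat{\mathcal{J}}(G_1)$, so your $\hbJ$ and the correct one genuinely differ.

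This also undermines your part (d). For your (incorrect) $\hbJ_i=\bJ_i+\sum_j(\langle\bJ^*,\bA_j\rangle-\langle\bJ_i,\bA_j\rangle)\bA_j$, convergence would indeed follow immediately from $\bJ_i\to\bJ^*$. But the correct $\hbJ_i$ is anchored to $\bJ'_i$, the flux of a complex-balanced realization $\bk'_i$ of $\bk_i$, and nothing guarantees a priori that these realizations (or their fluxes) converge. The paper has to write $\hbJ_i=\bJ_i+\bv^i$ with $\bv^i\in\mD(G_1)$ and run a Bolzano--Weierstrass contradiction argument (Step 2 of the paper's proof) showing that any normalized limit of the $\bv^i$ would lie in both $\eJ(G_1)$ and $(\eJ(G_1))^{\perp}$, forcing $\|\bv^i\|_\infty\to 0$. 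That compactness step is the real content of part (d), and your proposal skips it entirely. Your uniqueness argument in (c) and your use of Lemma \ref{lem:key_2} to get $\bx^i\to\bx^*$ are structurally the same as the paper's and would survive the repair, but the construction of $\hbk$ and the convergence $\hbk_i\to\bk^*$ need to be redone along the lines above.
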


\begin{proof}

For part (a), since $\bk^* \in \mK (G_1)$, Theorem \ref{thm:cb} shows that the system $(G_1, \bk^*)$ has a unique steady state $\bx^* \in (\bx_0 + \mS_{G_1}) \cap \mathbb{R}^n_{>0}$.

\smallskip

For part (b), given $\bk \in \mK (G_1,G_1)$, there exists some $\bk' \in \mK (G_1)$, such that
\begin{equation}
\label{eq:key_4_3}
(G_1, \bk) \sim (G_1, \bk').
\end{equation}
Thus, by Theorem \ref{thm:cb}, the systems $(G_1, \bk)$ and $(G_1, \bk')$ share a unique steady state in $(\bx_0 + \mS_{G_1}) \cap \mathbb{R}^n_{>0}$, denoted by $\bx$. 

\smallskip

For part (c), define $\bJ' := (k'_{\by \to \by'} \bx^{\by})_{\by \to \by' \in E_1}$, then we construct a flux vector on $G_1$ as follows:
\begin{equation}
\label{eq:key_4_4}
\hbJ := \bJ' + \sum\limits^{a}_{i=1} (\langle \bJ^*, \bA_i \rangle - \langle \bJ', \bA_i \rangle) \bA_i.
\end{equation}
Under direct computation, we have
\begin{equation}
\label{eq:key_4_5}
\langle \hbJ, \bA_i \rangle = \langle \bJ^*, \bA_i \rangle
\ \text{ for any } \ 
1 \leq i \leq a.
\end{equation}
Note that $\bk' \in \mK (G_1)$ and $\{\bA_i \}^{a}_{i=1} \in \eJ(G) \subset \hat{\mathcal{J}} (G_1)$, then \eqref{eq:key_4_4} show that
\begin{equation}
\label{eq:key_4_5.5}
\bJ' \in \mathcal{J} (G_1)
\ \text{ and } \
\hbJ \in \hat{\mathcal{J}} (G_1).
\end{equation}
Consider the flux vector $\bJ := (k_{\by \to \by'} \bx^{\by})_{\by \to \by' \in E_1}$.
Using Proposition \ref{prop:craciun2020efficient} and \eqref{eq:key_4_3}, we deduce
\begin{equation} \notag
(G_1, \bJ) \sim (G_1, \bJ').
\end{equation}
From Lemma \ref{lem:j0}, this shows
$\bJ' - \bJ \in \mD (G_1)$.
Together with \eqref{eq:key_4_4}, we get
\begin{equation} \notag
\hbJ - \bJ \in \mD (G_1).
\end{equation}
Hence, we rewrite $\hbJ$ as
\begin{equation}
\label{eq:key_4_6}
\hbJ = \bJ + \bv  
\ \text{ with } \
\bv \in \mD (G_1).
\end{equation}
Now we set the reaction rate vector as
\begin{equation}
\label{eq:key_4_6.5}
\hbk := ( \frac{\hbJ}{\bx^{\by}} )_{\by \to \by' \in E_1} \in \RR^{E_1}.
\end{equation}

Using Proposition \ref{prop:craciun2020efficient} and \eqref{eq:key_4_6}, we obtain $(G_1, \bk) \sim (G_1, \hbk)$.
Together with \eqref{eq:key_4_5} and \eqref{eq:key_4_5.5}, we derive that the reaction rate vector $\hbk$ satisfies conditions \ref{lem:key_4_a}, \ref{lem:key_4_b} and \ref{lem:key_4_c}.

We now show the uniqueness of the vector $\hbk$. Suppose there exists another reaction rate vector $\hbk_1$ satisfying conditions \ref{lem:key_4_a}-\ref{lem:key_4_c}.
From the condition \ref{lem:key_4_a}, we have
\[
(G_1, \hbk) \sim (G_1, \hbk_1).
\]
From the condition \ref{lem:key_4_b}, we get 
\[
\hbJ_1 := (\hat{k}_{1, \by \to \by'} \bx^{\by})_{\by \to \by' \in E_1} \in \hat{\mathcal{J}} (G_1).
\]
Then Proposition \ref{prop:craciun2020efficient} and Lemma \ref{lem:j0} show
\[
(G_1, \hbJ) \sim (G_1, \hbJ_1)
\ \text{ and } \
\hbJ_1 - \hbJ \in \eJ (G_1).
\]
Using the condition \ref{lem:key_4_c}, we obtain
\[
\langle \hbJ, \bA_i \rangle = \langle \hbJ_1, \bA_i \rangle 
\ \text{ for any } \
1 \leq i \leq a.
\]
Since $\{\bA_i \}^{a}_{i=1}$ is an orthonormal basis of the subspace $\eJ(G)$, this implies that
\[
\hbJ_1 - \hbJ \in \big( \eJ (G_1) \big)^{\perp}.
\]
Hence, $\hbJ_1 - \hbJ = \mathbf{0}$ and $\hbk_1 = \hbk$. Therefore, we conclude the uniqueness.

\smallskip

For part (d), we will prove it in a sequence of three steps.

\smallskip

\textbf{Step 1: }
Assume a sequence of reaction rate vectors $\bk_i \in \mK (G_1,G_1)$ with $i \in \mathbb{N}$, such that 
\[
\bk_i \to \bk^*
\ \text{ as } \
i \to \infty.
\]
Analogously, there exists some $\bk'_i \in \mK (G_1)$, such that $(G_1, \bk_i) \sim (G_1, \bk'_i)$. 
Moreover, two systems $(G_1, \bk_i)$ and $(G_1, \bk'_i)$ share a unique steady state $\bx^i \in (\bx_0 + \mS_{G_1}) \cap \mathbb{R}^n_{>0}$.
Follow the steps in \eqref{eq:key_4_3}-\eqref{eq:key_4_5}, we obtain the corresponding sequences of flux vector as follows:
\begin{equation}
\begin{split} \label{eq:key_4_7}
& \bJ_i := (k_{i, \by \to \by'} (\bx^i)^{\by})_{\by \to \by' \in E_1}
\ \text{ with } \
i \in \mathbb{N},
\\& \bJ'_i := (k'_{i, \by \to \by'} (\bx^i)^{\by})_{\by \to \by' \in E_1}
\ \text{ with } \
i \in \mathbb{N}.
\end{split}
\end{equation}
and 
\begin{equation}
\label{eq:key_4_8}
\hbJ_i := \bJ'_i + \sum\limits^{a}_{j=1} (\langle \bJ^*, \bA_j \rangle - \langle \bJ'_i, \bA_j \rangle) \bA_j
\ \text{ with } \
i \in \mathbb{N}.
\end{equation}
Under direct computation, for any $i \in \mathbb{N}$,
\begin{equation}
\label{eq:key_4_8.5}
\langle \hbJ_i, \bA_j \rangle = \langle \bJ^*, \bA_j \rangle
\ \text{ for any } \ 
1 \leq j \leq a,
\end{equation}
and similar from \eqref{eq:key_4_5.5}, we have
\begin{equation}
\label{eq:key_4_12}
\hbJ_i \in \hat{\mathcal{J}} (G_1)
\ \text{ for any } \
i \in \mathbb{N}.
\end{equation}
Using Proposition \ref{prop:craciun2020efficient} and $(G_1, \bk_i) \sim (G_1, \bk'_i)$, we deduce 
\begin{equation} \notag
(G_1, \bJ_i) \sim (G_1, \bJ'_i)
\ \text{ for any } \
i \in \mathbb{N}.
\end{equation}
From Lemma \ref{lem:j0}, together with \eqref{eq:key_4_8}, we get
\begin{equation} \notag
\hbJ_i - \bJ_i \in \mD (G_1)
\ \text{ for any } \
i \in \mathbb{N}.
\end{equation}
Thus, for any $i \in \mathbb{N}$, $\hbJ_i$ can be expressed as
\begin{equation}
\label{eq:key_4_9}
\hbJ_i = \bJ_i + \bv^i
\ \text{ with } \
\bv^i \in \mD (G_1).
\end{equation}
On the other hand, using Lemma \ref{lem:key_2}, together with $\bk_i \to \bk^*$ as $i \to \infty$, we have
\begin{equation} \notag
\bx^i \to \bx^*
\ \text{ as } \ 
i \to \infty.
\end{equation}
Combining with \eqref{eq:key_4_7}, we derive that
\begin{equation}
\label{eq:key_4_10}
\bJ_i \to \bJ^*
\ \text{ as } \ 
i \to \infty.
\end{equation}

\smallskip

\textbf{Step 2: }
Now we claim that
\begin{equation} 
\label{eq:key_4_13}
\| \bv^i \|_{\infty} \to 0
\ \text{ as } \ 
i \to \infty.
\end{equation}
We prove this by contradiction. 
Suppose not, w.l.o.g. there exists a subsequence $\{\bv^{i_l} \}^{\infty}_{l=1}$, such that for any $l \in \mathbb{N}$,
\begin{equation} \notag
\| \bv^{i_l} \|_{\infty} \geq 1.
\end{equation}
Then we consider the sequence $\{ \bw^l \}^{\infty}_{l=1}$ as follows:
\begin{equation}
\label{eq:key_4_14}
\bw^{l} = \frac{\bv^{i_l}}{\| \bv^{i_l} \|_{\infty}}
\ \text{ with } \
l \in \mathbb{N}.
\end{equation}
It is clear that $\| \bw^{l} \|_{\infty} = 1$ for any $l \in \mathbb{N}$.
From the Bolzano–Weierstrass theorem, there exists a subsequence $\{ \bw^{l_j} \}^{\infty}_{j=1}$, such that
\begin{equation} \notag
\bw^{l_j} \to \bw^*
\ \text{ as } \ 
j \to \infty.
\end{equation}
Recall from \eqref{eq:key_4_9} and \eqref{eq:key_4_14}, we have for any $j \in \mathbb{N}$,
\begin{equation}
\label{eq:key_4_15}
\bw^{l_j} = \frac{\bv^{i_{l_j}}}{\| \bv^{i_{l_j}} \|_{\infty}} = \frac{1}{\| \bv^{i_{l_j}} \|_{\infty}}
\big( \hbJ_{i_{l_j}} - \bJ_{i_{l_j}} \big).
\end{equation}
Since $\bv^i \in \mD (G_1)$, together with $\| \bv^{i_l} \|_{\infty} \geq 1$, we obtain that 
\[
\bw^{l_j} \in \mD (G_1).
\]
Note that $\mD (G_1)$ is a linear subspace of finite dimension. Therefore, $\bw^{l_j} \to \bw^*$ implies
\begin{equation}
\label{eq:key_4_16}
\bw^* \in \mD (G_1).
\end{equation}

Let $\bz \in \big( \hat{\mathcal{J}} (G_1) \big)^{\perp}$.
From \eqref{eq:key_4_12}, we have for any $j \in \mathbb{N}$,
\begin{equation}
\label{eq:key_4_17}
\langle \hbJ_{i_{l_j}}, \bz \rangle = 0.
\end{equation}
From \eqref{eq:key_4_10} and $\bJ \in \mathcal{J} (G_1)$, we obtain 
\begin{equation}
\label{eq:key_4_18}
\langle \bJ_{i_{l_j}}, \bz \rangle \to \langle \bJ,  \bz \rangle = 0
\ \text{ as } \ 
j \to \infty.
\end{equation}
Using \eqref{eq:key_4_15}, \eqref{eq:key_4_17} and \eqref{eq:key_4_18}, together with $\| \bv^{i_l} \|_{\infty} \geq 1$ and $\bw^{l_j} \to \bw^*$, we derive 
\begin{equation} \notag
\langle \bw^{l_j}, \bz \rangle \to \langle \bw^*, \bz \rangle = 0.
\end{equation}
Since $\bz$ is arbitrary in $\big( \hat{\mathcal{J}} (G_1) \big)^{\perp}$, this shows $\bw^* \in \hat{\mathcal{J}} (G_1)$. 
Together with \eqref{eq:key_4_16}, we get
\begin{equation}
\label{eq:key_4_19}
\bw^* \in \eJ (G_1).
\end{equation}

Recall that $\{\bA_i \}^{a}_{i=1}$ is an orthonormal basis of the subspace $\eJ(G)$. Without loss of generality, we pick $\bA_1 \in \eJ(G)$. From \eqref{eq:key_4_8.5} and \eqref{eq:key_4_10}, we get
\begin{equation} \notag
\langle \hbJ_{i_{l_j}} - \bJ_{i_{l_j}}, \bA_1 \rangle
= \langle \bJ^*, \bA_1 \rangle - \langle \bJ_{i_{l_j}}, \bA_1 \rangle \to 0
\ \text{ as } \ 
j \to \infty.
\end{equation}
Together with $\| \bv^{i_l} \|_{\infty} \geq 1$ and $\bw^{l_j} \to \bw^*$, we derive 
\begin{equation} \notag
\langle \bw^{l_j}, \bA_1 \rangle \to \langle \bw^*,  \bA_1 \rangle = 0.
\end{equation}
Analogously, we can get
$\langle \bw^*,  \bA_j \rangle = 0$ for any $1 \leq j \leq a$.
This shows that
\begin{equation}
\label{eq:key_4_20}
\bw^* \in \big( \eJ (G_1) \big)^{\perp}.
\end{equation}
Combining \eqref{eq:key_4_19} with \eqref{eq:key_4_20}, we conclude that $\bw^* = \mathbf{0}$. Since $\| \bw^{l} \|_{\infty} = 1$ for any $l \in \mathbb{N}$, this contradicts with $\bw^{l_j} \to \bw^*$ as $j \to \infty$.
Therefore, we prove the claim.

\smallskip

\textbf{Step 3: }
Using \eqref{eq:key_4_9}, \eqref{eq:key_4_10} and \eqref{eq:key_4_13}, we derive that
\begin{equation}
\label{eq:key_4_21}
\hbJ_i = \bJ_i + \bv^i \to \bJ^* 
\ \text{ as } \
i \to \infty.
\end{equation}
Since $\bJ \in \mathcal{J} (G_1) \subset \RR^{E_1}_{>0}$, there exists sufficiently large $N$, such that
\begin{equation} \notag
\hbJ_i \in \RR^{E_1}_{>0}
\ \text{ for any } \
i > N.
\end{equation}
Together with \eqref{eq:key_4_12} and Remark \ref{rmk:hat_j_g1_g}, we obtain that 
\[
\hbJ_i \in \hat{\mathcal{J}} (G_1) \cap \RR^{|E_1|}_{>0} = \mathcal{J} (G_1)
\ \text{ for any } \
i > N.
\]
Following \eqref{eq:key_4_6.5}, we set $\{ \hbk_i\}^{\infty}_{i=1}$ as follows: 
\begin{equation}
\label{eq:key_4_22}
\hbk_i := \big( \frac{\hat{J}_{i, \by \to \by'} }{(\bx^i)^{\by}} \big)_{\by \to \by' \in E_1}
\ \text{ with } \
i \in \mathbb{N}.
\end{equation}
Note that $\bx^i \in (\bx_0 + \mS_{G_1}) \cap \mathbb{R}^n_{>0}$ and $\hbJ_i \in \mathcal{J} (G_1)$ for any $i > N$, we get 
\begin{equation} \notag
\hbk_i \in \mK (G_1)
\ \text{ for any } \
i > N.
\end{equation}
Using \eqref{eq:key_4_9} and Proposition \ref{prop:craciun2020efficient}, we derive
\begin{equation} \notag
(G_1, \bk_i) \sim (G_1, \hbk_i).
\end{equation}
Finally, using $\hbJ_i \to \bJ^*$ and $\bx^i \to \bx^*$, together with $\bJ^* = (k^*_{\by \to \by'} (\bx^*)^{\by})_{\by \to \by' \in E_1}$, we have
\begin{equation}
\hbk_i \to \bk^*
\ \text{ as } \
i \to \infty.
\end{equation}
Therefore, we conclude the proof of this Proposition.
\end{proof}

Now we are ready to prove Proposition~\ref{prop:inverse_cts_k}.

\begin{proof}[Proof of Proposition \ref{prop:inverse_cts_k}]

Given fixed $\bq = (q_1, q_2, \ldots, q_a) \in \RR^a$, consider $\bk \in \dK(G,G_1)$ such that
\begin{equation} \notag
\Phi (\bk, \bq) = (\hat{\bJ},\bx, \bp).
\end{equation}
Follow definition, there exists $\bk_1 \in \mK (G_1) \subset \mK_{\RR} (G_1,G)$ satisfying 
\[
(G, \bk) \sim (G_1, \bk_1).
\]
Remark \ref{rmk:de_ss} shows $\bx \in (\bx_0 + \mS_{G_1} )\cap\mathbb{R}^n_{>0}$ is the steady state of $(G_1, \bk_1)$ and $(G, \bk)$.
From Lemma \ref{lem:phi_wd}, by setting
\begin{equation}
\label{eq:cts_k_1}
\bJ = \big( k_{1, \by\rightarrow \by'} \bx^{\by} \big)_{\by\rightarrow \by' \in E_1},
\end{equation}
then we obtain
\begin{equation}
\label{eq:cts_k_2}
\hbJ = \bJ + \sum\limits^a_{j=1} (q_j - \langle \bJ, \bA_j \rangle ) \bA_j \in \hat{\mJ} (G_1,G).
\end{equation}
Moreover, from \eqref{def:phi_kq} we obtain 
\begin{equation} \notag
\bp = ( \langle \bk, \bB_1 \rangle, \langle \bk, \bB_2 \rangle, \ldots, \langle \bk, \bB_b \rangle),
\end{equation}
which is continuous with respect to $\bk$.

\smallskip

Now assume any sequence $\{ \bk^i \}^{\infty}_{i = 1}$ in $\dK(G,G_1)$, such that
\begin{equation} 
\label{eq:cts_k_3}
\bk^i \to \bk
\ \text{ as } \
i \to \infty.
\end{equation}
Suppose $\Phi (\bk^i, \bq) = (\hbJ^i, \bx^i, \bp^i)$ with $i \in \mathbb{N}$, then $\bx^i \in (\bx_0 + \mS_{G_1} )\cap\mathbb{R}^n_{>0}$ is the steady state of $(G_1, \bk^i)$.
Using Lemma \ref{lem:key_3}, together with $\bk^i \to \bk$ in \eqref{eq:cts_k_3}, we derive
\begin{equation}
\label{eq:cts_k_4}
\bx^i \to \bx
\ \text{ as } \
i \to \infty.
\end{equation}
From Lemma \ref{lem:key_1}, there exists a sequence $\{ \bk^i_1 \}^{\infty}_{i = 1}$ in $\mK (G_1,G_1)$, such that
\begin{equation} \notag
(G, \bk^i) \sim (G_1, \bk^i_1)
\ \text{ for any } \
i \in \mathbb{N},
\end{equation}
and
\begin{equation}
\label{eq:cts_k_5}
\bk^i_1 \to \bk_1
\ \text{ as } \
i \to \infty.
\end{equation}
Then apply Proposition \ref{lem:key_4}, there exists a corresponding sequence $\{ \hbk_i \}^{\infty}_{i = 1}$, such that
\begin{equation} \notag
(G_1, \hbk_i) \sim (G_1, \bk^i_1)
\ \text{ for any } \
i \in \mathbb{N},
\end{equation}
Set $\hbJ_i = (\hat{k}_{i, \by \to \by'} (\bx^i)^{\by})_{\by \to \by' \in E_1}$, then for any $i \in \mathbb{N}$,
\begin{equation}
\label{eq:cts_k_6}
\hbJ_i \in \hat{\mathcal{J}} (G_1)
\ \text{ and } \
\langle \hbJ_i, \bA_j \rangle = \langle \bJ, \bA_j \rangle
\ \text{ for any } \
1 \leq j \leq a.
\end{equation}
Moreover, from $\bk^i_1 \to \bk_1$ in \eqref{eq:cts_k_5}, we have
\begin{equation} \notag
\hbk_i \to \bk_1 
\ \text{ as } \
i \to \infty.
\end{equation} 
Together with $\bx^i \to \bx$ in \eqref{eq:cts_k_4} and $\bJ$ in \eqref{eq:cts_k_1}, we derive that 
\begin{equation}
\label{eq:cts_k_7}
\hbJ_i \to \bJ 
\ \text{ as } \
i \to \infty.
\end{equation}
Since $\bJ \in \mathcal{J} (G_1)$ and $\hbJ_i \in \hat{\mathcal{J}} (G_1)$, this shows there exists a sufficiently large $N$, such that 
\begin{equation}
\label{eq:cts_k_8}
\hbJ_i \in \mathcal{J} (G_1)
\ \text{ for any } \
i > N.
\end{equation}
Note that $(G_1, \hbk_i) \sim (G_1, \bk^i_1) \sim (G_1, \bk^i)$, thus $\bx^i$ is also the steady state of $(G_1, \hbk_i)$.
Since $\hbJ_i = (\hat{k}_{i, \by \to \by'} (\bx^i)^{\by})_{\by \to \by' \in E_1}$, together with \eqref{eq:cts_k_8}, we deduce
\begin{equation} \notag
\hbk_i \in \mK (G_1)
\ \text{ for any } \
i > N.
\end{equation}
Note that $\Phi (\bk^i, \bq) = (\hbJ^i, \bx^i, \bp^i)$. 
From \eqref{eq:cts_k_2},  we obtain
\begin{equation} \notag
\hbJ^i = \hbJ_i + \sum\limits^a_{j=1} (q_j - \langle \hbJ_i, \bA_j \rangle ) \bA_j
\ \text{ for any } \
i > N.
\end{equation}
Using \eqref{eq:cts_k_6} and \eqref{eq:cts_k_7}, we have
\begin{equation} \notag
\hbJ^i \to \bJ
\ \text{ as } \
i \to \infty.
\end{equation}

Recall that $\Phi (\bk, \bq) = (\bJ, \bx, \bp)$. Suppose any sequence $\bk^i \to \bk$ with $\Phi (\bk^i, \bq) = (\hbJ^i, \bx^i, \bp^i)$, we show the continuity on $\bp$, $\bx^i \to \bx$ and $\hbJ^i \to \bJ$. Therefore, we conclude that $\Phi (\cdot, \bq)$ is continuous with respect to $\bk$.
\end{proof}

Here we state the first main theorem in this paper.

\begin{theorem}
\label{thm:inverse_cts}
Consider the map $\hPsi$ in Definition \ref{def:hpsi}, then the map $\hPsi^{-1}$ is continuous.
\end{theorem}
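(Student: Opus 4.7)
The plan is to assemble Theorem 5.7 by combining the two continuity-type results already established for $\Phi = \hPsi^{-1}$, namely Lemma 6.3 (the explicit dependence of $\Phi$ on its second argument $\bq$ for fixed $\bk$) and Proposition 6.4 (the continuity of $\Phi$ in its first argument $\bk$ for fixed $\bq$). Since $\Phi = \hPsi^{-1}$ is well-defined and bijective by Lemma 6.2, it suffices to prove that $\Phi$ is jointly continuous on $\dK(G, G_1) \times \RR^a$.

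First I would fix an arbitrary point $(\bk, \bq) \in \dK(G,G_1) \times \RR^a$ and an arbitrary sequence $(\bk^n, \bq^n) \to (\bk, \bq)$, and decompose
\begin{equation} \notag
\Phi(\bk^n, \bq^n) - \Phi(\bk, \bq)
= \bigl[\Phi(\bk^n, \bq^n) - \Phi(\bk^n, \bq)\bigr]
+ \bigl[\Phi(\bk^n, \bq) - \Phi(\bk, \bq)\bigr].
\end{equation}
For the first bracket, Lemma 6.3 applied with fixed first argument $\bk^n$ gives the exact identity
\begin{equation} \notag
\Phi(\bk^n, \bq^n) - \Phi(\bk^n, \bq)
= \Bigl(\sum_{i=1}^{a} \varepsilon^n_i \bA_i, \mathbf{0}, \mathbf{0}\Bigr),
\end{equation}
where $(\varepsilon^n_1, \ldots, \varepsilon^n_a) = \bq^n - \bq$. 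Crucially, this expression is independent of $\bk^n$, and its norm is bounded by $\|\bq^n - \bq\|$ (since $\{\bA_i\}_{i=1}^a$ is orthonormal), so it converges to $\mathbf{0}$ as $n \to \infty$. For the second bracket, Proposition 6.4 directly gives that $\Phi(\bk^n, \bq) \to \Phi(\bk, \bq)$ since $\bk^n \to \bk$ in $\dK(G, G_1)$. Adding these two estimates yields $\Phi(\bk^n, \bq^n) \to \Phi(\bk, \bq)$, which proves continuity of $\Phi$ at the arbitrary point $(\bk, \bq)$, and hence continuity of $\hPsi^{-1}$ on its entire domain.

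I do not expect any serious obstacle in this proof: the hard work has already been done in proving Proposition 6.4 (continuity in $\bk$), which required the chain of Lemmas 6.5--6.7 and Proposition 6.8. Here we only need to glue together the $\bk$-continuity with the trivial $\bq$-dependence revealed by Lemma 6.3. The only mild subtlety is that Lemma 6.3 gives a uniform estimate in $\bk$, which is precisely what allows the decomposition-and-triangle-inequality argument to work along an arbitrary sequence in both variables simultaneously; without this uniformity, joint continuity would not follow from separate continuity. Thus the proof will be short and essentially a direct application of the two previously established results.
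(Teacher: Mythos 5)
Your proposal is correct and follows essentially the same route as the paper: both reduce to gluing Lemma \ref{lem:inverse_cts_q} (the exact, $\bk$-independent formula for the $\bq$-increment) and Proposition \ref{prop:inverse_cts_k} (continuity in $\bk$ at fixed $\bq$) via a two-step decomposition and the triangle inequality. The only difference is the order of the staircase decomposition (you vary $\bq$ first at the perturbed $\bk$, the paper varies $\bk$ first at the perturbed $\bq$), and your ordering is marginally cleaner since it invokes the $\bk$-continuity only at the single fixed $\bq$.
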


\begin{proof}

From Lemma \ref{lem:phi_wd}, consider the map $\Phi$ in Definition \ref{def:phi}, then $\Phi = \hPsi^{-1}$ is well-defined and bijective. Thus, it suffices to show the map $\Phi$ is continuous.

Suppose any $(\bk, \bq) \in \dK(G,G_1) \times \RR^a$. Consider any positive real number $\varepsilon > 0$. 
From Proposition \ref{prop:inverse_cts_k}, $\Phi (\cdot, \bq)$ is continuous with respect to $\bk$. Thus, there exists some positive real number $\delta_1 > 0$, such that for any $\tilde{\bk} \in \dK(G,G_1)$ with $\| \tilde{\bk} - \bk \| < \delta_1$, then
\begin{equation}
\label{eq:inverse_cts_1}
\big\| \Phi (\tilde{\bk}, \bq) - \Phi (\bk, \bq) \big\| < \frac{\varepsilon}{2}.
\end{equation}
Note that $\{\bA_1, \bA_2, \ldots, \bA_a \}$ is an orthonormal basis of $\eJ(G_1) \subset \RR^a$, there exists some positive real number $\delta_2 > 0$, such that for any $\bv = (v_1, v_2, \ldots, v_a) \in \RR^a$ with $\| \bv \| < \delta_2$, then
\begin{equation}
\label{eq:inverse_cts_2}
\big\| \sum\limits^{a}_{i=1} v_i \bA_i \big\| < \frac{\varepsilon}{2}.
\end{equation}

Let $\delta = \min \{ \delta_1, \delta_2 \}$, consider any $(\hbk, \hbq) \in \dK(G,G_1) \times \RR^a$ with $| (\hbk, \hbq) - (\bk, \bq) | < \delta$.
This implies $\| \hbk - \bk \| < \delta$ and $\| \hbq - \bq \| < \delta$. Then we compute that
\begin{equation}
\label{eq:inverse_cts_3}
\Phi (\hbk, \hbq) - \Phi (\bk, \bq)
= \big( \Phi (\hbk, \hbq) - \Phi (\bk, \hbq) \big) + \big( \Phi (\bk, \hbq) - \Phi (\bk, \bq) \big).
\end{equation}
From \eqref{eq:inverse_cts_1} and $\| \hbk - \bk \| < \delta \leq \delta_1$, we have
\begin{equation}
\label{eq:inverse_cts_4}
\big\| \Phi (\hbk, \hbq) - \Phi (\bk, \hbq) \big\| < \frac{\varepsilon}{2}.
\end{equation}
Using Lemma \ref{lem:inverse_cts_q} and setting $\hbq - \bq := (v_1, v_2, \ldots, v_a) \in \RR^a$, we have
\begin{equation} \notag
\Phi (\bk, \hbq) - \Phi (\bk, \bq) = \sum\limits^{a}_{i=1} v_i \bA_i,
\end{equation}
Together with \eqref{eq:inverse_cts_2} and $\| \hbq - \bq \| < \delta \leq \delta_2$, we obtain
\begin{equation}
\label{eq:inverse_cts_5}
\big\| \Phi (\bk, \hbq) - \Phi (\bk, \bq) \big\|
= \big\| \sum\limits^{a}_{i=1} v_i \bA_i \big \| < \frac{\varepsilon}{2}.
\end{equation}
Inputting \eqref{eq:inverse_cts_4} and \eqref{eq:inverse_cts_5} into \eqref{eq:inverse_cts_3}, we derive
\begin{equation} \notag
\big\| \Phi (\hbk, \hbq) - \Phi (\bk, \bq) \big\|
\leq \frac{\varepsilon}{2} + \frac{\varepsilon}{2} = \varepsilon.
\end{equation}
Therefore, $\Phi$ is continuous and we conclude this theorem.
\end{proof}

The following result is a direct consequence of Theorem \ref{thm:inverse_cts}.

\begin{theorem}
\label{thm:hpsi_homeo}
The map $\hPsi$ in Definition \ref{def:hpsi}
is a homeomorphism.
\end{theorem}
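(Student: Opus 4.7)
The plan is to observe that Theorem \ref{thm:hpsi_homeo} follows almost immediately by assembling the results already established in the preceding sections. Recall that a map between topological spaces is a homeomorphism precisely when it is a continuous bijection whose inverse is also continuous. Each of these three properties has effectively been handled by a prior lemma or theorem, so the proof amounts to citing the relevant results in the correct order.

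Concretely, first I would invoke Lemma \ref{lem:hpsi_bijective}, which establishes that $\hPsi$ is a bijection between $\hat{\mJ}(G_1,G) \times [(\bx_0 + \mS_{G_1}) \cap \RR^n_{>0}] \times \RR^b$ and $\dK(G,G_1) \times \RR^a$. Next, I would cite Lemma \ref{lem:hpsi_cts}, which gives continuity of $\hPsi$ itself. Finally, I would apply Theorem \ref{thm:inverse_cts}, which asserts continuity of $\hPsi^{-1}$ (established via the auxiliary map $\Phi = \hPsi^{-1}$ from Definition \ref{def:phi} together with Proposition \ref{prop:inverse_cts_k} and Lemma \ref{lem:inverse_cts_q}). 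The conjunction of these three facts is the definition of a homeomorphism, which completes the argument.

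Since all heavy lifting has already been done, there is no genuine obstacle to this proof: it is purely a bookkeeping step. The conceptually hard part was the continuity of $\hPsi^{-1}$, which required the technical machinery of Lemmas \ref{lem:key_1}--\ref{lem:key_3} and Proposition \ref{lem:key_4} to control how steady states and flux decompositions vary under perturbations of $\bk$ in $\dK(G,G_1)$. Once that work is absorbed into Theorem \ref{thm:inverse_cts}, Theorem \ref{thm:hpsi_homeo} is essentially a one-line corollary, and I would write the proof as such rather than reproducing any calculation.
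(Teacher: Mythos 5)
Your proposal is correct and matches the paper's own proof exactly: the paper also deduces Theorem \ref{thm:hpsi_homeo} by combining Lemma \ref{lem:hpsi_bijective}, Lemma \ref{lem:hpsi_cts}, and Theorem \ref{thm:inverse_cts}. No further comment is needed.
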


\begin{proof}

From Lemma \ref{lem:hpsi_bijective} and \ref{lem:hpsi_cts}, we derive that $\hPsi$ is bijective and continuous.

On the other hand, Proposition \ref{thm:inverse_cts} shows the inverse map $\hPsi^{-1}$ is also continuous.
Therefore, we conclude that the map $\hPsi$ is a homomorphism.
\end{proof}

\section{Dimension of \texorpdfstring{$\dK(G,G_1)$}{KGG1} and \texorpdfstring{$\pK(G,G_1)$}{pKGG1} }
\label{sec:dimension}

In this section, we give a precise bound on the dimension of $\dK(G, G_1)$, where $G_1 \sqsubseteq G_c$. Further, we show the dimension of $\pK(G, G_1)$ when $\pK(G, G_1) \neq \emptyset$. Finally, we remark on the dimension of {\em $\RR$-disguised toric locus} $\dK(G)$ and {\em disguised toric locus} $\pK(G)$.

\begin{lemma}
\label{lem:hat_j_g1_g_cone}
Let $G_1 = (V_1, E_1)$ be a weakly reversible E-graph and let $G = (V, E)$ be an E-graph. 
If $\mJ (G_1, G) \neq \emptyset$, then $\hat{\mJ} (G_1, G)$ is a convex cone, which satisfies
\begin{equation} \label{hat_j_g1_g_generator_dim}
\dim (\hat{\mJ} (G_1, G)) = \dim (\mJ (G_1, G)).
\end{equation}
\end{lemma}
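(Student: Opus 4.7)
The plan is to recognize $\hat{\mJ}(G_1,G)$ as the Minkowski sum $\mJ(G_1,G)+\eJ(G_1)$ and then read off both conclusions from Lemma \ref{lem:j_g1_g_cone}. Since by assumption $\{\bA_1,\ldots,\bA_a\}$ is an orthonormal basis of $\eJ(G_1)$, the definition \eqref{def:hat_j_g1_g} can be rewritten as
\begin{equation} \notag
\hat{\mJ}(G_1,G) \;=\; \mJ(G_1,G) + \eJ(G_1) \;:=\; \{\bJ + \bv : \bJ \in \mJ(G_1,G),\ \bv \in \eJ(G_1)\}.
\end{equation}

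For the convex cone property, I would invoke Lemma \ref{lem:j_g1_g_cone}(a), which presents $\mJ(G_1,G)$ as the positive-combination cone on generators $\{\bv_1,\ldots,\bv_k\}$, and note that $\eJ(G_1)$ is a linear subspace by Definition \ref{def:j0}. The Minkowski sum of a convex cone and a linear subspace is closed under addition (add the two summands coordinate-wise) and under scaling by positive reals (scale both summands), so $\hat{\mJ}(G_1,G)$ is itself a convex cone. This step is purely structural and involves no computation.

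For the dimension statement, I would use the fact that the dimension of a convex cone equals the dimension of its linear span. Thus
\begin{equation} \notag
\dim(\hat{\mJ}(G_1,G)) \;=\; \dim\bigl(\spn(\mJ(G_1,G)) + \eJ(G_1)\bigr).
\end{equation}
Here the key ingredient is Lemma \ref{lem:j_g1_g_cone}(c), which under the hypothesis $\mJ(G_1,G)\neq\emptyset$ guarantees $\eJ(G_1) \subseteq \spn\{\bv_1,\ldots,\bv_k\} = \spn(\mJ(G_1,G))$. Consequently, adjoining $\eJ(G_1)$ does not enlarge the span, and combining with Lemma \ref{lem:j_g1_g_cone}(b) yields
\begin{equation} \notag
\dim(\hat{\mJ}(G_1,G)) \;=\; \dim(\spn\{\bv_1,\ldots,\bv_k\}) \;=\; \dim(\mJ(G_1,G)).
\end{equation}

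There is no real obstacle here: Lemma \ref{lem:j_g1_g_cone} does all the heavy lifting, and the content of this lemma is essentially the observation that translating the cone $\mJ(G_1,G)$ by the subspace $\eJ(G_1)$ already sitting inside its span leaves both the cone structure and the dimension unchanged. The only point worth being careful about is to note explicitly that the nonemptiness hypothesis on $\mJ(G_1,G)$ is exactly what allows us to apply part (c) of Lemma \ref{lem:j_g1_g_cone}.
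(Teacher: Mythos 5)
Your proposal is correct and follows essentially the same route as the paper: the paper also writes $\hat{\mJ}(G_1,G)$ as the positive hull of $\{\bv_1,\ldots,\bv_k,\pm\bA_1,\ldots,\pm\bA_a\}$ (which is exactly your Minkowski-sum description), deduces the convex-cone property from that presentation, and then uses parts (b) and (c) of Lemma \ref{lem:j_g1_g_cone} to see that adjoining $\eJ(G_1)$ does not enlarge the span. No gaps.
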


\begin{proof}

From Lemma \ref{lem:j_g1_g_cone}, suppose there exists a set of vectors $\{ \bv_1, \bv_2, \ldots, \bv_k \} \subset \RR^{|E_1|}$, such that 
\begin{equation} \notag
\mJ (G_1, G) = \{ a_1 \bv_1 + \cdots a_k \bv_k \ | \ a_i \in \RR_{>0} \}.
\end{equation}
Using \eqref{def:hat_j_g1_g}, $\hat{\mJ} (G_1, G)$ can be represented as the positive combination of the following vectors: 
\begin{equation} \label{hj_g1g_basis}
\{ \bv_1, \bv_2, \ldots, \bv_k, \pm \bA_1, \pm \bA_2, \ldots, \pm \bA_a \}. 
\end{equation}
This shows $\hat{\mJ} (G_1, G)$ is a convex cone. Moreover, we have
\begin{equation} \notag
\dim (\hat{\mJ} (G_1, G)) =\dim ( \spn \{ \bv_1, \bv_2, \ldots, \bv_k, \bA_1, \bA_2, \ldots, \bA_a \} ).
\end{equation}
Since $\mJ (G_1, G) \neq \emptyset$, Lemma \ref{lem:j_g1_g_cone} shows that
\begin{equation} \notag
\spn \{ \bA_i \}^a_{i=1} = \eJ(G_1) \subseteq \spn \{ \bv_1, \bv_2, \ldots, \bv_k \}.
\end{equation}
Therefore, we conclude that
\begin{equation} \notag
\dim (\hat{\mJ} (G_1, G)) = \dim ( \spn \{ \bv_1, \bv_2, \ldots, \bv_k \} ) = \dim (\mJ (G_1, G)).
\end{equation} 
\end{proof}

\begin{theorem}
\label{thm:dim_kisg}

Let $G_1 = (V_1, E_1)$ be a weakly reversible E-graph with its stoichiometric subspace $\mS_{G_1}$. Suppose an E-graph $G = (V, E)$, recall $\mJ (G_1,G)$, $\mD(G)$ and $\eJ(G_1)$ defined in Definitions~\ref{def:flux_realizable}, \ref{def:d0} and \ref{def:j0} respectively.

\begin{enumerate}[label=(\alph*)]
\item\label{part_a} Consider $\dK(G,G_1)$ from Definition~\ref{def:de_realizable}, then
\begin{equation} \label{eq:dim_kisg}
\begin{split} 
& \dim(\dK(G,G_1)) 
= \dim (\mJ(G_1,G)) + \dim (\mS_{G_1})  + \dim(\eJ(G_1)) - \dim(\mD(G)).
\end{split}
\end{equation}

\item\label{part_b} Further, consider $\pK (G, G_1)$ from Definition~\ref{def:de_realizable} and assume that $\pK (G, G_1) \neq \emptyset$. Then
\begin{equation} \label{eq:dim_kdisg}
\dim(\pK (G,G_1)) = \dim(\dK(G,G_1)).
\end{equation}

\end{enumerate}

\end{theorem}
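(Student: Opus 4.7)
The plan is to deduce Theorem~\ref{thm:dim_kisg} directly from the homeomorphism $\hPsi$ established in Theorem~\ref{thm:hpsi_homeo}. A homeomorphism between non-empty semialgebraic sets preserves topological dimension, and dimension is additive under Cartesian product, so equating the dimensions on the two sides of $\hPsi$ gives a single linear relation from which part (a) follows immediately. Part (b) will follow from the observation that $\pK(G,G_1)$ is a non-empty relatively open subset of $\dK(G,G_1)$.

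For part (a), I would compute the dimension of each factor explicitly. On the domain side of $\hPsi$, Lemma~\ref{lem:hat_j_g1_g_cone} gives $\dim(\hat{\mJ}(G_1,G)) = \dim(\mJ(G_1,G))$; the invariant polyhedron $(\bx_0 + \mS_{G_1}) \cap \RR^n_{>0}$ is a non-empty relatively open subset of the affine subspace $\bx_0 + \mS_{G_1}$, hence has dimension $\dim(\mS_{G_1})$; and the factor $\RR^b$ contributes $b = \dim(\mD(G))$. On the codomain side, $\dK(G,G_1)$ contributes $\dim(\dK(G,G_1))$ and the factor $\RR^a$ contributes $a = \dim(\eJ(G_1))$. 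Equating the two totals and solving for $\dim(\dK(G,G_1))$ yields the identity in~\eqref{eq:dim_kisg}.

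For part (b), recall from Definition~\ref{def:de_realizable} that $\pK(G,G_1) = \dK(G,G_1) \cap \RR^{|E|}_{>0}$. Since $\RR^{|E|}_{>0}$ is open in $\RR^{|E|}$, the set $\pK(G,G_1)$ is relatively open in $\dK(G,G_1)$. By Remark~\ref{rmk:semi_algebaic}, the subset of $\dK(G,G_1)$ on which it is locally a submanifold of top dimension is open and dense. The non-empty open set $\pK(G,G_1)$ must therefore meet this locus, and hence contains points at which $\pK(G,G_1)$ is locally a submanifold of dimension $\dim(\dK(G,G_1))$. Combined with $\dim(\pK(G,G_1)) \leq \dim(\dK(G,G_1))$, which is immediate from $\pK(G,G_1) \subseteq \dK(G,G_1)$, this gives the desired equality.

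The main conceptual work has already been discharged in establishing that $\hPsi$ is a homeomorphism, so what remains is essentially a bookkeeping exercise. The only subtlety worth highlighting is the invariance of semialgebraic dimension under homeomorphism; this can be justified either by a direct appeal to the corresponding standard fact in real algebraic geometry, or more concretely by noting that at a generic (submanifold) point of $\dK(G,G_1)$ the inverse $\hPsi^{-1}$ restricts to a local homeomorphism between manifolds whose dimensions can be read off directly from the product structure of the domain of $\hPsi$.
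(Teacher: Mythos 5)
Your proposal is correct and is essentially the paper's own proof: part (a) equates the dimensions of the two sides of the homeomorphism $\hPsi$ from Theorem~\ref{thm:hpsi_homeo}, using Lemma~\ref{lem:hat_j_g1_g_cone} for the factor $\hat{\mJ}(G_1,G)$, additivity of dimension over products, and invariance of dimension under homeomorphism; part (b) combines the relative openness of $\pK(G,G_1)$ in $\dK(G,G_1)$ with the density of the top-dimensional submanifold locus from Remark~\ref{rmk:semi_algebaic}, exactly as the paper does.

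One point deserves care. Since $\RR^b$ with $b=\dim(\mD(G))$ sits in the domain of $\hPsi$ and $\RR^a$ with $a=\dim(\eJ(G_1))$ in the codomain, equating the two totals gives $\dim(\dK(G,G_1)) + \dim(\eJ(G_1)) = \dim(\mJ(G_1,G)) + \dim(\mS_{G_1}) + \dim(\mD(G))$, and solving yields $\dim(\dK(G,G_1)) = \dim(\mJ(G_1,G)) + \dim(\mS_{G_1}) + \dim(\mD(G)) - \dim(\eJ(G_1))$, in which $\mD(G)$ and $\eJ(G_1)$ enter with signs opposite to those in the printed identity \eqref{eq:dim_kisg}. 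The formula your computation actually produces is the one used in Examples~\ref{ex:thomas} and~\ref{ex:circadian}, so before asserting that the bookkeeping ``yields the identity in \eqref{eq:dim_kisg}'' you should either carry out the final algebra explicitly or note that the printed statement appears to have its last two terms transposed; as written, your last step silently identifies your result with a formula it does not literally match.
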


\begin{proof}

For part $(a)$, recall we prove that $\hat{\Psi}$ is a homeomorphism in Theorem \ref{thm:hpsi_homeo}. Using the invariance of dimension theorem \cite{hatcher2005algebraic,munkres2018elements}, together with  Remark \ref{rmk:semi_algebaic} and
\eqref{hat_j_g1_g_generator_dim} in Lemma \ref{lem:hat_j_g1_g_cone}, we obtain
\begin{equation} \notag
\dim (\dK(G, G_1)) + \dim(\mD(G)) =
\dim (\mJ (G_1, G)) + \dim (\mS_{G_1}) + \dim(\eJ(G_1)),
\end{equation}
and conclude \eqref{eq:dim_kisg}. 
Further, we emphasize that on a dense open subset of $\dK(G, G_1)$, it is locally a submanifold.
The homomorphism indicates that all such submanifolds have the same dimension.

\smallskip

For part $(b)$, since $\pK (G, G_1) \neq \emptyset$, together with Lemma \ref{lem:semi_algebaic} and Remark \ref{rmk:semi_algebaic}, there exists a $\bk \in \pK(G, G_1)$ and a neighborhood of $\bk$ in $\pK(G, G_1)$, denoted by $U$, such that 
\[
\bk \in U \subset \pK(G, G_1),
\]
where $U$ is a submanifold with $\dim (U) = \dim (\pK(G, G_1))$. Moreover, $\pK (G, G_1) = \dK(G, G_1) \cap \mathbb{R}^{E}_{>0}$ implies that $U$ is also a neighborhood of $\bk$ in $\dK(G, G_1)$. 
From part $(a)$, we obtain that on a dense open subset of $\dK(G, G_1)$, all local submanifolds have the same dimension. Therefore, we conclude \eqref{eq:dim_kdisg}.

\end{proof}

\begin{theorem} \label{thm:dim_kisg_main}

Consider an E-graph $G = (V, E)$. 

\begin{enumerate}[label=(\alph*)]
\item Consider $\dK(G)$ from Definition~\ref{def:de_realizable}, then
\begin{equation} \notag
\dim (\dK(G) )
= \max_{G'\sqsubseteq G_c} 
\Big\{ \dim (\mJ(G',G)) + \dim (\mS_{G'})  + \dim(\eJ(G')) - \dim(\mD(G)) 
\Big\},
\end{equation}
where $\mJ (G',G)$, $\mD(G)$ and $\eJ(G')$ are defined in Definitions~\ref{def:flux_realizable}, \ref{def:d0} and \ref{def:j0} respectively.

\item Further, consider $\pK (G)$ from Definition~\ref{def:de_realizable} and assume that $\pK (G) \neq \emptyset$. Then
\begin{equation} \notag
\begin{split}
& \dim (\pK(G) )
\\& = \max_{ \substack{ G'\sqsubseteq G_c, \\  \pK(G, G') \neq \emptyset } } 
\Big\{ \dim (\mJ(G',G)) + \dim (\mS_{G'})  + \dim(\eJ(G')) - \dim(\mD(G)) 
\Big\}.
\end{split}
\end{equation}
\end{enumerate}
\end{theorem}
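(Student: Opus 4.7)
The plan is to reduce both parts to Theorem \ref{thm:dim_kisg} by exploiting the fact that $\dK(G)$ and $\pK(G)$ are each expressed as finite unions indexed by the weakly reversible subgraphs $G' \sqsubseteq G_c$. Since $G_c$ has finitely many vertices and edges, the collection $\{ G' : G' \sqsubseteq G_c \}$ is finite, so both unions in Definition \ref{def:de_realizable} are finite unions.

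For part (a), I would invoke Lemma \ref{lem:semi_algebaic} to see that each $\dK(G, G')$ is semialgebraic, together with the standard fact from real algebraic geometry that the dimension of a finite union $X = \bigcup_i X_i$ of semialgebraic sets satisfies $\dim(X) = \max_i \dim(X_i)$ (see \cite{coste2000introduction}). This yields
\[
\dim(\dK(G)) = \max_{G' \sqsubseteq G_c} \dim(\dK(G, G')).
\]
By Remark \ref{rmk:mJ_dK}, $\dK(G, G')$ is empty exactly when $\mJ(G', G)$ is empty, so such terms contribute nothing to the maximum and can be discarded. For each remaining $G'$, Theorem \ref{thm:dim_kisg}(a) evaluates $\dim(\dK(G, G'))$ as precisely the expression appearing under the maximum in the statement of part (a), which completes that part.

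For part (b), the same finite-union reasoning yields
\[
\dim(\pK(G)) = \max_{G' \sqsubseteq G_c} \dim(\pK(G, G')),
\]
and we may restrict the maximum to those $G'$ for which $\pK(G, G') \neq \emptyset$. For each such $G'$, Theorem \ref{thm:dim_kisg}(b) gives $\dim(\pK(G, G')) = \dim(\dK(G, G'))$, and Theorem \ref{thm:dim_kisg}(a) then supplies the explicit value in terms of $\dim(\mJ(G', G))$, $\dim(\mS_{G'})$, $\dim(\eJ(G'))$, and $\dim(\mD(G))$. Assembling these produces the claimed formula.

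I do not anticipate any serious obstacle: the argument is a routine consequence of Theorem \ref{thm:dim_kisg} once one observes that the finiteness of the index set makes the dimension of the union computable as the maximum of the component dimensions. The only minor care required is the bookkeeping around empty components, which is handled by the non-emptiness criterion recorded in Remark \ref{rmk:mJ_dK} together with the restricted maximum in the statement of part (b).
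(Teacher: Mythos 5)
Your proposal is correct and follows essentially the same route as the paper: both use Lemma \ref{lem:semi_algebaic} to get semialgebraicity, the fact that the dimension of a finite union of semialgebraic sets is the maximum of the component dimensions, and then Theorem \ref{thm:dim_kisg} to evaluate each $\dim(\dK(G,G'))$ and $\dim(\pK(G,G'))$. Your extra bookkeeping about empty components via Remark \ref{rmk:mJ_dK} is a harmless (and slightly more careful) elaboration of what the paper leaves implicit.
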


\begin{proof}

Note that from Lemma~\ref{lem:semi_algebaic}, both $\pK (G, G_1)$ and $\dK(G, G_1)$ are semialgebraic sets. 
 
Further, the dimension of the union of finitely many semialgebraic sets is the maximum of the dimensions of these semialgebraic sets \cite{coste2000introduction,basu201738,lairez2021computing}. 
The result then follows from Definition~\ref{def:de_realizable}, Lemma \ref{lem:semi_algebaic} and Theorem~\ref{thm:dim_kisg}.
\end{proof}

\section{Examples} \label{sec:applications}

\begin{example}[Thomas type models {\cite[Chapter 6]{murray2007mathematical}}]
\label{ex:thomas}

This model illustrates the substrate inhibition mechanism, capturing the interaction between uric acid and oxygen catalyzed by the enzyme uricase. The E-graph corresponding to this model, shown in Figure~\ref{fig:thomas_model} (a), is denoted by $G$. The system is governed by the following differential equations:
\begin{equation}
\begin{split}
\frac{du}{dt} & = c - u - uv,
\\ \frac{dv}{dt} & = \beta(d-v) - uv, 
\end{split}
\end{equation}
where $u$ denotes the concentration of uric acid, $v$ represents the concentration of oxygen, and the parameters $c$, $d$, and $\beta$ are positive constants. 
As demonstrated in \cite{CraciunDickensteinShiuSturmfels2009}, the toric locus associated with $G$ forms a toric variety of codimension one, implying that it has measure zero.

\begin{figure}[!ht]
\centering
\includegraphics[scale=0.7]{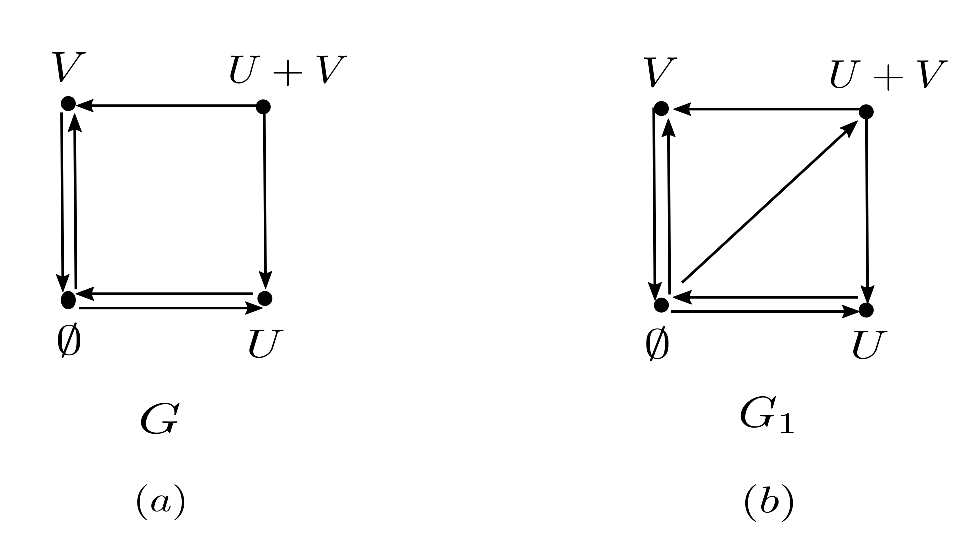}
\caption{ 
(a) The E-graph $G$ represents a Thomas-type model, where $U$ denotes uric acid and $V$ denotes oxygen.
(b) The E-graph $G_1$ is a weakly reversible subgraph of the complete graph formed by the source vertices of $G$.
}
\label{fig:thomas_model}
\end{figure}

The graph $G$ contains 6 reactions, hence $\pK(G) \subseteq \mathbb{R}^6_{>0}$. We claim that 
\[
\rm{dim}(\pK(G)) = 6.
\]
This implies that the disguised toric locus corresponding to $G$ is of positive measure. To prove this claim, we consider the E-graph $G_1$ and compute $\dim (\pK(G, G_1))$.

First, consider a flux vector $\bJ \in \mJ(G_1, G) \subset \mathbb{R}^7$. Since $\bJ$ is a complex-balanced flux vector in $G_1$, this imposes $4 - 1 = 3$ constraints on $\bJ$, as shown in \cite{craciun2020efficient}. Additionally, being $\mathbb{R}$-realizable in $G$ imposes no further constraints, since every flux vector in $G_1$ can be transformed into $G$. Therefore, we have
\[
\dim (\mJ(G_1, G)) = 7 - 0 - 3 = 4.
\]
Second, upon inspecting the graphs $G$ and $G_1$, we obtain the following:
\[
\dim (\mathcal{S}_{G_1}) = 2, \ \
\rm{dim}(\mD (G))  = 0
\ \text{ and } \
\dim (\eJ(G_1)) =  0.
\]
Using Theorem~\ref{thm:dim_kisg} (a), we get that
\begin{equation}
\begin{split} \notag
\rm{dim}(\dK(G, G_1)) 
& = \dim (\mJ(G_1, G)) + \dim (\mathcal{S}_{G_1}) + \dim(\mD (G)) - \dim(\eJ (G_1)) 
\\& = 4 + 2 + 0 - 0 = 6.
\end{split}
\end{equation}

Since $G_1$ is weakly reversible, $\mK ({G_1}) \neq \emptyset$. From Figure~\ref{fig:thomas_model}, given any $\bk_1 \in \mK ({G_1})$ there exists $\bk$ such that $(G,\bk)\sim (G_1, \bk_1)$. This implies that $\pK (G, G_1) \neq \emptyset$, and thus  Theorem~\ref{thm:dim_kisg} (b) further shows
\[
\rm{dim}(\pK(G,G_1))
= \rm{dim}(\dK(G,G_1)) = 6.
\]
This, together with Theorem \ref{thm:dim_kisg_main}, implies that $\rm{dim}(\pK(G)) = 6$.

A lengthy computation (based on the matrix-tree theorem, see \cite{CraciunDickensteinShiuSturmfels2009}) can be used to derive that  
\[
\pK(G) \ \supseteq \ \{ \bk \in \mathbb{R}^{7}_{>0} \mid \frac{k_{V \to \emptyset} k_{U \to \emptyset}}{k_{U+V \to U}} > k_{\emptyset \to V} \geq k_{\emptyset \to U} 
\ \text{ or } \
\frac{k_{V \to \emptyset} k_{U \to \emptyset}}{k_{U+V \to U}} > k_{\emptyset \to U} \geq k_{\emptyset \to V}
\},
\]
which, informally speaking, implies that at least 50\% of all positive parameter choices belong to $\pK(G)$.

Finally, we remark that the Thomas-type model $G$ and the corresponding weakly reversible E-graph $G_1$ in Figure~\ref{fig:thomas_model} are both two-dimensional.
From Remark \ref{rmk:complex_balance_property}, for any $\bk \in \pK(G, G_1)$, the system $(G, \bk)$ has a globally attracting steady state within each stoichiometric compatibility class.
\qed
\end{example}

\begin{example}[Circadian clock models \cite{leloup1999chaos}]
\label{ex:circadian}

The circadian clock forms an integral part of the behavioral, physical, and biological changes in the body over a 24-hour cycle. 
The dynamics of this network are known to exhibit exotic behaviors, such as oscillations and limit cycles. 
For our analysis, we utilize the model of the circadian clock introduced in \cite{leloup1999chaos}, described by the following reactions:
\begin{equation} \notag
P + T \rightleftharpoons C \rightarrow \emptyset, \ \
P \rightleftharpoons \emptyset, \ \ 
T \rightleftharpoons \emptyset,
\end{equation}
where $P$ denotes period, $T$ denotes time, and $C$ denotes the period-time complex. The E-graph corresponding to this model, shown in Figure~\ref{fig:circadian_clock} (a), is denoted by $G$. 
Since $G$ is not weakly reversible, \cite{CraciunDickensteinShiuSturmfels2009} demonstrates that the toric locus associated with $G$ is empty.

\begin{figure}[!ht]
\centering
\includegraphics[scale=0.43]{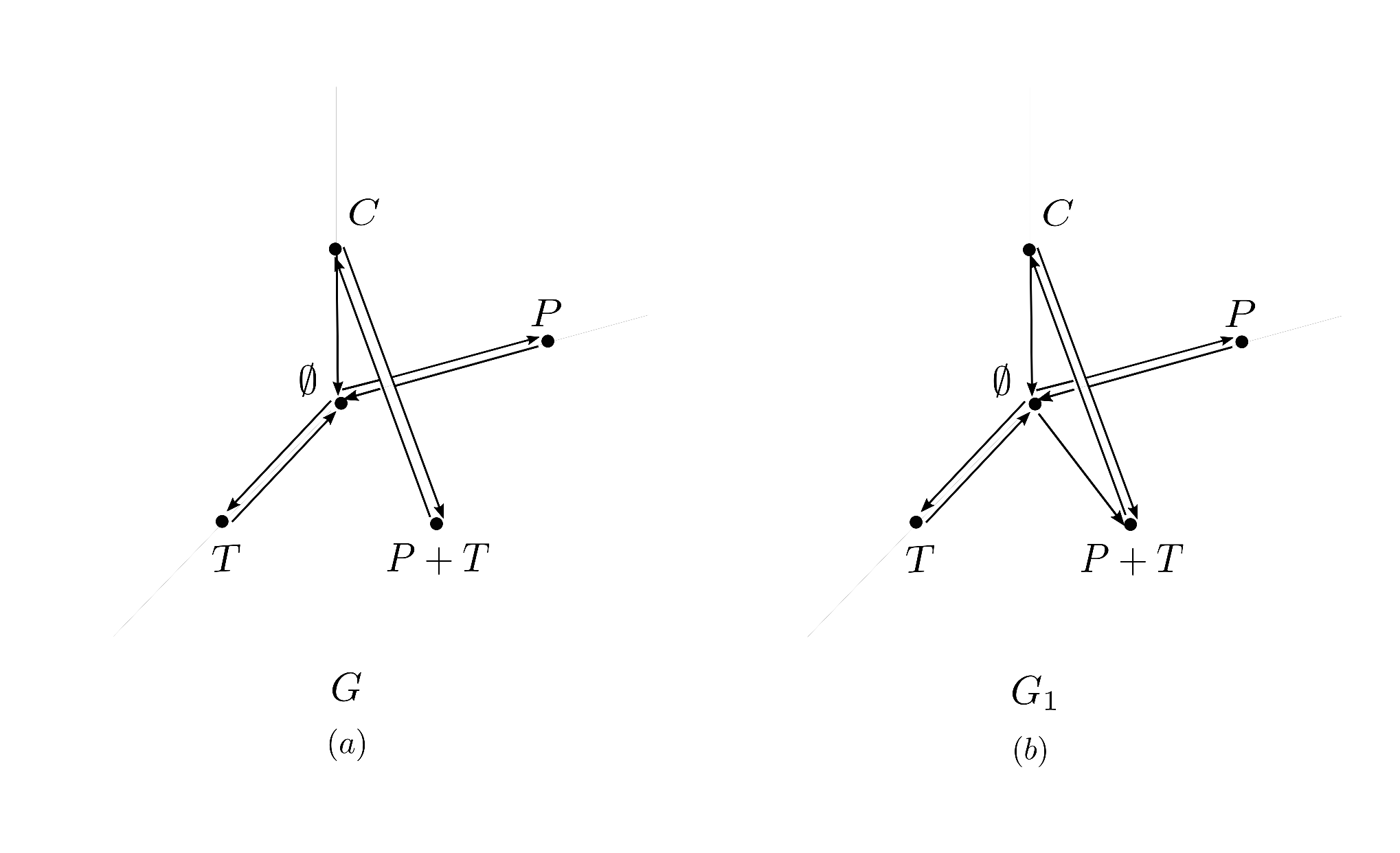}
\caption{
(a) The E-graph $G$ represents a circadian clock model.
(b) The E-graph $G_1$ is a weakly reversible subgraph of the complete graph formed by the source vertices of $G$.
}
\label{fig:circadian_clock}
\end{figure}

Note that the graph $G$ contains 7 reactions, so $\pK(G) \subseteq \mathbb{R}^7_{>0}$. We claim that 
\[
\rm{dim}(\pK(G)) = 7.
\]
This implies that the disguised toric locus corresponding to $G$ is of positive measure. To prove this claim, we consider the E-graph $G_1$ and compute $\dim (\pK(G, G_1))$.

First, consider a flux vector $\bJ \in \mJ(G_1, G) \subset \mathbb{R}^8$. Since $\bJ$ is a complex-balanced flux vector in $G_1$, this imposes $5 - 1 = 4$ constraints on $\bJ$, as shown in \cite{craciun2020efficient}. Additionally, being $\mathbb{R}$-realizable in $G$ imposes no further constraints, since every flux vector in $G_1$ can be transformed into $G$. Therefore, we have
\[
\dim (\mJ(G_1, G)) = 8 - 0 - 4 = 4.
\]
Second, upon inspecting the graphs $G$ and $G_1$, we obtain the following:
\[
\dim (\mathcal{S}_{G_1}) = 3, \ \
\rm{dim}(\mD (G))  = 0
\ \text{ and } \
\dim (\eJ(G_1)) =  0.
\]
Using Theorem~\ref{thm:dim_kisg} (a), we get that
\begin{equation}
\begin{split} \notag
\rm{dim}(\dK(G, G_1)) 
& = \dim (\mJ(G_1, G)) + \dim (\mathcal{S}_{G_1}) + \dim(\mD (G)) - \dim(\eJ (G_1)) 
\\& = 4 + 3 + 0 - 0 = 7.
\end{split}
\end{equation}

Since $G_1$ is weakly reversible, $\mK ({G_1}) \neq \emptyset$. From Figure~\ref{fig:circadian_clock}, given any $\bk_1 \in \mK ({G_1})$ there exists $\bk$ such that $(G,\bk)\sim (G_1, \bk_1)$. This implies that $\pK (G, G_1) \neq \emptyset$, and thus  Theorem~\ref{thm:dim_kisg}~(b) further shows
\[
\rm{dim}(\pK(G,G_1) 
= \rm{dim}(\dK(G,G_1)) = 7.
\]
This, together with Theorem \ref{thm:dim_kisg_main}, implies that $\rm{dim}(\pK(G)) = 7$.
Actually, a  computation (based on the matrix-tree theorem~\cite{CraciunDickensteinShiuSturmfels2009}) can be used  to derive that 
 $\pK(G) = \mathbb{R}^{7}_{>0}.
$

Finally, note that for the circadian clock model $G$, the corresponding weakly reversible E-graph $G_1$ in Figure~\ref{fig:circadian_clock} contains a single linkage class.
From Remark \ref{rmk:complex_balance_property} it follows that for any $\bk \in \pK(G, G_1)$, the system $(G, \bk)$ has a globally attracting steady state within each stoichiometric compatibility class.
\qed
\end{example}

\section{Discussion} \label{sec:discussion}

Due to their remarkably robust dynamics, complex-balanced dynamical systems form an important class of dynamical systems. In particular, they are notable for their connection to the \emph{Global Attractor Conjecture}~\cite{horn1972general, CraciunDickensteinShiuSturmfels2009}, which asserts that these systems possess a globally attracting steady state within each invariant polyhedron. The set of rate constants of a network that generate complex-balanced systems form a variety called the \emph{toric locus}~\cite{craciun2020structure}, and its codimension is determined by the deficiency of the reaction network~\cite{CraciunDickensteinShiuSturmfels2009}.

A generalization of the toric locus concerns the set of rate constants that yield the same dynamics as complex-balanced systems. When the rate constants can take both positive and negative values, this set is referred to as the \emph{$\mathbb{R}$-disguised toric locus}; if only positive values are allowed, it is called the \emph{disguised toric locus} \cite{2022disguised}.
In~\cite{disg_1}, it was demonstrated that both the disguised toric locus and the $\mathbb{R}$-disguised toric locus are path-connected. A recent study~\cite{disg_2} established a lower bound for the dimensions of both loci.

The key contribution of this paper is to estimate the exact dimensions of both the disguised and $\mathbb{R}$-disguised toric loci through the construction of an explicit homomorphism. Specifically, Theorem~\ref{thm:dim_kisg_main} provides these dimensions as follows:
{\small
\begin{equation} \notag
\begin{split}
&\dim (\dK(G) )
= \max_{G'\sqsubseteq G_c} 
\Big\{ \dim (\mJ(G',G)) + \dim (\mS_{G'})  + \dim(\eJ(G')) - \dim(\mD(G)) 
\Big\},
\\& \dim (\pK(G) )
= \max_{ \substack{ G'\sqsubseteq G_c, \\  \pK(G, G') \neq \emptyset } } 
\Big\{ \dim (\mJ(G',G)) + \dim (\mS_{G'})  + \dim(\eJ(G')) - \dim(\mD(G)) 
\Big\}.
\end{split}
\end{equation}
}
This characterization allows us to identify regions of parameter space that produce robust dynamics, thus refining the modeling and prediction of system responses.

The results outlined above generate several new directions for future research. One direction involves the efficient computation of the dimensions of both the $\mathbb{R}$-disguised toric locus and the disguised toric locus for a given network. Among the various quantities on the right-hand side of the equations, the most challenging to characterize is $\dim (\mJ(G', G))$, which we specifically aim to address in our upcoming work~\cite{disg_4}. Specifically, computing $\dim (\mJ(G', G))$ is equivalent to solving a linear feasibility problem that incorporates both dynamical equivalence and complex-balancing constraints.

Another promising direction is to identify the conditions under which the homeomorphisms constructed in this paper can be shown to be diffeomorphisms. A recent work~\cite{smoothness} has shown that the toric locus of a weakly reversible network is a smooth manifold.
Consequently, we aim to establish the conditions on the network that ensure the disguised toric locus is also a smooth manifold, with the corresponding steady states varying continuously along this locus.

\section*{Acknowledgements}

This work was supported in part by the National Science Foundation grant DMS-2051568.

\bibliographystyle{unsrt}
\bibliography{Bibliography}

\end{document}